\def\thickhline{\noalign{\hrule height 1.1pt}}
\title{How fast can you find a good hypothesis?\thanks{Accepted for presentation at the Conference on Learning Theory (COLT) 2026.}}
\author{
Anders Aamand  \\
BARC, University of Copenhagen \\
\texttt{aa@di.ku.dk}
\and
Maryam Aliakbarpour\thanks{Department of Computer Science and
Ken Kennedy Institute}
\\ Rice University 
\\ \texttt{maryama@rice.edu}
\and\and
Justin Y.\ Chen \\
MIT \\
\texttt{justc@mit.edu}
\and
Sandeep Silwal \\
University of Wisconsin-Madison\\
\texttt{silwal@cs.wisc.edu}
}
\date{}
\begin{document}

\maketitle

\begin{abstract}
In the hypothesis selection problem, we are given sample and query access to finite set of candidate distributions (hypotheses), $\mathcal{H} = \{H_1, \ldots, H_n\}$, and samples from an unknown distribution $P$, both over a (possibly infinite) domain $\mathcal{X}$. The goal is to output a distribution $Q$ whose distance to $P$ is comparable to that of the nearest hypothesis in $\mathcal{H}$. Specifically, if the minimum distance is $\mathsf{OPT}$, we aim to output $Q$ such that, with probability at least $1-\delta$, its total variation distance to $P$ is at most $C \cdot \mathsf{OPT} + \varepsilon$. 

The statistical complexity of this problem is essentially solved: the optimal approximation for proper algorithms (where $Q \in \mathcal{H}$) is $C=3$ using $\Theta(\log(n/\delta)/\varepsilon^2)$ samples from $P$ and for improper algorithms (where $Q$ is not necessarily in $\mathcal{H}$) is $C=2$ using $\tilde{\Theta}(\log(n/\delta)/\varepsilon^2)$ samples from $P$. However, much remains unknown about the computational landscape of algorithms achieving these statistically optimal results.

In the improper setting, the algorithm achieving $C=2$ [Bousquet, Braverman, Kol, Efremenko, Moran, FOCS 2021] runs in time which grows polynomially with $|\mathcal{X}|$---it does not run in finite time for real-valued distributions. A promising path towards improved runtime is to consider improper algorithms which output a mixture $Q$ of the hypotheses $\mathcal{H}$ as such a distribution can be represented in $n$ words of memory. Whether mixtures can achieve $C < 3$ was posed as an open question in several prior works. We fully resolve the mixture setting by showing (1) a lower bound that no algorithm which outputs a mixture can achieve approximation better than $C = 3-2/n$ unless the number of samples is polynomial in $|\mathcal{X}|$, as well as (2) an algorithm which runs in time $\text{poly}(n)$ and achieves the same approximation guarantee.

In the proper setting, the previous state-of-the art result [Aliakbarpour, Bun, Smith, NeurIPS 2024] provided a near-linear in $n$ time algorithm with no dependence on $|\mathcal{X}|$, but it has sub-optimal dependence on the other parameters, running in $\tilde{O}(n/(\delta^3\varepsilon^3))$ time. We improve this time complexity to $\tilde{O}(n/(\delta \varepsilon^2))$, significantly reducing the dependence on the confidence and error parameters. We also show statistically optimal algorithms in two related settings where we achieve subquadratic runtime and high success probability: Assuming the numerical value of \textit{$\mathsf{OPT}$ is known} in advance, we present such an algorithm which runs in time $O(n\log^3(n/\delta)/\varepsilon^2)$. Allowing a \textit{preprocessing} step on the hypothesis class $\mathcal{H}$ before observing samples from $P$, we give such an algorithm with polynomial preprocessing time and query runtime of $\tilde{O}(n^{2-\Omega(\varepsilon)}/\varepsilon^3)$. 
\end{abstract}

\thispagestyle{empty}
\newpage
\thispagestyle{empty}
\renewcommand{\baselinestretch}{0.5}\normalsize
{\footnotesize \tableofcontents}
\renewcommand{\baselinestretch}{1.0}\normalsize
\thispagestyle{empty}
\newpage
\setcounter{page}{1}

\section{Introduction}

In the hypothesis selection problem, there are $n$ known distributions (hypotheses) $\HH = (H_1, \ldots, H_n)$ over a domain $\calX$ from which we can draw samples and query the probability density (or mass) function.
There is an unknown (true) distribution $P$ also over $\calX$ from which we can only draw samples.
The goal is to output a distribution $Q$ approximating $P$ in total variation distance which is competitive with the approximation of the best hypothesis in the set $\HH$.
Letting $\OPT = \min_{i \in [n]} \dtv(P, H_i)$, the output distribution $Q$ should satisfy
\begin{equation}\label{eq:optimal-approx}
    \dtv(P, Q) \leq C \cdot \OPT + \eps,
\end{equation}
with probability $1-\delta$ for some approximation factor $C \ge 1$, error parameter $\eps > 0$, and confidence parameter $\delta \in (0,1)$.
We will focus on the agnostic setting where $P$ is not necessarily in $\HH$ and will study both the \emph{proper} setting where $Q \in \HH$ and the \emph{improper} setting where $Q$ may not be one of the hypotheses in $\HH$.

Hypothesis selection, also known as density estimation, is a central problem in statistical estimation and a fundamental primitive in learning theory.
The proper case corresponds to the scenario where we are trying to identify, from statistical observations, the best explanation of the samples among a set of candidate distributions. In the improper case, hypothesis selection corresponds to distribution learning over $\mathcal{X}$ with respect to a finite but otherwise unstructured hypothesis class, namely the set $\HH$.
This contrasts with (a) direct distribution learning up to small total variation distance where satisfying $\dtv(P, Q) \leq \eps$ requires $\Omega(|\calX|/\eps^2)$ samples and is thus intractable for distributions over real values or large discrete domains, and (b) structured distribution learning such as Gaussian density estimation where the hypothesis class corresponds to a specific parameteric model.
Due to its generality, hypothesis selection has widespread applications as a subroutine in agnostic learning of a parametric class of distributions such as mixtures of Gaussians \cite{suresh2014near, daskalakis2014faster, ashtiani2018sample, ashtiani2020near}, Poisson binomials~\cite{daskalakis2012learning}, and junta distributions~\cite{AliakbarpourBR16}.

Hypothesis selection admits a surprising statistical phenomena: with only $O\p{\log(n/\delta)/\eps^2}$ samples from $P$, a hypothesis $H_i \in \HH$ can be found with error $\eps$, confidence $\delta$, and approximation factor $C=3$~\cite{yatracos1985rates, devroye1996universally, devroye1997nonasymptotic, devroye2001combinatorial}.
Thus, the problem can be solved with a sample size which scales logarithmically with the number of hypotheses and \emph{not at all} with the size of the underlying domain $\calX$.
Furthermore, this is tight as at least $\Omega\p{\log (n)/\eps^2}$ samples are necessary for any approximation,\footnote{This follows from a folklore result in distribution learning, see \Cref{thm:C-lower-bound}.} and for proper algorithms, $C=3$ is the optimal approximation factor achievable with domain-independent sample size (as in, the number of samples does not grow with $|\calX|$).\footnote{We identify a subtle bug in the prior proof that $C=3$ is tight for proper algorithms given in \cite{bousquet2019optimal}. In this paper, we generalize the proof and correct it. We have contacted the authors, and they have updated their manuscript.}
A recent breakthrough showed that a better approximation of $C=2$ is achievable for improper algorithms with essentially the same number of samples~\cite{bousquet2019optimal, bousquet2022statistically}.
It is also known that this is tight as prior work which studied the specific setting where $\HH$ is the set of piecewise constant distributions showed that $C<2$ is impossible to achieve with domain-independent sample size~\cite{ChanDSS14}.

While the statistical aspects of this problem are essentially solved, there are significant gaps in our knowledge of best results achievable by efficient algorithms. In recent years, the interplay between the sample complexity, approximation factor, and runtime\footnote{In addition to standard operations, we consider taking a single sample from $P$ or any $H_i$ as well as a single query to the pdf of any $H_i$ to take $O(1)$ time.} (among other computational quantities) of algorithms for hypothesis selection has been an active area of study~\cite{devroye2001combinatorial, mahalanabis2007density, ChanDSS14, acharya2014sorting, AcharyaD0S17, acharya2018maximum, bousquet2019optimal, bun2019private, gopi2020locally, bousquet2022statistically, aamand2023data, aliakbarpour2023hypothesis, aamand2024statistical, aliakbarpour2024optimal, Aliakbarpour2025, kamath2025local}.
Our work is motivated by two central questions regarding the computational efficiency of algorithms achieving the best possible sample size and approximation factor.

\paragraph{Improper, Finite Time}
The only known improper algorithms achieving $C=2$ with domain-independent samples run in time $\poly(n, |\mathcal{X}|)$~\cite{bousquet2019optimal, bousquet2022statistically}. 
Thus for real-valued distributions, while finite samples suffice for hypothesis selection,  there is no known \emph{finite time} algorithm which achieves an approximation better than $C = 3$ (this is achieved by proper algorithms).
We consider the following question.
\begin{boxquestion}\label{q:improper}
    Does there exist an improper algorithm which takes $O\p{\log(n/\delta)/\eps^2}$ samples from $P$, has approximation ratio $C < 3$, and runs in time which does not grow with $|\calX|$?
\end{boxquestion}

\paragraph{Proper, Near-Linear Time}
In the proper setting, algorithms achieving $C=3$ in $\poly(n)$ time (with no dependence on $\calX$) have been known for decades~\cite{yatracos1985rates, devroye2001combinatorial}. A natural target runtime is linear, or near-linear, in the number of hypotheses $n$ as $\Omega(n)$ time is trivially required to examine each hypothesis. A recent line of work has culminated in the state-of-the-art runtime of $\tilde{O}\p{n/(\delta^3 \eps^3)}$~\cite{aliakbarpour2024optimal}.\footnote{We use the notation $\tilde{O}(f(n))$ to suppress $\polylog(f(n))$ factors.} While near-optimal in terms of the dependence on $n$ for constant failure probability, this algorithm quickly degenerates when $\delta$ is small and has a seemingly extraneous factor of $1/\eps$. Intriguingly, standard boosting techniques (to transform a constant success probability algorithm to one that succeeds with high probability) fail in the domain-independent sample regime---we later expound upon this \textbf{high probability conundrum}.
This leads to our second guiding question. 
\begin{boxquestion}\label{q:proper}
    Does there exist a proper algorithm which takes $O\p{\log(n/\delta)/\eps^2}$ samples from $P$, has approximation ratio $C = 3$, and runs in time $\tilde{O}\p{n\log(1/\delta)/\eps^2}$?
\end{boxquestion}

We make significant progress on both directions, solving open problems raised in prior work. In their full generality, these problems remain open, and our work in this paper highlights potential avenues (or dead ends) to fully resolving them.

\subsection{Our Contributions}

\Cref{tbl:results} summarizes algorithmic results for hypothesis selection in various settings including our upper bounds.
\begin{table*}[ht]
\centering
\small
{\renewcommand{\arraystretch}{1.5}
\begin{tabular}{c|c|c|c|c}
\textbf{Algorithm}        & \textbf{Approx.} & \textbf{Runtime}                                 & \textbf{\begin{tabular}[c]{@{}c@{}}Failure \\ Prob.\end{tabular}} & \textbf{Note}         \\ \thickhline
\makecell{Cutting-with-margin \\ \cite{bousquet2022statistically}} & $2$ & $\poly(|\mathcal{X}|, n)$ & $1/\poly(n)$ & Improper \\
\Cref{thm:ev_ub} (\textbf{ours}) & $3-\frac{2}{n}$ & $O\p{\frac{n^2\log n}{\eps^2}}$ & $1/\poly(n)$ & Expected Apx
\\ \thickhline
\makecell{Min distance \\ \cite{yatracos1985rates}} & 3                & $O\p{\frac{n^3 \log n}{\eps^2}}$                                    & $1/\text{poly}(n)$           &                       \\
\makecell{Min loss-weight \\\cite{mahalanabis2007density}} & 3 & $O\left( \frac{n^2 \log n}{\eps^2} \right)$ & $1/\text{poly}(n)$ \\
\makecell{Approx min distance \\\cite{aliakbarpour2024optimal}}                   & 3                & $\tilde{O}\left(\frac{n}{\eps^3 \delta^3}\right)$   
                                              & $\delta$                     &                       \\
\Cref{thm:fast} (\textbf{ours})           & 3                & $\tilde{O}\p{\frac{n}{\eps^2\delta}}$                                                & $\delta$                     &                       \\ \thickhline
\makecell{Ladder \\\cite{aliakbarpour2023hypothesis}}                 & 3                & $O\left( \frac{n \log( n/\delta)}{\eps^2 \delta} \right)$                                             & $\delta$                     & Known OPT\\
\Cref{thm:knownOPT} (\textbf{ours})           &    3 & $O\p{\frac{n\log^3(n/\delta)}{\eps^2}}$                         & $\delta$                     & Known OPT\\ \thickhline 
\Cref{cor:preprocessing} (\textbf{ours}) & 3 & $\tilde{O}\p{\frac{n^{2-\Omega(\eps)} }{\eps^3}}$ & $1/\poly(n)$ &  Preprocessing
\\ \thickhline
\makecell{Seed \\ \cite{aliakbarpour2024optimal}} & 4 & $\tilde{O}\left(\frac{n\log(n/\delta)\log(1/\delta)}{\eps^2}\right)$ & $\delta$ 
\\ \makecell{Modified Knockout \\ \cite{aamand2023data}}        & 9                & $O\left( \frac{n \log(1/\delta)}{\eps^2} \right)$ & $\delta$                     & $\delta \ge n^{-1/5}$ 
\end{tabular}
}
\caption{Relevant algorithms for hypothesis selection in various settings. Algorithms are proper unless otherwise specified. All algorithms use $O(\log(n/\delta)/\eps^2)$ samples or $O(\log (n)/\eps^2)$ samples if $\delta$ is not specified other than that of \cite{bousquet2022statistically} which contains an additional $\log\log n$ factor.}
\label{tbl:results}
\end{table*}
 
\subsubsection{(Improper) Optimal Approximation of Mixture Distributions}

The improper algorithms achieving $C=2$ rely on finding a distribution $Q$ over $\calX$ that satisfies certain distance constraints to the hypotheses $\HH$~\cite{bousquet2019optimal, bousquet2022statistically}.
Their runtimes depend polynomially on $|\calX|$ and even require $\Omega(|\calX|)$ bits to specify the output distribution $Q$.

On the other hand, a curious result is known for the special case where $n=2$ (there are only two hypotheses in $\HH$).
In this setting, there exists an algorithm which outputs a distribution $Q$ which is a mixture of the two hypothesis in $\HH$ which achieves an approximation of $C=2$~\cite{mahalanabis2007density}.
The algorithm runs in time $O(\log(1/\delta)/\eps^2)$ and describing the distribution only requires two numbers corresponding to the mixture weights on the two hypotheses, even if the underlying domain is infinite.
As we have sample and pdf query access to the hypotheses, the mixture weights are sufficient for the same accesses to the output distribution.

In fact, \cite{mahalanabis2007density} give a slightly stronger result by considering the \emph{expected} approximation ratio of randomized proper algorithms.
Let $Q$ be a random variable for the hypothesis output by such an algorithm, then the guarantee is of the form:
\begin{equation}
    \E[Q]{\dtv(P, Q)} \leq C \cdot \OPT + \eps.
\end{equation}
When $n=2$, \cite{mahalanabis2007density} give a proper algorithm achieving expected approximation ratio $C=2$.
This immediately yields an improper algorithm with at least as good of an approximation by the triangle inequality by outputting the mixture of the hypotheses whose weights correspond to the output probabilities of the randomized algorithm.
Generalizing to larger hypothesis classes with $n > 2$ remained an open question posed in \cite{mahalanabis2007density, bousquet2019optimal}.
\begin{quote}
    \emph{
    For a set $\HH$ of $n$ hypotheses for arbitrary $n > 2$, can we output a distribution $Q$ in the convex hull of $\HH$ which achieves approximation factor $C < 3$?
    }
\end{quote}
Answering this question affirmatively would be a direct and promising way to answer \Cref{q:improper}.
Mixture distributions are the most natural improper class that can be sampled from in domain-independent time (via sample access to the hypotheses in $\HH$).



We resolve this open question and provide a complete understanding of the optimal approximation factor for this expected approximation/improper mixture setting.
In these settings, the best possible approximation is $C=3-2/n$, and we give an algorithm which runs in $\poly(n)$ time and uses the optimal sample complexity which achieves this bound.
When $n=2$, we recover the result of \cite{mahalanabis2007density}, but for general $n$, this implies that there is no absolute constant approximation $C < 3$ achievable by an algorithm which outputs a mixture.
Our lower bound holds against algorithms which output a mixture of the hypotheses while the upper bound is a randomized proper algorithm, demonstrating that while \emph{a priori} improper mixtures could improve upon the expected approximation ratio of proper algorithms, they cannot in the worst-case.

To prove the lower bound, we construct a set of hypotheses and family of hard true distributions such that all but the optimal hypothesis are $3$-approximations.
The resulting approximation factor of $3-2/n = \frac{3(n-1) + 1}{n}$ essentially comes from showing that no algorithm with domain-independent sample size can do better than outputting a uniform mixture of the hypotheses.
Along the way, we identify and correct a subtle error in an existing proof of the weaker result that no algorithm with domain-independent sample size can distinguish between the optimal hypothesis and a single other $3$-approximation \cite{bousquet2019optimal}, thereby giving the first correct proof that $C=3$ is the best achievable approximation by proper algorithms (see \Cref{rem:lb-bug} for details).

\begin{theorem}[Informal version of \Cref{cor:proper-lb,cor:exp-lower-bound}]
Consider a sample size $s$ which is a function of $n, \eps, \delta$.
There exists a sufficiently large, finite domain size $|\XX|$ such that no randomized algorithm with $s$ samples can output a convex combination of the hypotheses with expected approximation factor less than $3 - \frac{2}{n} - o\left(\frac{1}{n}\right)$.
Furthermore, no proper algorithm with $s$ samples can output a hypothesis with approximation less than $3$ with probability more than $\frac{1}{n}\p{1 + o(1)}$.
\end{theorem}


Our upper bound generalizes the algorithm of \cite{mahalanabis2007density} to more than two distributions and matches the $3-2/n$ bound.

\begin{restatable}{theorem}{thmEVUB}
\label{thm:ev_ub}
There exists an algorithm that takes $s = O(\log(n)/\eps^2)$ samples from $P$, runs in $\tilde{O}(n^2/\eps^2)$ time, and, with high probability in $n$, outputs an explicit distribution $\calD$ over $\HH$ such that:
\[
\E[H \sim \calD]{\dtv(H, P)} \leq \left(3-\frac{2}{n}\right)\OPT + \eps.
\]
\end{restatable}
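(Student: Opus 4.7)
The plan is to use the Scheff\'{e}/Yatracos tournament together with a carefully constructed output distribution. The target factor $3 - \frac{2}{n} = \frac{1 + 3(n-1)}{n}$ is exactly what one obtains from averaging one term equal to $\OPT$ with $n-1$ terms bounded by $3\,\OPT + \eps$, which suggests outputting a uniform distribution over $n$ ``representatives,'' exactly one of which coincides with $H_{i^*}$. First, I would draw $s = O(\log n/\eps^2)$ samples from $P$ and, for every pair $(i,j)$, estimate $\hat P(S_{ij})$ (where $S_{ij} = \{x : H_i(x) > H_j(x)\}$) to additive error $\gamma = \eps/O(1)$; Hoeffding with a union bound over the $\binom{n}{2}$ Scheff\'{e} sets handles this with probability $1 - 1/n^{\Omega(1)}$, and collecting the estimates dominates the runtime at $\tilde O(n^2/\eps^2)$. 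Then compute the Yatracos scores $T_i := \max_j |\hat P(S_{ij}) - H_i(S_{ij})|$ and let $i_0 := \arg\min_i T_i$; the standard analysis gives $d_{i_0} := \dtv(P,H_{i_0}) \le 3\,\OPT + \eps$, a refined bound gives $d_{i_0} \le 2\,\OPT + T_{i_0} + \gamma$, and crucially any $H_i$ with $T_i \le T_{i^*}$ satisfies $d_i \le 3\,\OPT + \eps$ (since $T_{i^*} \le \OPT + \gamma$).

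The key step is constructing $\calD$. For each $i \in [n]$, a data-driven ``plausibility test'' either accepts $H_i$ (setting representative $c_i := i$) or rejects it (setting $c_i := i_0$), and I would output the explicit distribution $\calD(H_j) := |\{i \in [n] : c_i = j\}|/n$. If the test is designed so that (i) $i^*$ is always accepted and (ii) every accepted $i$ satisfies $d_i \le 3\,\OPT + \eps$, then
\[
\E_{H \sim \calD}[\dtv(H,P)] \;=\; \frac{1}{n}\sum_{i=1}^n d_{c_i} \;\le\; \frac{\OPT + (n-1)(3\,\OPT + \eps)}{n} \;=\; \left(3 - \frac{2}{n}\right)\OPT + \left(1 - \frac{1}{n}\right)\eps,
\]
which is the bound claimed by the theorem.

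The main obstacle is designing the plausibility test. Condition (ii) amounts to requiring $T_i \le \OPT + O(\gamma)$, while (i) requires the effective threshold to be at least $T_{i^*}$; since $\OPT$ is unknown and the gap $T_{i^*} - T_{i_0}$ may be as large as $\OPT$, a naive rule of the form ``$T_i \le T_{i_0} + O(\eps)$'' can exclude $i^*$. I expect a case split on $T_{i_0}$ to be useful: when $T_{i_0} \le \eps/2$ the refined Yatracos bound gives $d_{i_0} \le 2\,\OPT + \eps$, so outputting the point mass at $H_{i_0}$ already suffices (because $2\,\OPT + \eps \le (3 - 2/n)\,\OPT + \eps$ for $n \ge 2$); when $T_{i_0} > \eps/2$ one has $\OPT = \Omega(\eps)$, so the $+\eps$ error budget is comparable to $\OPT$ and a threshold calibrated against $T_{i_0}$ together with the pairwise Scheff\'{e} gaps should simultaneously admit $i^*$ and exclude hypotheses with $d_i \gg 3\,\OPT$. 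Formalizing the test---possibly by replacing ``uniform over representatives'' with a linear-programming relaxation that generalizes the Mahalanabis--Stefankovic $n=2$ randomization into a mixture-matching LP over the Yatracos class---appears to be the technically delicate step of the proof.
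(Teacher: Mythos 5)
The decomposition $3 - \frac{2}{n} = \frac{1 + 3(n-1)}{n}$ that you start from is a correct and natural observation, and the high-level accounting (one representative equal to $H_{i^*}$ contributing $\OPT$, the remaining $n-1$ each contributing at most $3\OPT + \eps$) does give the stated bound. But the load-bearing step---the ``plausibility test'' satisfying (i) $i^*$ is always accepted and (ii) every accepted $i$ has $\dtv(P,H_i) \le 3\OPT + \eps$---is not constructed, and I do not believe a threshold rule of the kind you sketch can satisfy both conditions. Condition (ii) forces the acceptance threshold on $W(H_i)$ (your $T_i$) to be at most $\OPT + O(\eps)$; condition (i) forces it to be at least $W(H_{i^*})$, which can be arbitrarily close to $\OPT$; so the threshold effectively needs to sit in a window of width $O(\eps)$ around $\OPT$. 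The only observable handles you have are the $W(H_i)$'s, and in particular $W(H_{i_0}) = \min_i W(H_i)$ gives a \emph{lower} bound $\OPT \gtrsim W(H_{i_0})$, but there is no matching \emph{upper} bound on $\OPT$ extractable from the semi-distances: the gap $W(H_{i^*}) - W(H_{i_0})$ can itself be $\Theta(\OPT)$. Your case split on $W(H_{i_0}) \le \eps/2$ correctly handles the easy regime (point mass at $H_{i_0}$ suffices), but in the complementary regime you only get $\OPT = \Omega(\eps)$, which is far from pinning $\OPT$ to within an additive $O(\eps)$, so the threshold still cannot be calibrated. The hint about ``a mixture-matching LP'' gestures in the right direction but does not resolve this.

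The paper proceeds quite differently and sidesteps the need to filter or identify $i^*$ entirely. Rather than a uniform distribution over a data-dependent set of representatives, it outputs a \emph{non-uniform} distribution over \emph{all} of $\HH$ with weights chosen so that the standard upper bound $\E{\dtv(P,H)} \le \OPT + \sum_{i \neq i^*} p_i\bigl(W(H_{i^*}) + W(H_i)\bigr)$ evaluates to the same quantity regardless of which index is $i^*$. Imposing this equality for all $n$ choices of $i^*$ yields the linear system in the paper's display~\eqref{eq:linear_system}, which has the closed-form solution $p_i \propto D/W(H_i) - (n-2)$ with $D = \sum_j W(H_j)$; plugging back in gives exactly $3 - 2/n$ after a Cauchy--Schwarz step. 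For $n=2$ this reduces to the Mahalanabis--\v{S}tefankovi\v{c} weights, so it is the natural generalization. The remaining technical work is that these $p_i$ can be negative for $n \ge 4$, which the paper handles by truncating to the largest prefix of hypotheses (sorted by $W(H_i)$) for which all weights stay non-negative, and verifying that the truncated weights still give $3 - 2/k \le 3 - 2/n$; finally a perturbation lemma transfers the argument from exact to empirically estimated $W(H_i)$. So the two approaches differ in kind: yours is a combinatorial ``accept/reject then uniformize'' scheme blocked by an unobservable quantity, whereas the paper's is an analytic weighting scheme that never needs to decide which hypotheses are ``good.''
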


Overall, our results in this setting, which resolve an open question in \cite{mahalanabis2007density, bousquet2019optimal}, yield mixed conclusions for the state of \Cref{q:improper}. For constant $n$, approximations less than $3$ are achieved by our algorithm which runs in domain-independent time. On the other hand, our lower bound shows that mixtures cannot achieve an absolute constant approximation better than $C=3$. Moving forward to answer \Cref{q:improper}, it is unclear even what candidate classes of output distributions beyond mixtures could allow for better approximation while having description length polynomial in $n$.

\subsubsection{(Proper) Fast Algorithms for the Moderate  Probability Regime}

The first proper algorithm for hypothesis selection which achieved $C=3$ has a runtime of $O(n^3\log(n/\delta)/\eps^2)$~\cite{yatracos1985rates, devroye2001combinatorial}.
The cubic dependence on $n$ is less than ideal and can make hypothesis selection a computational bottleneck in applications.
A significant body of recent work has focused on reducing the running time down to nearly-linear in $n$~\cite{mahalanabis2007density, daskalakis2014faster, acharya2014sorting, acharya2018maximum, aamand2023data, aliakbarpour2023hypothesis, aliakbarpour2024optimal}.
These attempts, however, often came at the cost of a worse approximation factor ($C \geq 4$) or required special assumptions. Recently, \cite{aliakbarpour2024optimal} broke this barrier, presenting an algorithm with the optimal approximation factor $C=3$, sample complexity of $O(\log(n/\delta)/\eps^2)$, and a nearly-linear total running time of $\tilde{O}\left( \frac{n}{\eps^3 \delta^3} \right)$.

Despite being linear in $n$, this complexity is sub-optimal with respect to $\delta$ and $\eps$. 
These results therefore only yield improvements over quadratic time in the constant or \emph{moderate} success probability regime where $1/\delta$ is significantly sublinear in $n$.
Interestingly, with the same number of samples, the dependence on $\delta$ and $\eps$ improves if we relax either the approximation factor $C$ or the linear-time requirement. On the one hand, the algorithm of~\cite{mahalanabis2007density} achieves the optimal $C=3$, but has a quadratic running time of $O(n^2\log(n/\delta)/\eps^2)$. On the other hand, there exist algorithms running in time $\tilde{O}(n\log(n/\delta)/\eps^2)$ or even $O(n\log(1/\delta)/\eps^2)$ if they aim for a larger approximation factor, such as $C = 4$ or $C=9$~\cite{aamand2023data,  aliakbarpour2024optimal}. 

The polynomial dependence on the failure probability $\delta$ is particularly unusual, as typical statistical tasks exhibit a milder $\log(1/\delta)$ dependency. This issue is present in several other algorithms that achieved $C < 9$ in near-linear time~\cite{aliakbarpour2023hypothesis, Aliakbarpour2025}. This sub-optimality stems from the failure of standard amplification techniques; for instance, repeating a process $O(\log(1/\delta))$ times that individually guarantees $C=3$ with constant probability and taking the ``best option'' among the returned hypotheses is itself a new hypothesis selection problem, so it only yields a combined guarantee of $C=9$ as the true best option can only be approximated up to a factor of $3$. Such amplification methods are ineffective for algorithms achieving the optimal approximation factor. 
This \textbf{high probability conundrum} stems from the fact that there is no way to validate the approximation quality of any hypothesis with domain-independent samples.
In fact, with $o(\XX/\log(\XX))$ samples from $P$, it is known that estimating the $\dtv(P, Q)$ for any fixed distribution $Q$ is impossible~\cite{valiant2017unseen}.
This phenomenon prompts a key question about improving the dependence on $\delta$ and $\eps$ for the $C=3$ case while maintaining near-linear dependence on $n$. This was posed as an open problem in~\cite{aliakbarpour2024optimal, Aliakbarpour2025} and is the key challenge in answering \Cref{q:proper}.

We develop an algorithm that significantly improves the running time of~\cite{aliakbarpour2024optimal} by a factor of $1/(\delta^2 \eps)$. Our main result is summarized in the following theorem.

\begin{restatable}{theorem}{thmFAST}
\label{thm:fast}
For any $\delta > 1/n$, there exists an algorithm for hypothesis selection that uses $O(\log(n/\delta)/\eps^2)$ samples, achieves the optimal approximation factor of $C=3$, and runs in time $\tilde{O}\p{\frac{n}{\eps^2 \delta}}$.
\end{restatable}

This result allows for sub-quadratic time hypothesis selection with optimal sample complexity and approximation factor for a much larger range of failure probabilities $\delta$ than prior work.
The failure probability dependence achieved by our algorithm is the best possible by algorithms following the high-level strategy of iteratively filtering out a $\delta$ fraction of ``bad'' hypotheses until either there are no remaining hypotheses or a $1-\delta$ fraction of the hypotheses are ``good''.
This filtering strategy, used by our algorithm and previously in~\cite{aliakbarpour2023hypothesis, aliakbarpour2024optimal}, requires $O(\log(n)/\delta)$ iterations.
It remains an extremely intriguing open question whether it is possible to get rid of the polynomial dependence on $1/\delta$ in the runtime while ensuring $C = 3$, which would resolve \Cref{q:proper}.
This would likely require new techniques to bypass this $1/\delta$ barrier.

\subsubsection{(Proper) Extra Information: Known Upper Bound on $\OPT$} Another relaxation of hypothesis selection studied in the literature involves fixing the \emph{quantity} $\OPT$. Interestingly, knowing an approximation to this numerical value, but not the underlying hypothesis, allows for faster algorithm or algorithms with better approximation factors~\cite{daskalakis2014faster, bun2019private, aliakbarpour2023hypothesis}. 
The primary advantage of such algorithms is that knowing $\OPT$ enables us to better filter the sub-optimal hypotheses. If we find evidence that a hypothesis is significantly farther from the true distribution than $\OPT$, we can remove it from consideration, a step that is generally not possible for standard algorithms. 
In particular, 
\cite{aliakbarpour2023hypothesis} gave an algorithm that achieves approximation factor $C = 3$ in $\tilde{O}(n/(\delta\eps^2))$ time. 
Naturally, we ask the following analogue of \Cref{q:proper}:
\begin{quote}
\emph{
Is there an algorithm which knows the value of $\OPT$ and achieves $C = 3$ using $s = O(\log (n/\delta)/\eps^2)$ samples and in runtime $\tilde{O}\p{n\log(1/\delta)/\eps^2}$?}
\end{quote}
We answer this question affirmatively up to logarithmic factors in $n$ and $1/\delta$: the sample complexity is optimal, and when $\delta=1/\poly(n)$, the running time is within $\polylog(n)$ factors of $n/\eps^2$. In particular, we have the following theorem.

\begin{restatable}{theorem}{thmKnownOPT}\label{thm:knownOPT}
        Suppose the algorithm is given $R \in [0,1]$ as part of the input with the guarantee that $R\geq\OPT$. Then, there exists an algorithm with sample complexity $O\p{\frac{\log (n/\delta)}{\eps^2}}$ and runtime $O\p{\frac{n\log^3 (n/\delta)}{\eps^2}}$, which with probability at least $1-\delta$ returns a hypothesis $H_i$ such that $\dtv(P, H_i) \leq 2 \cdot \OPT + R   + \eps$.
\end{restatable}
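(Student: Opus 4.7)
The plan is to augment the near-linear-time framework of \cite{aliakbarpour2023hypothesis} (whose polynomial $1/\delta$ dependence comes from a constant-probability subroutine) with a logarithmic-cost amplification, producing an algorithm whose sample complexity, approximation factor, and runtime are all simultaneously optimal up to polylogarithmic factors. The key idea, enabled by knowing $\OPT$, is to upgrade the usual pairwise Scheffé comparison into an \emph{absolute} $\OPT$-consistency test. I would first draw $s = \Theta(\log(n/\delta)/\eps^2)$ i.i.d.\ samples from $P$ and form the empirical distribution $\hat P$; by Hoeffding and a union bound, with probability $1 - \delta/2$ every Scheffé set $S$ the algorithm ever queries satisfies $|\hat P(S) - P(S)| \leq \eps/4$. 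Conditioned on this good event, I would declare $H_j$ to \emph{pass} at the Scheffé set $S_{ij} = \{x : H_i(x) > H_j(x)\}$ if $|\hat P(S_{ij}) - H_j(S_{ij})| \leq \OPT + \eps/2$. Two observations then drive the approximation guarantee: (i) the optimum $H_{i^*}$ passes at every Scheffé set, because $|P(S) - H_{i^*}(S)| \leq \OPT$; and (ii) any hypothesis $H_k$ passing at $S_{i^*, k}$ satisfies $\dtv(P, H_k) \leq 3 \OPT + O(\eps)$, by observing that on this particular set $H_{i^*}(S_{i^*,k}) - H_k(S_{i^*,k}) = \dtv(H_{i^*}, H_k)$ and then applying the triangle inequality through $H_{i^*}$.

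I would next organize the $n$ hypotheses into a complete binary tournament tree of depth $O(\log n)$. At each internal node, run a modified Scheffé match on the two child candidates: eliminate any candidate that fails the $\OPT$-consistency test at the match's Scheffé set, and break ties between two passing candidates via the standard Scheffé rule. Since $H_{i^*}$ always passes, it is never eliminated by failure; when it loses a Scheffé tie-break to another passing candidate $H_j$, a short calculation gives $\dtv(P, H_j) \leq 3\OPT + O(\eps)$. To drive the per-match failure probability down to $\delta/n$, each match is repeated $O(\log(n/\delta))$ times with independent randomization over which subsample of the shared pool is used, and the winner is decided by majority vote. This gives $O(n \log(n/\delta))$ matches in total; each match costs $O(s) = O(\log(n/\delta)/\eps^2)$ time to read off $\hat P(S_{ij})$ and compute the two hypothesis measures on $S_{ij}$, yielding the claimed runtime $O(n \log^3(n/\delta)/\eps^2)$.

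The main obstacle is controlling the \emph{compounding} of the approximation factor across the $\log n$ tournament rounds: the standard Scheffé rule only guarantees $\dtv(P, \text{winner}) \leq 3 \cdot \dtv(P, \text{loser}) + O(\eps)$, which iterated naively blows up to $3^{\log n} = n^{\log 3} \gg 3$. To prevent this, the proof must establish an inductive invariant that every surviving candidate at every level is $\OPT$-consistent against $H_{i^*}$ on some Scheffé set it has played (equivalently, sits within $3\OPT + O(\eps)$ of $P$), leveraging the fact that any hypothesis whose true distance exceeds $3\OPT + O(\eps)$ eventually fails an $\OPT$-consistency test and is filtered out. A secondary challenge is making the amplification compatible with a single shared sample pool: the $O(\log(n/\delta))$ repetitions per match cannot each draw fresh samples without inflating $s$, so the independence across repetitions has to come from randomization in the tournament's pairing and in the subsample choice rather than in the data itself, with all deviation bounds unioned across the $\tilde O(n)$ Scheffé sets actually queried.
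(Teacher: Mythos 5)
The proposal has a fundamental gap that cannot be closed within the tournament framework, and the actual paper takes a structurally different route.

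The tournament plan rests on the "leveraged fact" that ``any hypothesis whose true distance exceeds $3\OPT+O(\eps)$ eventually fails an $\OPT$-consistency test and is filtered out.'' This is false. By \cref{prop:semidist-underest}, $\semi{i}{k}\le \dtv(P,H_k)$ for \emph{every} $i$, but the semi-distance can be far smaller than $\dtv(P,H_k)$ for most $i$. The only test guaranteed to reject a bad $H_k$ is the one against $i^*$: from \cref{prop:semidist-approx} one gets $\semi{i^*}{k}\ge \dtv(P,H_k)-2\OPT>\OPT$, but for any $i\neq i^*$ it is entirely possible that $\semi{i}{k}<\OPT$. In a binary tournament of depth $O(\log n)$, $H_{i^*}$ plays only $O(\log n)$ matches, so all but $O(\log n)$ hypotheses are \emph{never} tested on the one Scheffé set that could reject them. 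You concede that the compounding factor $3^{\log n}$ must be averted by an inductive invariant that every survivor is $\OPT$-consistent against $H_{i^*}$, but that invariant is exactly what cannot be verified locally in a tournament: a match between two non-optimal $H_a,H_b$ produces no information about either one's distance to $H_{i^*}$. This is not a technical obstacle to be patched; it is the reason the Scheffé-tournament approach gives approximation factor worse than $3$ in prior work.

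The paper avoids the tournament entirely. It works with the ``kicking'' relation $\hsemi{i}{j}>\OPT+\eps_0$ and peels the candidate set geometrically. At round $k$, it searches $S_{k-1}$ for a $j$ for which at most a $2^{-(k+1)}$-fraction of the $i\in[n]$ kick $j$ (estimated by sampling $O(2^k\log(n/\delta))$ indices), sets $S_k$ to the actual kickers of that $j$, and records it as $j_{k-1}$. Two structural facts make this work. First, $i^*$ is kicked by nobody (since $W(H_{i^*})\le\OPT$), so it always qualifies as a candidate $j$, which guarantees the first round succeeds and that $|S_k|$ halves. Second, if no qualifying $j$ exists inside $S_{k-1}$, then in particular $i^*\notin S_{k-1}$; but $S_{k-1}$ is precisely the set of kickers of $j_{k-2}$, so $\hsemi{i^*}{j_{k-2}}\le\OPT+\eps_0$, and $j_{k-2}$ is a $3\OPT$-solution by \cref{prop:semidist-approx}. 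The approximation guarantee is always tied directly to $i^*$; there is no chain of intermediate comparisons to compound. The geometric shrinkage of $S_k$ also gives the runtime: at round $k$, each of the $\le n/2^{k-1}$ candidates needs $O(2^k\log(n/\delta))$ semi-distance estimates at cost $O(\log(n/\delta)/\eps^2)$ each, so every round costs $O(n\log^2(n/\delta)/\eps^2)$, and there are $O(\log n)$ rounds. If you want to salvage your direction, you would need a comparison structure in which every hypothesis is, directly or transitively, measured against $i^*$ at cost $\tilde O(1)$ per hypothesis; the peeling sets $S_k$ are precisely the device that achieves this implicitly, and I do not see how to reproduce it in a tournament tree.
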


Note that the theorem only requires as input a value $R\geq \OPT$, but the approximation factor grows with $R$. In particular, with an estimate $\widetilde{OPT}$ such that $\OPT\leq \widetilde{OPT}\leq \alpha\cdot \OPT$, the algorithm is a $(2+\alpha)$-approximation (up to the additive $\eps$).

We note that \Cref{thm:knownOPT} is incomparable to \Cref{thm:fast}. While \Cref{thm:knownOPT} has logarithmic dependence on $1/\delta$, it uses knowledge of the value of $\OPT$ which cannot be easily estimated for the same reason as the aforementioned high probability conundrum. 
Indeed, we note that the sample complexity of learning even an estimate $\widetilde{\OPT}$ such that $\OPT\leq \widetilde{\OPT} \leq \alpha\cdot \OPT+\eps$ depends polynomially on the domain size. If $U$ is uniform over $\mathcal{X}$, it is known that the sample complexity of distinguishing between whether we receive samples from $U$ or a distribution $P$ with $d_{TV}(P,U)\geq 2\eps$ is $\Theta(\sqrt{|\mathcal{X}|}/\eps^2)$~\cite{paninski2008coincidence,diakonikolas2014testing,valiant2017automatic}. If $\mathcal{H}=\{U\}$, the desired estimate $\widetilde{\OPT}$ has $\widetilde{\OPT}\leq \eps$ if we receive samples from $U$, and $\widetilde{\OPT}\geq 2\eps$ if we receive samples from $P$ and would thus let distinguish between the two cases.

\subsubsection{(Proper) Impact of Preprocessing} An important extension to the standard hypothesis selection problem is the \textit{pre-processing setting}, which is motivated by applications where the selection task is performed repeatedly for a fixed hypothesis set $\mathcal{H}$ but against different unknown distributions $P$. In this model, an algorithm first analyzes $\mathcal{H}$ before receiving any samples and can spend time to \emph{preprocess} $\mathcal{H}$. This approach has been studied in several works~\cite{mahalanabis2007density, aamand2023data, aamand2024statistical}.
    
First, note that exponential preprocessing can trivialize the problem on a finite domain: for $s=O(\log(n/\delta)/\eps^2)$ samples, one can precompute the output under all possible length-$s$ sample sequences using $|\calX|^{O(\log(n/\delta)/\eps^2)}$ preprocessing time and then perform table lookup at query time. In the constant-$\eps$, inverse-polynomial failure probability regime, this is $|\calX|^{O(\log n)}$ preprocessing. \cite{mahalanabis2007density} showed that $2^{O(n)}$ time preprocessing can be used to pre-compute all possible computational paths for linear in $n$ query time. The exponential time complexity of these preprocessing procedures makes them prohibitively expensive. A natural question, first raised in~\cite{mahalanabis2007density}, is to what extent a polynomial-time preprocessing step can improve overall performance. In particular, we ask:

    \begin{quote}
        \emph{How fast can we obtain $C = 3$ using $s = O(\log n/\eps^2)$ samples if we are only allowed polynomial preprocessing time? }
    \end{quote}

We show that preprocessing can indeed be useful, allowing us to break the quadratic-time barrier in $n$ even in the high-probability regime where $\delta$ is an arbitrarily small inverse polynomial in $n$. We have the following theorem.

\begin{restatable}{theorem}{corPreProcessing}
    \label{cor:preprocessing}
    For every $\ \eps \in (0, 1)$, there exists an algorithm which uses $\textup{poly}(|\calX|, n)$ preprocessing time (without knowledge of $P$), $O(\log(n)/\eps^2)$ samples from $P$, and outputs a hypothesis $H_i$ satisfying 
    \[
    \dtv(P, H_i) \leq 3 \cdot \OPT  + O(\eps)
\]
with probability $1 - 1/\textup{poly}(n)$ in time $\tilde{O}\left(\frac{ n^{2-\Omega(\eps)}}{\eps^3} \right)$.
\end{restatable}

In contrast to our other results, our preprocessing result only applies when $\XX$ is finite though the sample size and query time do not depend on $|\calX|$.
We remark that progress was made in a different preprocessing setting~\cite{aamand2023data, aamand2024statistical}, where it is possible to obtain \emph{sublinear} runtime in $n$ if we further restrict the hypotheses to be discrete distributions, require that $P \in \HH$ (so that $\OPT$ is 0), and allow the sample size to grow (sublinearly) in $|\calX|$.

\section{Technical Overview}\label{sec:technical_overview}

\subsection{Preliminaries}\label{sec:prelim}

The key toolkit for designing algorithms with constant approximation factors in $s = O\p{\frac{\log(n/\delta)}{\eps^2}}$ samples are ``\Scheffe sets'' and ``semi-distances'' used extensively in prior works on this problem (e.g., ~\cite{devroye2001combinatorial,daskalakis2014faster,aliakbarpour2024optimal}).
We present the basic definitions and facts about these objects which will be used throughout this work.

\begin{definition}[\Scheffe set~\cite{devroye2001combinatorial}]\label{def:scheffe_set}
The \emph{\Scheffe set} for any pair $i \neq j \in [n]$ is 
\begin{equation}
    \Sij =
    \begin{cases}
        \crb{x \in \calX : H_i(x) < H_j(x)} & \qquad \text{ if } i < j \\
        \crb{x \in \calX : H_i(x) \leq H_j(x)} & \qquad \text{ if } i > j
    \end{cases}.
    \footnote{The difference between the two cases is arbitrary and exists for tiebreaking so that $\Sij \cup S_{j \rightarrow i} = \calX$.}
\end{equation}
It is straightforward to see that $\Sij$ is the set that witnesses the total variation distance of distributions $H_i$ and $H_j$:
\begin{equation}
    H_j(\Sij) - H_i(\Sij) = \sup_{S \subseteq \calX} H_j(S) - H_i(S) = \dtv(H_j, H_i).
\end{equation}
\end{definition}




\begin{definition}[Semi-distances~\cite{devroye2001combinatorial}]\label{def:semi_distances}
For any pair $i \neq j \in [n]$, we define the \emph{semi-distance} $\semi{i}{j} \in [0,1]$ as $ \semi{i}{j} = \abs{H_j(\Sij) - P(\Sij)}.$
We will let $\semi{i}{i} = 0$.
We define the \emph{max semi-distance} $W(H_j)$ as $W(H_j) = \max_{i \in [n]} \semi{i}{j}.$
\end{definition}
Intuitively, $\semi{i}{j}$ is a guess of the total variation distance between $H_j$ and $P$ via the \Scheffe set differentiating $H_i$ and $H_j$.
Note that $\semi{i}{j}$ and $\semi{j}{i}$ use essentially the same Scheff\'e set as $\Sij = \overline{S_{j \rightarrow i}}$ but are measuring the distance between $P$ and two different hypotheses $H_j$ and $H_i$, respectively.
We use the following key properties of semi-distances also utilized in prior works~\cite{devroye2001combinatorial,daskalakis2014faster,aliakbarpour2024optimal}. We defer these standard proofs to \Cref{sec:omitted_proofs}.

\begin{proposition}[Underestimation]\label{prop:semidist-underest}
For all $i \neq j \in [n]$, $\semi{i}{j} \leq \dtv(P, H_j)
.$ \end{proposition}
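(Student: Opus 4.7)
The plan is essentially immediate from the definitions, so the proof will be very short. First, I would recall the standard characterization of total variation distance as a supremum over measurable sets:
\[
    \dtv(P, H_j) = \sup_{S \subseteq \calX} \abs{H_j(S) - P(S)}.
\]
This is the equivalent ``maximum event discrepancy'' formulation of total variation distance (which in fact was already invoked implicitly in the \Scheffe set definition, where $H_j(\Sij) - H_i(\Sij) = \dtv(H_i, H_j)$).

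Next, I would simply note that $\Sij \subseteq \calX$ is one particular choice of measurable set. Therefore, by definition of $\semi{i}{j}$,
\[
    \semi{i}{j} \;=\; \abs{H_j(\Sij) - P(\Sij)} \;\leq\; \sup_{S \subseteq \calX} \abs{H_j(S) - P(S)} \;=\; \dtv(P, H_j),
\]
which is the claimed inequality. The case $i = j$ is handled separately by the convention $\semi{i}{i} = 0 \leq \dtv(P, H_j)$, though the statement restricts to $i \ne j$.

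There is essentially no obstacle here: the entire content of the proposition is that $\semi{i}{j}$ measures the $(P, H_j)$-discrepancy on a single specific set, while $\dtv(P, H_j)$ is the supremum of such discrepancies. The only thing worth double-checking is that the two sides use the same orientation of set and complement --- i.e., that replacing $\Sij$ with its complement would give the same absolute value $\abs{H_j(\Sij) - P(\Sij)} = \abs{H_j(\overline{\Sij}) - P(\overline{\Sij})}$, so the tiebreaking convention between $i < j$ and $i > j$ in the definition of $\Sij$ does not affect the bound.
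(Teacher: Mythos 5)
Your proof is correct and takes exactly the same route as the paper: $\semi{i}{j}$ is the $(P,H_j)$-discrepancy on the single set $\Sij$, while $\dtv(P,H_j)$ is the supremum of such discrepancies over all subsets of $\calX$. You simply spell out what the paper's two-sentence proof leaves implicit (and the added remark about complement-invariance is a harmless extra observation).
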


Let $i^* = \argmin_{i \in [n]} \dtv(P, H_i)$ be the index of the optimal hypothesis.
\begin{proposition}[Semi-Distance TV Approximation]\label{prop:semidist-approx}
For any $i \neq j \in [n]$, $ \dtv(P, H_j) \leq \dtv(P, H_i) + \semi{i}{j} + \semi{j}{i}.$
In particular, for any $j \in [n]$, as $\semi{j}{i^*} \leq \OPT$,  $ \dtv(P, H_j) \leq 2\OPT + \semi{i^*}{j}.$
\end{proposition}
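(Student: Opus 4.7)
The plan is to derive the first inequality by combining the ordinary triangle inequality for total variation distance with a bound $\dtv(H_i, H_j) \leq \semi{i}{j} + \semi{j}{i}$ expressed purely in terms of semi-distances on the \Scheffe set $\Sij$. Once this bound is in hand, the ``in particular'' clause follows immediately by instantiating $i = i^*$ and invoking the underestimation property (\cref{prop:semidist-underest}) to control $\semi{j}{i^*}$ by $\OPT$.

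The main sub-claim I would prove is that $\dtv(H_i, H_j) \leq \semi{i}{j} + \semi{j}{i}$. By the defining property of $\Sij$, we have $\dtv(H_i, H_j) = H_j(\Sij) - H_i(\Sij)$. Adding and subtracting $P(\Sij)$ and applying the scalar triangle inequality gives
\[
    \dtv(H_i, H_j) \leq \bigl|H_j(\Sij) - P(\Sij)\bigr| + \bigl|P(\Sij) - H_i(\Sij)\bigr|.
\]
The first term is $\semi{i}{j}$ by definition. For the second term, I would use that $\Sij$ and $S_{j \to i}$ partition $\calX$ (this is precisely why the definition toggles between strict and non-strict inequalities based on the order of $i,j$), so $P(\Sij) = 1 - P(S_{j \to i})$ and $H_i(\Sij) = 1 - H_i(S_{j \to i})$. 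The $1$'s cancel and the absolute value becomes $|H_i(S_{j \to i}) - P(S_{j \to i})| = \semi{j}{i}$.

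To conclude the first inequality, apply the standard triangle inequality $\dtv(P, H_j) \leq \dtv(P, H_i) + \dtv(H_i, H_j)$ and substitute the sub-claim. For the ``in particular'' statement, set $i = i^*$, so $\dtv(P, H_{i^*}) = \OPT$, and use \cref{prop:semidist-underest} (with roles of $i$ and $j$ swapped there) to conclude $\semi{j}{i^*} \leq \dtv(P, H_{i^*}) = \OPT$, yielding $\dtv(P, H_j) \leq \OPT + \semi{i^*}{j} + \OPT = 2\OPT + \semi{i^*}{j}$.

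The only genuinely non-trivial bookkeeping is ensuring that $\Sij$ and $S_{j \to i}$ together cover $\calX$ without double-counting, which is exactly what the tie-breaking convention in the definition guarantees; everything else is a one-line triangle inequality, so I do not expect any real obstacle.
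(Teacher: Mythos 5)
Your proof is correct and follows essentially the same route as the paper: triangle inequality on $\dtv$, followed by a scalar triangle inequality through $P(\Sij)$. The only difference is cosmetic — you explicitly walk through the complement identity $S_{j\to i} = \calX \setminus \Sij$ to justify rewriting $\abs{H_i(\Sij) - P(\Sij)}$ as $\semi{j}{i}$, whereas the paper relies on its earlier remark that $\Sij = \overline{S_{j\to i}}$ and uses the two forms interchangeably.
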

This is the source of the $3$-approximation in prior work. If we find $H_j$ with $W(H_j) \leq \OPT$, then it must be a $3$-approximation as $W(H_j) \geq \semi{i}{j}$. We do not know $\OPT$ but simply picking the hypothesis with smallest $W(H_j)$ will suffice as $W(H_j) \leq W(H_{i^*}) \leq \OPT$ using \Cref{prop:semidist-underest}.



Again recall our computation model: we can sample from $P$ and any distribution in $\HH$ in $O(1)$ time. Furthermore, we can query the pdf of any distribution in $\HH$ in $O(1)$ time as well. 

\begin{proposition}[Approximating Semi-Distances via Samples]\label{lem:sample-scheffe}
With $s = \Theta\p{\frac{\log(1/\delta)}{\eps^2}}$ samples from $P$ and $O(s)$ time, a semi-distance $\semi{i}{j}$ can be estimated by $\hsemi{i}{j}$ such that $|\semi{i}{j} - \hsemi{i}{j}| \leq \eps$
with probability $1 - \delta$. Similarly, a total of $O\left(\frac{ \log(n/\delta)}{\epsilon^2}\right)$
samples from $P$ and $O\left(\frac{n^2 \log(n/\delta)}{\epsilon^2}\right)$ total time are required to accurately estimate all $\binom{n}{2}$ pairs of hypothesis with an overall success probability of at least $1-\delta$.
\end{proposition}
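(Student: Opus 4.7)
The plan is a routine empirical estimation argument based on Hoeffding's inequality. By the computational model, for a fixed pair $i \neq j$, we have direct access to the exact value $H_j(\Sij)$, so the only unknown quantity entering $\semi{i}{j} = |H_j(\Sij) - P(\Sij)|$ is $P(\Sij)$. I would draw $s = \Theta(\log(1/\delta)/\eps^2)$ i.i.d.\ samples $x_1, \ldots, x_s$ from $P$, use the membership query $H_i(x_\ell) < H_j(x_\ell)$ (or $\leq$, according to the tiebreaking in the definition of $\Sij$) to decide whether $x_\ell \in \Sij$, and form the empirical estimate $\hat{p}_{ij} = \frac{1}{s}\sum_{\ell=1}^{s} \mathbf{1}[x_\ell \in \Sij]$. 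Then define $\hsemi{i}{j} = |H_j(\Sij) - \hat{p}_{ij}|$.

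Since the indicator variables $\mathbf{1}[x_\ell \in \Sij]$ are i.i.d.\ Bernoulli with mean $P(\Sij)$ and bounded in $[0,1]$, Hoeffding's inequality gives
\[
    \Pr\!\bigl[\,|\hat{p}_{ij} - P(\Sij)| > \eps\,\bigr] \leq 2 e^{-2 s \eps^2} \leq \delta
\]
for an appropriate constant in $s$. The reverse triangle inequality then yields $|\hsemi{i}{j} - \semi{i}{j}| \leq |\hat{p}_{ij} - P(\Sij)| \leq \eps$ with probability at least $1-\delta$. The running time is dominated by the $s$ membership queries plus the $O(1)$ call to retrieve $H_j(\Sij)$, giving $O(\log(1/\delta)/\eps^2)$ time, matching the claim.

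For the second statement, the key observation is that a single batch of samples can be reused across all $\binom{n}{2}$ pairs. I would instead set $s = \Theta(\log(n/\delta)/\eps^2)$, apply the above argument to each pair with per-pair failure probability $\delta/n^2$, and union bound over all pairs. For each fixed pair, evaluating $\hat{p}_{ij}$ costs $O(s)$ time (one membership query per sample), and querying $H_j(\Sij)$ is $O(1)$; summing over $\binom{n}{2}$ pairs gives total running time $O(n^2 \log(n/\delta)/\eps^2)$ as claimed. There is no substantive obstacle here; the only mild subtlety worth flagging is that reusing the same samples across pairs introduces dependencies between the individual estimators $\hsemi{i}{j}$, but this is harmless because the union bound controls the joint failure event without requiring independence of the estimates themselves.
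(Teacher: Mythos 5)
Your proposal is correct and follows essentially the same approach as the paper's proof: form the empirical estimate of $P(\Sij)$ from the sample, apply Hoeffding, use the (reverse) triangle inequality to transfer the error to $\hsemi{i}{j}$, and for the second part reuse the same samples across all pairs with a per-pair failure probability of $\delta/n^2$ and a union bound. Your explicit note that reusing samples introduces dependence between the estimators but that the union bound is unaffected is a nice clarification that the paper leaves implicit.
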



\subsection{Approximation Lower Bound for Mixtures}

In \Cref{sec:expected-lb}, we correct and generalize the prior lower bound construction~\cite{bousquet2019optimal} from $2$ to $n$ distributions.
The overall goal is to construct $n$ hypotheses $H_1, \ldots, H_n$ along with $n$ families of distributions $\calS_1, \ldots, \calS_n$ such that the following hold:
\begin{itemize}
    \item (Approximation Factor) For any $P_i \in \supp(\calS_i)$ $H_i$ is the closest distribution to $P_i$ and the rest of $H_j$'s ($j \neq i$), are far from $P_i$:  $\dtv(P_i, H_j) \geq 3 \dtv(P_i, H_i) = 3\OPT$.
    \item (Indistinguishability) Consider a guessing game where the algorithm is given samples from a $P_i \sim \calS_i$ for $i$ chosen uniformly at random from $[n]$ and tries to guess the index $i$.
    If the number of samples is too small (e.g., dimension-independent), then no algorithm can correctly guess with probability greater than $\frac{1}{n}\p{1 + o(1)}$.
\end{itemize}

A hard instance satisfying these properties implies that no algorithm can distinguish between a single hypothesis with distance $\OPT$ and $n-1$ hypotheses with distance $3\OPT$ for any $n$.
So, the expected approximation factor goes to $3$ as $n$ increases with the best possible approximation, ignoring to lower order terms, being $\frac{3(n-1) + 1}{n} = 3 - \frac{2}{n}$. Our specific lower bound instance even works against \emph{mixtures} of $\mathcal{H}$. We will discuss why at the end of this section.

\begin{figure}[th]
\centering
\begin{tikzpicture}[scale=0.68]
    \def\n{3}  
    \def\k{3}  
    \def\ell{8}  
    \def\unitwidth{0.15}  
    \def\subintervalwidth{1.0}  
    \def\heightunit{1.6}  
    
    \draw[->] (0,0) -- (19,0) node[right] {$x \in \calX$};
    
    \foreach \i in {1,...,6} {
        \pgfmathsetmacro{\xpos}{\i*\k*\subintervalwidth}
        \draw (\xpos,0.1) -- (\xpos,-0.1);
        \node[below] at (\xpos - \k*\subintervalwidth/2,-0.2) {$T_{\i}$};
    }
    
    \foreach \i in {1,...,18} {
        \pgfmathsetmacro{\xpos}{\i*\subintervalwidth}
        \draw[gray!60, thick] (\xpos,0.08) -- (\xpos,-0.08);
    }
    
    
    \begin{scope}[yshift=7.5cm]
        \node[left] at (-0.3, \heightunit*0.5) {$H_i$};
        \draw[->] (0,0) -- (19,0);
        
        \draw[dashed, gray] (0,\heightunit*0.5) -- (18.5,\heightunit*0.5);
        
        \fill[blue!20] (0,0) rectangle (2*\k*\subintervalwidth,\heightunit*0.5);
        
        \fill[green!40] (2*\k*\subintervalwidth,0) rectangle (3*\k*\subintervalwidth,\heightunit*0.65);
        
        \fill[red!40] (3*\k*\subintervalwidth,0) rectangle (4*\k*\subintervalwidth,\heightunit*0.35);
        
        \fill[blue!20] (4*\k*\subintervalwidth,0) rectangle (6*\k*\subintervalwidth,\heightunit*0.5);
        
        \node at (2.5*\k*\subintervalwidth, \heightunit*0.9) {\small $\frac{1+\beta}{|\calX|}$};
        \node at (3.5*\k*\subintervalwidth, \heightunit*0.9) {\small $\frac{1-\beta}{|\calX|}$};
        \node at (5*\k*\subintervalwidth, \heightunit*0.9) {\small $\frac{1}{|\calX|}$};
        
        \draw[decorate,decoration={brace,amplitude=5pt,mirror}] 
            (2*\k*\subintervalwidth,-0.3) -- (3*\k*\subintervalwidth,-0.3) node[midway,below=5pt] {\small $T_{2i-1}$};
        \draw[decorate,decoration={brace,amplitude=5pt,mirror}] 
            (3*\k*\subintervalwidth,-0.3) -- (4*\k*\subintervalwidth,-0.3) node[midway,below=5pt] {\small $T_{2i}$};
    \end{scope}
    
    \begin{scope}[yshift=5cm]
        \node[left] at (-0.3, \heightunit*0.5) {$H_{i'}$};
        \draw[->] (0,0) -- (19,0);
        
        \draw[dashed, gray] (0,\heightunit*0.5) -- (18.5,\heightunit*0.5);
        
        \fill[green!40] (0,0) rectangle (\k*\subintervalwidth,\heightunit*0.65);
        
        \fill[red!40] (\k*\subintervalwidth,0) rectangle (2*\k*\subintervalwidth,\heightunit*0.35);
        
        \fill[blue!20] (2*\k*\subintervalwidth,0) rectangle (6*\k*\subintervalwidth,\heightunit*0.5);
        
        \node at (0.5*\k*\subintervalwidth, \heightunit*0.9) {\small $\frac{1+\beta}{|\calX|}$};
        \node at (1.5*\k*\subintervalwidth, \heightunit*0.9) {\small $\frac{1-\beta}{|\calX|}$};
        \node at (5*\k*\subintervalwidth, \heightunit*0.9) {\small $\frac{1}{|\calX|}$};
        
        \draw[decorate,decoration={brace,amplitude=5pt,mirror}] 
            (0,-0.3) -- (\k*\subintervalwidth,-0.3) node[midway,below=5pt] {\small $T_{2i'-1}$};
        \draw[decorate,decoration={brace,amplitude=5pt,mirror}] 
            (\k*\subintervalwidth,-0.3) -- (2*\k*\subintervalwidth,-0.3) node[midway,below=5pt] {\small $T_{2i'}$};
    \end{scope}
    
    \begin{scope}[yshift=1.5cm]
        \node[left] at (-0.3, \heightunit*0.5) {$P_i$};
        \draw[->] (0,0) -- (19,0);
        
        \draw[dashed, gray] (0,\heightunit*0.5) -- (18.5,\heightunit*0.5);
        
        \fill[blue!20] (0,0) rectangle (2*\k*\subintervalwidth,\heightunit*0.5);
        
        \foreach \j in {1,2,3} {
            \pgfmathsetmacro{\substart}{(2*\k + \j - 1)*\subintervalwidth}
            
            \draw[gray!60, thick] (\substart,0) -- (\substart, \heightunit*1.1);
            
            \pgfmathsetmacro{\extremepos}{int(mod(\j*3 + 1, \ell))}  
            
            \foreach \pos in {0,1,2,3,4,5,6,7} {
                \pgfmathsetmacro{\xpos}{\substart + \pos*\subintervalwidth/\ell}
                \pgfmathsetmacro{\barwidth}{\subintervalwidth/\ell - 0.02}
                
                \pgfmathsetmacro{\isextreme}{int(\pos == \extremepos ? 1 : 0)}
                
                \ifnum\isextreme=1
                    \fill[white] (\xpos, 0) rectangle (\xpos + \barwidth, \heightunit*0.65);
                    \draw[gray!50] (\xpos, 0) rectangle (\xpos + \barwidth, \heightunit*0.01);
                \else
                    \fill[green!40] (\xpos, 0) rectangle (\xpos + \barwidth, \heightunit*0.65);
                \fi
            }
        }
        
        \pgfmathsetmacro{\substart}{(3*\k + 4 - 1)*\subintervalwidth}
            
        \draw[gray!60, thick] (\substart,0) -- (\substart, \heightunit*1.1);
        \foreach \j in {1,2,3} {
            \pgfmathsetmacro{\substart}{(3*\k + \j - 1)*\subintervalwidth}
            
            \draw[gray!60, thick] (\substart,0) -- (\substart, \heightunit*1.1);
            
            \pgfmathsetmacro{\extremepos}{int(mod(\j*2 + 3, \ell))}  
            
            \foreach \pos in {0,1,2,3,4,5,6,7} {
                \pgfmathsetmacro{\xpos}{\substart + \pos*\subintervalwidth/\ell}
                \pgfmathsetmacro{\barwidth}{\subintervalwidth/\ell - 0.02}
                
                \pgfmathsetmacro{\isextreme}{int(\pos == \extremepos ? 1 : 0)}
                
                \ifnum\isextreme=1
                    \fill[orange!70] (\xpos, 0) rectangle (\xpos + \barwidth, \heightunit*1.0);
                \else
                    \fill[red!40] (\xpos, 0) rectangle (\xpos + \barwidth, \heightunit*0.35);
                \fi
            }
        }
        
        \fill[blue!20] (4*\k*\subintervalwidth,0) rectangle (6*\k*\subintervalwidth,\heightunit*0.5);
        
        \draw[decorate,decoration={brace,amplitude=5pt,mirror}] 
            (2*\k*\subintervalwidth,-0.3) -- (3*\k*\subintervalwidth,-0.3) node[midway,below=5pt] {\small $T_{2i-1}$};
        \draw[decorate,decoration={brace,amplitude=5pt,mirror}] 
            (3*\k*\subintervalwidth,-0.3) -- (4*\k*\subintervalwidth,-0.3) node[midway,below=5pt] {\small $T_{2i}$};
            
        \draw[decorate,decoration={brace,amplitude=3pt}] 
            (2*\k*\subintervalwidth, \heightunit*1.15) -- (2*\k*\subintervalwidth + \subintervalwidth, \heightunit*1.15) 
            node[midway,above=1pt] {\tiny $T_{2i-1}^1$};
            
        
            
    \end{scope}
    
    \begin{scope}[xshift=19.5cm, yshift=3.7cm]
        \node[anchor=west, font=\small\bfseries] at (0, 2.3) {Legend:};
        
        \fill[green!40] (0, 1.6) rectangle (0.3, 1.9);
        \node[anchor=west, font=\small] at (0.4, 1.75) {$(1+\beta)/|\calX|$};
        
        \fill[red!40] (0, 1.1) rectangle (0.3, 1.4);
        \node[anchor=west, font=\small] at (0.4, 1.25) {$(1-\beta)/|\calX|$};
        
        \fill[blue!20] (0, 0.6) rectangle (0.3, 0.9);
        \node[anchor=west, font=\small] at (0.4, 0.75) {$1/|\calX|$ (uniform)};
        
        \fill[orange!70] (0, 0.1) rectangle (0.3, 0.4);
        \node[anchor=west, font=\small] at (0.4, 0.25) {$2/|\calX|$ (spike)};
        
        \draw[gray!50] (0, -0.4) rectangle (0.3, -0.1);
        \node[anchor=west, font=\small] at (0.4, -0.25) {$0$ (sink)};
        
        
    \end{scope}
    
\end{tikzpicture}

\caption{Visualization of hard instance with hypotheses $H_i$, $H_{i'}$ and distribution $P_i \in \supp(\mathcal{S}_i)$ with $i=2$ and $i'=1$.
The domain $\calX$ is partitioned into intervals $T_1, \ldots, T_{2n}$, each subdivided into $k$ sub-intervals $T_i^j$ of length $\ell$. 
Hypotheses $H_i$ and $H_{i'}$ are mostly uniform but assign slightly higher mass $(1+\beta)/|\calX|$ to $T_{2i-1}$ or $T_{2i'-1}$ and slightly lower mass $(1-\beta)/|\calX|$ to $T_{2i}$ or $T_{2i'}$ respectively.
$P_i$ mostly agrees with $H_i$ but exhibits fine-grained structure within sub-intervals: on each $T_{2i-1}^j$, one element is chosen at random to be a \emph{sink} with 0 mass; on each $T_{2i}^j$, one element is chosen at random to be a \emph{spike} with mass $2/|\calX|$.}
\label{fig:expected-lb}
\end{figure}

The hard instance is displayed in \Cref{fig:expected-lb}.
To construct the indistinguishable families of  distributions, the underlying domain is split into $2n$ pieces $T_1, \ldots, T_{2n}$.
For any $i \in [n]$, the hypothesis $H_i$ and every $P_i \in \supp(\calS_i)$ are uniform over all pieces other then $T_{2i-1}$ and $T_{2i}$.
On elements in $T_{2i-1}$, $H_i$ has probability mass $\frac{1+\beta}{|\calX|}$ slightly more than uniform, and on elements in $T_{2i}$, $H_i$ has probability mass $\frac{1-\beta}{|\calX|}$ slightly less then uniform.

The true distribution $P_i$ is very close to $H_i$ but introduces some elements in $T_{2i-1}$ and $T_{2i}$ with more extreme heavy/light probability masses in order to ``balance out'' the distribution towards the uniform distribution.
We further split the pieces $T_{2i-1}$ and $T_{2i}$ into intervals of length $\ell \approx 1/\beta$.
On each piece in $T_{2i-1}$, all but one element of each interval in $P_i$ is set to $\frac{1+\beta}{|\calX|}$, the same as $H_i$.
The other element is a ``sink'' set to have mass zero, so that the total mass of the interval is $\ell/|\calX|$, the same as it would be under the uniform distribution.
Similarly, for each piece in $T_{2i}$, all but one element of each interval is set to $\frac{1-\beta}{|\calX|}$ with one element set to a ``spike'' at $2/|\calX|$.
The distribution $\calS_i$ is uniform over all such distributions, in other words, sampling $P_i \sim \calS_i$ involves picking a random element in each of these intervals to be the extreme element.

The distance between $P_i$ and $H_i$ essentially comes from the cost of moving the extreme elements to the uniform distribution ($H_i$ does not have exactly $1/|\calX|$ mass on those elements, but this is true up to low-order terms).
The distance between $P_i$ and $H_j$ for some $j \neq i$ is equal to the cost of moving the extreme elements to the uniform distribution plus the cost of moving the slightly more/less than uniform pieces $T_{2i-1}$, $T_{2i}$, $T_{2j-1}$, and $T_{2j}$ to the uniform distribution.
It is a straightforward calculation to verify that the latter distance is roughly three times the former.

It remains to show indistinguishability.
Le Cam's two point method is a classic result relating TV distance and indistinguishability in guessing games between two distributions: for any algorithm, the advantage in correctly guessing beyond $1/2$ probability is bounded by half the TV distance.
This can be shown by taking a coupling view of TV distance: if the TV distance is small, then there is a coupling between the two distributions where they are equal except with probability equal to the distance.
We show a simple lemma generalizing this result to $n$ distributions given a \emph{multi-dimensional coupling} where all $n$ distributions take on the same value simultaneously except with a small probability $\gamma$.
We call the subset of the domain where the distributions all take on the same value a \emph{coincident subset}.

In our setting, the distributions of interest are $\calD_1, \ldots, \calD_n$ where $\calD_i$ is obtained by sampling a random $P_i \sim \calS_i$ and then taking $s$ i.i.d.\ samples from $P_i$.
We design a multi-dimensional coupling with a coincident subset $\Bunique$ which is all possible samples of size $s$ which take at most one sample per length $\ell$ sub-interval.
The key property of our construction is that, for all $i$, $\calD_i$ assigns uniform probability mass $1/|\calX|^s$ to all samples in $\Bunique$ essentially because (a) $P_i$ has mass $\ell/|\calX|$ on every interval and (b) $\calS_i$ randomly permutes the probability masses within each interval.
Finally, we show that if the number of samples $s \ll \sqrt{|\calX|/n\ell}$, then it is highly likely that a sample from any $\calD_i$ will land in $\Bunique$.
Conditioned on this event, it is impossible to determine any information about the index $i$.
Therefore, if the sample size does not grow with $|\calX|$ (e.g., any finite sample size for real-valued distributions), then it is impossible to correctly guess the index with probability significantly more than $1/n$.

The above argument implies that no proper algorithm can obtain expected approximation guarantee better than $3-2/n$. It turns out that the lower bound instance even rules out better approximation guarantees when the algorithm is allowed to output a mixture of $\mathcal{H}$. Essentially, we show that for our lower bound instance, in the mixture setting, any algorithm obtaining an approximation factor significantly better than $3-2/n$, must put $\gg 1/n$ weight on $H_{i^*}$ which again contradicts the indistinguishability property above.
 An interesting observation is that the uniform distribution is a $2$-approximation in our hard instance, however, it cannot be written as a linear combination of the hypotheses.

\subsection{Expected Approximation Upper Bound}\label{sec:tech_ev}



The lower bound instance of the previous section is easy to handle from an upper bound perspective: Any hypotheses in $H_i$ with $i\neq i^*$ is a $3$ approximation, so outputting a uniformly random hypothesis gives expected approximation factor $\frac{1+3(n-1)}{n}=3-2/n$. We now give a technical overview of \Cref{thm:ev_ub} which provides an algorithm with this approximation factor for an arbitrary set of hypotheses. 
The output of the algorithm in \Cref{thm:ev_ub} will be an explicit distribution over $\HH$ (i.e., values $\{p_i\}_{i=1}^{n}$ which are non-negative and sum to $1$). The probabilities will be defined in terms of the semi-distances of \Cref{def:semi_distances}.  For simplicity, here we ignore the fact that our semi-distance estimates are noisy and assume we know the values exactly. 

The $n=2$ case is our starting point. In terms of semi-distances, we can view the  algorithm of \cite{mahalanabis2007density} in the following way: Define the distribution over $\HH = \{H_1, H_2\}$ as: pick $H_1$ with probability $\frac{\semi{1}{2}}{\semi{1}{2} + \semi{2}{1}}$ and pick $H_2$ with probability $\frac{\semi{2}{1}}{\semi{1}{2} + \semi{2}{1}}$. Sampling from this distribution gives:

\begin{claim}\label{claim:two_dist_ev}
Let $H$ be the output of the sampling described above. Then $ \E[]{\dtv(P, H)} \leq 2 \cdot \OPT.$
\end{claim}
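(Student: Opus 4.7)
The plan is to do a direct computation. Without loss of generality, relabel so that $H_1 = H_{i^*}$, and introduce the shorthand $a = \semi{1}{2}$, $b = \semi{2}{1}$, $d_1 = \dtv(P, H_1) = \OPT$, and $d_2 = \dtv(P, H_2)$. The quantity to bound is
\[
\E{\dtv(P, H)} = \frac{a\, d_1 + b\, d_2}{a+b}\,.
\]
I would aim to show this is at most $d_1 + b$, which is at most $2\OPT$ by \cref{prop:semidist-underest}.

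The key step is to invoke the two properties of semi-distances already stated in the preliminaries. First, \cref{prop:semidist-underest} applied to the pair $(2,1)$ gives $b = \semi{2}{1} \leq \dtv(P, H_1) = d_1 = \OPT$. Second, \cref{prop:semidist-approx} with $i=1=i^*$ and $j=2$ gives
\[
d_2 = \dtv(P, H_2) \leq d_1 + a + b\,.
\]
Substituting this upper bound on $d_2$ into the numerator and grouping terms,
\[
a\, d_1 + b\, d_2 \;\leq\; a\, d_1 + b\,(d_1 + a + b) \;=\; (a+b)\, d_1 + b\,(a+b),
\]
so dividing by $a+b$ yields $\E{\dtv(P,H)} \leq d_1 + b \leq 2 d_1 = 2\,\OPT$, as desired.

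Two small loose ends: if $a + b = 0$ the output distribution is undefined, but this degenerate case also forces $d_2 \leq d_1$ via \cref{prop:semidist-approx}, so outputting either hypothesis deterministically gives a $1$-approximation; one can handle this by an arbitrary tiebreak. Also, since the overview says we are pretending to have exact semi-distances, the ``real'' proof will replace $a, b$ by the estimates $\hat a, \hat b$ from \cref{lem:sample-scheffe} and absorb the $O(\eps)$ slack coming from the additive errors into the final $+\eps$ term; the algebra is otherwise identical. I do not anticipate a genuine obstacle here — the entire content of the claim is the observation that weighting by the \emph{other} hypothesis's semi-distance exactly cancels the extra $b$ term that would appear from the triangle inequality bound on $d_2$.
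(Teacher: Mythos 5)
Your proof is correct and is essentially the same as the paper's: both bound $d_2$ by $d_1 + a + b$ via \cref{prop:semidist-approx}, substitute into the convex combination to get $\OPT + \semi{2}{1}$, and finish with \cref{prop:semidist-underest}. The degenerate-case and noisy-estimate remarks are reasonable extra care but not needed for the claim as stated.
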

The proof of the claim is in \Cref{sec:omitted_proofs}. The hope is to generalize this distribution over $\HH$ to larger $n$. Let $p_i$ be the probability that we pick $H_i$. As mentioned above, these probabilities will depend on the semi-distances, and \Cref{prop:semidist-underest} allows us to explicitly compute (an upper bound) on the the expected cost as follows:
\begin{align*}
    \E[]{\dtv(P, H)} \leq \OPT + \sum_{i \neq i^*} p_i \p{\semi{i}{i^*} + \semi{i^*}{i}} \leq \OPT + \sum_{i \neq i^*} p_i \p{W(H_{i^*}) + W(H_i)}.
\end{align*}
Since ultimately we want to bound the approximation ratio rather than the actual TV distance, we divide by $\OPT$ and use that $W(H_{i^*}) \leq \OPT$ (\Cref{prop:semidist-underest}) to upper bound,
\begin{align*}
    \frac{\E[]{\dtv(P, H)}}{\OPT} \leq 1 + \frac{\sum_{i \neq i^*} p_i \p{W(H_{i^*}) + W(H_i)}}{W(H_{i^*})} \leq 1 + \sum_{i \neq i^*} p_i \p{1 + \frac{W(H_i)}{W(H_{i^*})}}.
\end{align*}

Note that we do not know $i^*$ (this is the whole point of hypothesis selection!), so we need to design our probabilities $p_1, \cdots, p_n$ to be able to handle \emph{all} possible cases of $i^*$.  One natural attempt at this is to attempt to pick the $p_i$ such that the above 
upper bound on approximation ratio is the same regardless of $i^*$. With $\textbf{p}=(p_1,\dots p_n)^T$ and $A$ defined to be the $n\times n$ matrix with $A_{ii}=0$ for all $i$ and $A_{ij}=1+\frac{W(H_j)}{W(H_i)}$ for all $i\neq j$, this is equivalent (up to normalization of $\textbf{p}$) to  $A\textbf{p}=\textbf{1}$, where $\textbf{1}$ is the all ones vector.
In \Cref{sec:intuition_ev}, we give explicit solutions of this linear system. The solution for $n=3$ directly gives an algorithm with expected approximation $3-\frac{2}3$. However, the solution for the $n=4$ reveals an unexpected surprise: the resulting values $p_i$, while guaranteeing a $3-\frac{2}4$ expected error, can sometimes be \emph{negative}. 



To overcome this issue, one natural idea is to impose a non-negativity constraint in addition to $A\textbf{p}=\textbf{1}$. However, this is a linear program which seems difficult to solve analytically. Instead, we take an alternate approach showing how to \emph{round} the $\{p_i\}_{i=1}^{n}$ to a valid probability distribution. The rounding procedure is given in \Cref{sec:formal_ev_ub_proof}; also see \Cref{lem:step2}. Three crucial steps must be checked: (1) That the explicit solution to the linear system $A\textbf{p}=\textbf{1}$ actually gives us a $3-\frac{2}n$ approximation factor, momentarily ignoring the fact that  $\{p_i\}_{i=1}^{n}$ may not form a valid distribution. (2) That the distribution resulting from the rounding maintains the $3-\frac{2}n$ approximation factor. (3) That only knowing noisy approximations to the semi-distances is not a problem. We deal with these steps in \Cref{thm:general_n_ev}, \Cref{lem:step2}, and \Cref{lem:step3}, respectively. Putting these lemmas together completes the proof of \Cref{thm:ev_ub}.

\subsection{Fast Algorithms with Moderate Success Probability}\label{sec:technical_overview_moderate}

The most technical challenge in this work is to obtain an algorithm for hypothesis selection with optimal approximation factor $C=3$ in time $\frac{n\cdot \text{polylog}(n)\log(1/\eps)}{\eps^2\delta}$ and with optimal sample complexity.
Here we sketch the main ideas behind the algorithm. In what follows, $\tilde O$ hides logarithmic factors in $n$. For the discussion, let us forget about $\varepsilon$ and assume that we have oracle access to exact semi-distances $\semi{i}{j}$ for any $i,j$ --- if we can understand this case, our result then follows rather easily from \Cref{lem:sample-scheffe}.
Our main contribution is an algorithm solving what we dub the \emph{Semi-Distance Threshold Problem}. This problem has input parameters $b,\beta \in [0,1]$ as well as the hypotheses set $\mathcal{H}$ and unknown distribution $P$. Consider the directed graph $G_b=([n],E)$ having a directed edge $(i,j)$ if and only if $\semi{i}{j}>b$. The goal in this problem is to output either
\begin{itemize}
\item[(1)] $\perp$ if all nodes in $G_b$ have an incoming edge, or
\item[(2)]  $j$ for any $j\in [n]$ such that there is no edge from $i^*$ to $j$ in $G_b$.
\end{itemize}
If we output $\perp$, we require that $G_b$ indeed satisfies the property in (1) with probability one, whereas if we output some $j$, then the property in (2) holds with probability $1-\beta$ for a parameter $\beta$ which is roughly $\delta$.
We here sketch why such an algorithm suffices to get our result and the main ideas behind designing the algorithm and why it can be made to run in $\tilde O(n/\beta )$ time. The details of this part can be found in \Cref{sec:binary_search}.

\underline{\emph{How to apply an algorithm for the Semi-Distance Threshold Problem}}: Suppose we have an algorithm for the Semi-Distance Threshold Problem. We first query the threshold $b=0$. If the algorithm outputs some $j$, then $\semi{i^*}{j}=0$ and returning $H_j$ already gives a valid approximation. Otherwise, the algorithm outputs $\perp$ at $b=0$. At the other endpoint, $G_1$ is empty and the algorithm successfully outputs some $j$ with probability $1-\beta$. We can therefore perform a binary search on $b$, employing the algorithm $\log(1/\eps)$ times with $\beta=\delta/\log(1/\eps)$, to find a $b_0$ such that with input $b_0$, the algorithm outputs $\perp$ whereas with input $b_0+\eps$ it outputs some $j_0$. These two endpoint cases ensure that the binary search can be performed: Indeed, think of an array of size $O(1/\eps)$ where we want to find a cell containing a $0$ (corresponding to outputting $\perp$) which is immediately to the left of a cell containing a $1$ (corresponding to outputting some $j$). Without any assumptions on the entries, this would require us to look at $\Omega(1/\eps)$ entries, but assuming the leftmost cell contains a zero and the rightmost cell contains a one, we can do it looking at just $O(\log 1/\eps)$ entries. This is achieved by checking the middle cell and recursively updating the search interval to ensure its left boundary is always a zero and its right boundary is a one.

\begin{figure}[!h]
    \centering
    \includegraphics[width=0.7\linewidth]{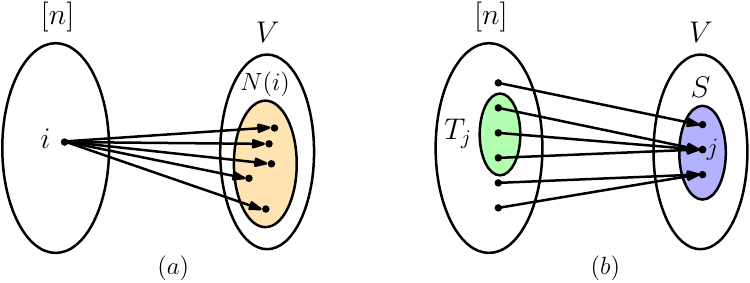}
    \caption{A node $i$ in $[n]$ is $\alpha$-prompting if it has edges to at least a $\alpha$-fraction of the nodes of $V$. (a) In the case where nodes in $[n]$ are $\gg\beta $ prompting on average, a $\beta$-prompting $i$ can be found through a sampling procedure which samples nodes of $V$. (b) When the nodes in $[n]$ are only $O(\beta)$ prompting on average, we can instead sample a set $S$ of size $O(\log n)$ from $V$ (blue), for $j\in S$, identify the set $T_j$ in $[n]$ with an edge to $j$, and test how prompting nodes in $T_j$ are. If we find a $\beta$-prompting hypothesis, we return it. On a high level (the concrete argument is more finicky), if $i^*$ is $\beta$-prompting and has an edge to a node in $S$, we will return some $\beta$-prompting hypothesis (not necessarily $i^*$). On the other hand, if $i^*$ is not $\beta$-prompting, the probability that $i^*$ has an edge to a node of the sampled set $S$ is $O(\beta \log n)$, so in this case, if we don't find a $\beta$-prompting hypothesis, we can instead return an arbitrary node $j$ from $S$ and with probability $1-O(\beta \log n)$, there is no edge from $i^*$ to $j$.}.
    \label{fig:prompting}
\end{figure}

Our algorithm outputs $H_{j_0}$ as the final result. Assuming all runs of the Semi-Distance Threshold Problem algorithm were correct, it holds that (1) $\OPT\geq b_0$ and (2) $\semi{i^*}{j_0}\leq b_0+\eps\leq \OPT+\eps$.
Indeed, since the algorithm with input $b_0$ outputs $\perp$, $i^*$ must have some incoming edge $(j,i^*)$ in $G_{b_0}$ which implies that $b_0\leq \semi{j}{i^*}\leq \OPT$ via \Cref{prop:semidist-underest}. Moreover, since the algorithm on input $b_0+\eps$ outputs $j_0$, $\semi{i^*}{j_0}<b_0+\eps\leq \OPT+\eps$.
By~\Cref{prop:semidist-approx}, returning $j_0$ thus suffices.   

\underline{\emph{An algorithm for the Semi-Distance Threshold Problem}}: 
For $V\subseteq [n]$ and $i\in [n]$, we say that $i$ is \emph{$\alpha$-prompting} (with respect to $V$) if $\{j\in V \mid (i,j)\in E(G_b)\}$ constitutes at least a $\alpha$-fraction of $V$. To solve the Semi-Distance Threshold Problem, we use a subroutine, which for any set $V\subseteq [n]$ can either identify an $i\in [n]$ which is at least $\beta$-prompting or a $j \in V$ satisfying that there is no edge from $i^*$ to $j$ in $G_b$ with probability $1-\tilde O(\beta)$. The ideas behind this subroutine are illustrated in~\Cref{fig:prompting}.

To do so, consider first the case where the average ``promptingness'' $\hatd$ of $i\in [n]$ is $\gg \beta$, that is to say for a random $i\in [n]$, $\dbar:=\mathbb{E}[|\{j\in V \mid (i,j)\in E(G_b)\}|]/|V|=\frac{|E(G_b)|}{n\cdot |V|}\gg \beta$. Sampling $\tilde O(1/\beta)$ random pairs $(i,j)\in U\times V$, and checking if there is an edge, we can detect if this is the case, and if so, estimate $\dbar$ within a constant factor. Now in this case, for some positive integer $k=O(\log n)$, there must be at least $n/(2^k/\log n)$ values $i\in [n]$ that are $\beta 2^k$-prompting. To check if an $i\in [n]$ is $\beta 2^k$-prompting, it suffices to sample $\tilde O(1/(\beta 2^k))$ values $j\in V$ and check if there is an edge $(i,j)$. Since for the "correct" value of $k$, there are $n/(2^k/\log n)$ many $\beta 2^k$-prompting hypothesis, finding one by random sampling from $[n]$ requires $\tilde O(2^k/\log n)$ samples and checking if it is  $\beta2^k$-prompting, takes $\tilde O(1/(\beta 2^k))$ samples from $V$ resulting in a runtime of $\tilde O(1/\beta)$. Guessing the correct value of $k$ has overhead $O(\log n)$. Further details can be found in \Cref{sec:ave-deg-prompting}. 

On the other hand, in the case $\dbar =O(\beta)$, we proceed differently. In this case, by Markov's inequality at least a $9/10$-fraction of $i\in [n]$ are $O(\beta)$-prompting and by a double counting argument, at least a $9/10$-fraction of $j\in V$ have $|\{i\in [n]\mid (i,j)\in E(G_b)\}|/n=O(\beta)$. In this case, we first sample $O(\log n)$ elements $j\in V$, let $T_j=\{i\in [n]\mid (i,j)\in E(G_b)\}$, and add $j$ into a set $S$ if $|T_j| \leq O(\beta n)$. By a Chernoff bound, this happens with high probability for at least a $(4/5)$-fraction of the sampled $j$'s. The process takes $\tilde O(n)$ time. Now for each $j\in S$, we check if \emph{any} element in $T_j$ is $\beta$-prompting. Checking if a single element $i\in T_j$ is prompting can be done in $\tilde O(1/\beta)$ time, so this check takes $\tilde O(n)$ times in total due to the bounded size of $T_j$. If we find a $\beta$-prompting $i$, our subroutine has succeeded. 

Finally, in the case where no element in any $T_j$ is $\beta$-prompting, we repeat the above sampling process \emph{again}, constructing $S'\subset V$, but this time \emph{resisting the temptation} to ever return a $\beta$-prompting hypothesis even if we find one. Instead, we commit to returning a $v_j$ which has no edge from $i^*$ with probability $\geq 1-\beta$. To do so, we search for a hypothesis that has low degree $O(\beta n)$, and \emph{no} $\beta$-prompting neighbors. Since the previous sampling step failed to produce a $\beta$-prompting hypothesis even though we sampled $\gg \log n$ values $j$, we know with very high probability that at least roughly half the element of $V$ have low degree $O(\beta n)$ and no $\beta$-prompting neighbors. Since we also sample $\gg\log n$ values $j\in V$ in the second sample, with very high probability we find such a $j\in V$, and when we do, we return it as the final output.
The point is that if $i^*$ is $\gg \beta$-prompting then $(i^*,j)$ cannot be an edge since, $j$ has no $\beta$ prompting neighbors. Thus in this case, the success probability is $1$.  On the other hand, if $i^*$ is only $O(\beta)$ prompting, then the probability that the randomly sampled set of size $O(\log n)$ will have any edge to $i^*$ is $O(\log n\cdot \beta)=\tilde O(\beta)$. In particular, with probability $1-\tilde O(\beta)$, the $j$ returned by the algorithm has no edge from $i^*$. We provide a more detailed discussion in \Cref{sec:small-neighborhood}.

 To apply the subroutine, start with $V=[n]$ and search for a $\beta$-prompting $i$. Then remove all $j$ such that $(i,j)\in E(G_b)$ from $V$ and repeat. Since we always remove a $\beta$ fraction, we can at most perform this step $\tilde O(1/\beta)$ times, so the total runtime becomes $\tilde O(n/\beta)$. From the analysis above, we know that even over the many runs of the subroutine, if we \emph{ever} return a $j\in [n]$, then with probability $1-O(\beta)$, there is no edge $(i^*,j)$ and our Semi-Distance Threshold algorithm succeeded. The point is that unless some very low probability error event happened, then the first time the algorithm commits to outputting a $j$, it will in fact be the case with probability $1-\tilde O(\beta)$, it holds that $j$ has no edge from $i^*$. In particular, and crucially, we do not have to union bound over all $\tilde O (1/\beta)$ searches for a $\beta$-prompting hypothesis.
 
 The other case is that eventually $V=\emptyset$, and then the algorithm returns $\perp$. If this happens, then since every element from the initial $V=[n]$ were evicted, with probability $1$, for any $j\in [n]$ there exists $i\in [n]$ such that $(i,j)$ is an edge. In other words, each node of $G_b$ has an incoming edge and the output is correct.
A more comprehensive version of this argument is presented in \Cref{sec:moderate_delta_putting_together}.

In \Cref{sec:quantile}, we describe a different, much simpler algorithm which improves upon prior work~\cite{aliakbarpour2024optimal} but achieves an overall worse dependence of $\tilde{O}\p{\frac{n}{\eps^2\delta^2}}$.

\subsection{Near-Linear Time with Known $\OPT$}\label{sec:tech_overview_known_OPT}
In the case where some upper bound $R\geq\OPT$ is revealed to us as part of the input, we show that it is possible to further improve the runtime of the algorithm. In particular, when $R=\OPT$, we get the optimal approximation factor $C=3$ while avoiding the polynomial dependence on $\delta$ in the runtime. The algorithm is a variant of an idea from~\cite{aliakbarpour2024optimal} which, without the assumption that $\OPT$ is known, obtains $C=4$ in a similar runtime. Let us for the discussion again assume that we have oracle to the exact semi-distances $\semi{i}{j}$ and for simplicity that $R=\OPT$.

Consider the directed graph $G=(V,E)$ on $V=[n]$ that has a directed edge $(i,j)$ if and only if $\semi{i}{j}>\OPT$. From the  discussion in the preliminaries section, it suffices to find a $j$ such that there is no edge going from $i^*$ to $j$ and return $H_j$. Also from that section, we know the additional piece of information that $i^*$ has no incoming edges. We first search for a $j_1$ such that $j_1$ has less than $n/2$ incoming edges. This can be done by iterating over all $j\in [n]$ and for each such $j$ sampling $O(\log n)$ nodes $i\in [n]$ and checking whether $i$ has an edge to $j$. Note that this search always succeeds since $i^*$ has the desired property. Inductively, assuming we have found a $j_k$ with at most $n/2^k$ incoming edges, we define $S_k=\{i\mid (i,j_k)\in G\}$ which has size $\leq n/2^k$ and search for an element $j_{k+1}$ in $S_k$ that has less than $n/2^{k+1}$ incoming edges. To test whether a single $j\in S_k$ has this property, we must sample $O(2^{k+1}\log n)$ random $i\in [n]$ and check if they have an edge to $j$. However, the fact that $S_k$ only has size $n/2^k$ ensures that the total runtime of this step still becomes $O(n\log n)$. Now two things can happen. Either this process continues for $O(\log n)$ steps culminating in $S_k=\emptyset$ eventually. But in this case $j_k$ has no incoming edges and we are happy to return $H_{j_k}$. Alternatively, we eventually do not succeed in finding $j_{k+1}$, but in this case $i^*$ cannot lie in $S_k$; if it did, the search would find it if nothing else. By the definition of $S_k$, this means that there is no edge from $i^*$ to $j_k$ and so we can return $H_{j_k}$. The total runtime for the $O(\log n)$ steps is $O(n\log^2 n)$ assuming oracle access to $\semi{i}{j}$ in constant time, so the final runtime becomes $O(n \log^3 n/ \eps^2)$.

\subsection{Subquadratic Time with Preprocessing} \label{sec:tech_overview_preprocessing}
The fastest known algorithm for hypothesis selection achieving $1/\text{poly}(n)$ failure probability and the optimal approximation factor of $3$ is due to \cite{mahalanabis2007density} and runs in quadratic time $O(n^2 \log(n)/\eps^2)$. This runtime is a natural barrier for this algorithm, and in general, for algorithms which compute high quality estimates of all quadratically many semi-distances $\semi{i}{j}$, which can be done up to accuracy $\pm \eps$ with probability $1-1/\text{poly}(n)$ using $O(\log(n)/\eps^2)$ samples, and then postprocess these semi-distances to output a hypothesis close to $P$. 

While our (linear in $n$) upper bound of Section \ref{sec:moderateprob} obtains an improved dependence on the failure probability $\delta$, our algorithm does not yield improvements over this prior quadratic-time algorithm in the setting where $\delta$ is an arbitrarily small inverse polynomial in $n$. This motives us to ask if a polynomial time \emph{preprocessing} step can help bridge this gap. Indeed, \cite{mahalanabis2007density} explicitly raised this question (See Question $6$ in their work).

We make partial progress by giving an algorithm which uses polynomial preprocessing, uses the optimal number of samples, and with high-probability outputs a $3$ approximation (plus $\eps$ additive error) in \emph{subquadratic} time $\tilde{O}(n^{2-\Omega(\eps)})$. This shows that (if preprocessing is allowed), we can go beyond computing all semi-distances to obtain an algorithm with a high-probability of success. We briefly note that the problem is not interesting with exponential time preprocessing, since we can enumerate all possible $O(\log(n)/\eps^2)$ sized samples that can be observed and store the optimal answer for those samples.\footnote{\cite{mahalanabis2007density} give another exponential time preprocessing algorithm with linear-time queries where the preprocessing time does not depend on the underlying domain size.} This would even allow for \emph{sublinear} query time. 

It turns out that the exponential preprocessing algorithm of \cite{mahalanabis2007density} can be abstracted to the following problem which is sufficient to solve hypothesis selection.

\begin{restatable}{definition}{tournament}\label{def:tournament_problem}
    We are given a weighted, undirected complete graph on n vertices. Assume that the edge-weights are distinct. We preprocess the weighted graph and then play the following game with an adversary until only one vertex remains: we report the edge with the largest weight and the adversary chooses one of the endpoints of the edge and removes it from the graph (together with all the adjacent edges).
\end{restatable}

The hope is to use polynomial time pre-processing to simulate the tournament revelation problem. However, as stated, the problem is far too general. But in the case of hypothesis selection, we can use the additional available \emph{geometric} structure: the known hypothesis $H_i$ can be viewed as vectors in $\R^{|\calX|}$, where recall $\calX$ is the domain. This additional geometric structure allows us to simulate the tournament revelation problem in overall subquadratic time.  

In more detail, the geometric viewpoint means it suffices to quickly find the diameter of a dynamic point set under arbitrary deletions. To obtain an overall subquadratic time bound, each operation must operate in sublinear in $n$ time. This is not possible to do exactly with known techniques in computational geometry, but it is possible to obtain $1+\eps$ approximate diameter queries for any small $\eps > 0$ in $\approx n^{1-\eps}$ time, ignoring dependency on the dimension (see \Cref{cor:preprocessing} for the formal bound). Such a datastructure is shown to exist by composing two existing results in computational geometry (\Cref{thm:indyk} of \cite{indyk2003better} and \Cref{thm:eppstein} of \cite{eppstein1995dynamic}), which allows us to simulate the tournament revelation problem, and hence solve hypothesis selection in subquadratic time.
Finally, we note that we can remove the linear dependence on $|\calX|$ during the actual query phase (and only pay $|\calX|$ dependence during preprocessing) by performing dimension reduction mapping $\ell_1$ into a lower dimensional $\ell_2^2$ space; see \Cref{cor:dim_reduction}.

\subsection{Other Related Work}

While our paper is focused on the fundamental hypothesis selection problem itself, we briefly remark that progress on our questions can potentially have many downstream applications, since hypothesis selection is used as a standard ``meta tool'' in many learning problems. For example, a classic approach to learning a structured distribution is to choose a net or cover for the class (a finite set of representatives that are `close by' to all possible distributions in the class). The cover is viewed as the set of hypothesis $H_i$. Examples of where this strategy has been successful 
include Poisson binomial distributions \cite{daskalakis2012learning}, mixtures of Gaussians \cite{kalai2012disentangling, daskalakis2014faster, suresh2014near, kothari2018robust, ashtiani2018sample, ashtiani2020near}, distributions with piecewise polynomial density functions \cite{AcharyaD0S17}, and
histograms \cite{ChanDSS14, canonne2022nearly}; we refer to the survey \cite{diakonikolas2016learning} for more details.

The significance of the optimal constant factor, $C=3$, also becomes more apparent when using the standard cover methods. If we use a $\gamma$-cover of a parametric distribution class, running a hypothesis selection algorithm on this cover implies that $\OPT \approx \gamma$. To achieve the same final accuracy, an algorithm with a suboptimal approximation factor requires a much denser cover (a smaller $\gamma$) than one with the optimal factor. In some cases, such as for mixtures of $k$-Gaussians, this decrease in $\gamma$ can cause the size of the cover to grow exponentially. Therefore, using an algorithm with the best possible approximation factor is crucial for maintaining computational feasibility. A detailed discussion is provided in~\cite{aliakbarpour2024optimal}.


\section{Approximation Lower Bounds}\label{sec:expected-lb}

In this section, we show that the following information-theoretic lower bounds on the approximation factor achieved by any (randomized) algorithm which takes a number of samples which does not grow polynomially with the underlying domain size:\footnote{Domain-independent sample complexity is a \emph{key statistical phenomenon} of hypothesis selection.}
\begin{enumerate}
    \item A proper algorithm cannot return a $C<3$ approximation with probability greater than $\frac{1}{n}\p{1 + o(1)}$ (\Cref{cor:proper-lb}).
    \item An (improper) algorithm which returns a hypothesis in the convex hull of $\calH$ must achieve an expected approximation $C > 3-\frac{2}{n} - o\p{\frac{1}{n}}$ (\Cref{cor:exp-lower-bound}).
\end{enumerate}
We note that the second result is in a strictly more general setting than the proper, expected approximation regime considered in \Cref{sec:expected}.
This implies that our upper bound achieves the tight approximation factor and that returning a mixture of the hypotheses provides no benefit over proper algorithms.
We additionally show by a standard result in distribution learning that $\Omega(\log(n)/\eps^2)$ samples are necessary for an algorithm achieving \emph{any constant} multiplicative factor expected value guarantee in Theorem \ref{thm:C-lower-bound}. 

It was previously claimed in \cite{bousquet2019optimal} that no proper algorithm can return a $C < 3$ approximation with probability greater than $1/2$, similarly to our first result stated above.
In addition to strengthening this result, we identify and fix a subtle error in the prior work.
Our construction still takes its starting point from that of \cite{bousquet2019optimal}.
We discuss this further at the end of this section in \Cref{rem:lb-bug}.

To achieve the claimed lower bounds, we prove a slightly stronger statement: there exist a set of hypotheses and a family of true distributions such that (a) all but one hypothesis $H_{i^*}$ incurs an approximation factor which is nearly $3$ and (b) for a large enough domain size and a uniformly random true distribution from the family, it is impossible guess the optimal index $i^*$ from samples with probability better than $1/n$.

\paragraph{Hard Instance}
The hard instance is given by a set of hypotheses $H_1, \ldots, H_n$ and distribution families $\calS_1, \ldots \calS_n$ defined as follows and depicted in \Cref{fig:expected-lb}.

Let $\calX = [2n k \ell]$ for some $k, \ell \in \N$.
Let $\beta = \frac{1}{\ell-1}$.
Consider the $2nk$ length $\ell$ intervals $T_1^1, \ldots T_1^k$, $\ldots$, $T_{2n}^1$, $\ldots$, $T_{2n}^k$ where $T_i^j= [(i-1)k\ell+(j-1)\ell+1, (i-1)k\ell+j\ell]$.
Let $T_i = \bigcup_{j=1}^k T_i^j = [(i-1)k\ell + 1, ik\ell]$.

The $i$th hypothesis will be uniform everywhere but on the interval spanned by $T_{2i-1} \cup T_{2i}$, where it will have mass slightly more than $1/|\calX|$ and slightly less than $1/|\calX|$ on the first and second half, respectively.
We will define the $i$th hypothesis as follows:
\begin{equation}\label{eq:exp-lb-hypotheses}
    \Pr[Z \sim H_i]{Z = x} =
    \begin{cases}
        \frac{1 + \beta}{|\calX|} &\qquad \text{if } x \in T_{2i-1} \\
        \frac{1 - \beta}{|\calX|} &\qquad \text{if } x \in T_{2i} \\
        \frac{1}{|\calX|} &\qquad \text{otherwise}
    \end{cases}.
\end{equation}
It is straightforward to verify that $H_i$ is a valid distribution over $[|\calX|]$ as long as $\beta \leq 1$. Let $\calF$ be a the uniform distribution over one-hot indicator functions $f: \{0,\ldots, \ell-1\} \rightarrow \{0,1\}$ (functions with support size $1$).
We will now describe the distribution-over-distributions $\calS_i$ for any $i \in [n]$.
A distribution sampled from $\calS_i$ is created by sampling $f^1, \ldots, f^k, g^1, \ldots, g^k \iid \calF$. Let $F = \p{f^1, \ldots, f^k}$ and $G = \p{g^1, \ldots, g^k}$.
The probability mass function $P_{(i, F, G)}$ is defined as follows:
\begin{equation}\label{eq:exp-lb-true-distbn}
    \Pr[Z \sim P_{(i,F,G)}]{Z = x} = 
    \begin{cases}
        0 &\qquad \text{ if } x \in T_{2i-1}^j \text{ and } f^j\p{x \bmod \ell} = 1 \\
        \frac{1+\beta}{|\calX|} &\qquad \text{ if } x \in T_{2i-1}^j \text{ and } f^j\p{x \bmod \ell} = 0 \\
        \frac{2}{|\calX|} &\qquad \text{ if } x \in T_{2i}^j \text{ and } g^j\p{x \bmod \ell} = 1 \\
        \frac{1-\beta}{|\calX|} &\qquad \text{ if } x \in T_{2i}^j \text{ and } g^j\p{x \bmod \ell} = 0 \\
        \frac{1}{|\calX|} &\qquad \text{otherwise}
    \end{cases}.
\end{equation}
We will sometimes drop the full subscript and simply use $P_i$ to refer to a distribution drawn from $\calS_i$.

For any $i, F, G$, it is straightforward to verify that $P_{(i,F,G)}$ is a valid probability distribution over $[|\calX|]$.
In fact, the distribution $P_{(i,F,G)}$ induces over the intervals $T_{i'}^j$ is the uniform distribution as shown below.
For any $i', j$,
\begin{equation}\label{eq:exp-lb-uniform-blocks}
    \Pr[Z \sim P_{(i,F,G)}]{Z \in T_{i'}^j}
    = \sum_{x \in T_{i'}^j} \Pr[Z]{Z = x}
    = \frac{\ell}{|\calX|}.
\end{equation}
This is trivial for the intervals where $P_{(i,F,G)}$ is uniform.
If $i' = 2i-1$, then the total mass in the interval is
\begin{equation*}
    \p{\ell-1}\p{\frac{1+\beta}{|\calX|}}
    = \p{\ell-1}\p{\frac{1+\frac{1}{\ell-1}}{|\calX|}}
    = \frac{\ell}{|\calX|}.
\end{equation*}
If $i' = 2i$, then the total mass is in the interval is:
\begin{equation*}
    \frac{2}{|\calX|} + \p{\ell-1}\p{\frac{1-\beta}{|\calX|}}
    = \frac{2}{|\calX|} + \p{\ell-1}\p{\frac{1-\frac{1}{\ell-1}}{|\calX|}}
    = \frac{2}{|\calX|} + \frac{\ell-2}{|\calX|}
    = \frac{\ell}{|\calX|}.
\end{equation*}

\begin{theorem}\label{thm:exp-lower-bound}
Fix any $\alpha$ and consider a domain of size $|\calX| \gg n/\alpha$ and hard instance defined above.
There exists constants $d_0, d_1 \in [0,1]$ such that for every $i \neq j \in [n]$ and distribution $P_i \in \supp(\calS_i)$:
\[d_0 = \dtv\p{P_i, H_i}, d_1 = \dtv\p{P_i, H_j}, \text{ and } \frac{d_1}{d_0} \geq 3-\alpha.\]
Furthermore, given fewer than $o\p{\sqrt{\frac{|\calX|}{n/\alpha}}}$ samples from $P_{i^*} \sim \calS_{i^*}$ for $i^* \sim \Unif([n])$, no randomized algorithm can correctly guess the index $i^*$ with probability more than $\frac{1}{n} \p{1 + o(1)}$.
\end{theorem}

Given this theorem, the approximation lower bounds follow as corollaries.

\begin{corollary}\label{cor:proper-lb}
Let $\alpha > 0$ and consider the hard instance with $|\calX| \gg n/\alpha$.
Given $o\p{\sqrt{\frac{|\calX|}{n/\alpha}}}$ samples, no proper (randomized) algorithm can return a hypothesis $H_i$ with approximation factor less than $3 - \alpha$ with probability more than $\frac{1}{n}\p{1 + o(1)}$.
\end{corollary}

\begin{proof}
By \Cref{thm:exp-lower-bound}, no algorithm with the given number of samples can guess the optimal index $i^*$ with probability more than $\frac{1}{n}\p{1 + o(1)}$.
Furthermore, the only hypothesis with approximation less than $3-\alpha$ is $H_{i^*}$.
\end{proof}

\begin{corollary}\label{cor:exp-lower-bound}
Consider the hard instance with $|\calX| \gg n^{2.2}$.
Given $o\p{\frac{\sqrt{|\calX|}}{n^{1.1}}}$ samples, no (randomized) algorithm can output a distribution $Q$ in the convex hull of $\calH$ with expected approximation ratio better than $3 - \frac{2}{n} - o\p{\frac{1}{n}}$ where the expectation is taken over the randomness of the algorithm, the instance, and the samples.
\end{corollary}

\begin{proof}
For the sake of contradiction, assume that there exists such an algorithm $\calA$ which uses $o\p{\frac{\sqrt{|\calX|}}{n^{1.1}}}$ samples and has expected approximation factor at most $3 - \frac{2}{n} - \frac{C}{n}$ for some constant $C_0 > 0$.
Consider the hard setting of \Cref{thm:exp-lower-bound} with $\alpha = 1/n^{1.2}$.
The theorem implies that no algorithm with $o\p{\frac{\sqrt{|\calX|}}{n^{1.1}}}$ samples can guess the index of the true distribution $i^*$ with probability greater than $\frac{1}{n}\p{1+o(1)}$.

As $Q$ is a convex combination of the hypotheses $\calH$, it can be written as $Q = \sum_{i=1}^n w_i H_i$ with (random variable) weights $(w_1, \ldots, w_n)$ taking on values in the $n$-dimensional simplex.
Note that $\dtv(P_{i^*}, Q) = \frac{1}{2}\norm{P_{i^*}- Q}_1 = \frac{1}{2} \sum_{x=1}^{|\calX|} \abs{P_{i^*}(x) -  Q(x)}$.
For $i \in [n]$, let
$$r_i(Q) = \sum_{x \in T_{2i-1} \cup T_{2i}} \abs{P_{i^*}(x) -  Q(x)}$$
be the component of the $\ell_1$ distance on the intervals corresponding to $i$.

For all $i \neq i^*$, $r_i(Q) = w_i r_i(H_i)$. This is because for all $j \neq i$, $H_j(x) = P_{i^*}(x) = 1/|\calX|$ for all $x \in T_{2i-1} \cup T_{2i}$: $H_j$ and $P_{i^*}$ are equal on these intervals.
For this reason, letting $d_0, d_1$ from \Cref{thm:exp-lower-bound}: $r_{i^*}(H_{i^*}) = 2d_0$ and for any $i \neq i^*$, $r_{i^*}(H_i) + r_i(H_i) = 2 d_1$.

Consider $r_{i^*}(Q)$ and any $j \neq i^*$.
Note that $H_j(x) = 1/|\calX|$ for all $x \in T_{2i^*-1} \cup T_{2i^*}$.
Then,
\begin{align*}
    r_{i^*}(Q)
    &= \sum_{x \in T_{2i^*-1} \cup T_{2i^*}} \abs{\sum_{i=1}^n w_i H_i(x) - P_{i^*}(x)} \\ 
    &= \sum_{x \in T_{2i^*-1} \cup T_{2i^*}}
    \abs{(1-w_{i^*}) H_j(x) + w_{i^*} H_{i^*}(x) - P_{i^*}(x)}. \\
    &= \sum_{x \in T_{2i^*-1}}
    \abs{\frac{1-w_{i^*} + w_{i^*}(1+\beta)}{|\calX|}  - P_{i^*}(x)}
    + \sum_{x \in T_{2i^*}}
    \abs{\frac{1-w_{i^*} + w_{i^*}(1-\beta)}{|\calX|}  - P_{i^*}(x)} \\
    &= \frac{2k}{|\mathcal{X}|} \p{(\ell-1) (1-w_{i^*}) \beta + 1 + w_{i^*}\beta} \\
    &= C_1 + w_{i^*}\beta \p{-\ell + 2}.
\end{align*}
Here, let $C_1$ be a constant independent of $w_{i^*}$.
Therefore, $r_{i^*}(Q)$ is a linear function of $w_{i^*}$.
When $w_{i^*} = 0$, $r_{i^*}(Q) = r_{i^*}(H_j)$ and when $w_{i^*}=1$, $r_{i^*}(Q) = r_{i^*}(H_{i^*}) = 2d_0$.
By linearity, $r_{i^*}(Q) = 2 w_{i^*} d_0 + (1-w_{i^*})r_{i^*}(H_j)$.

Overall,
\begin{align*}
    \|P_{i^*} - Q\|_1 
    &= \sum_{i=1}^n r_i(Q) \\
    &= \p{\sum_{i \neq i^*} w_i r_i(H_i)} + (1-w_{i^*})r_{i^*}(H_j) + 2 w_{i^*} d_0 \\
    &= \p{\sum_{i \neq i^*} w_i r_i(H_i) + w_i r_{i^*}(H_i)}  + 2w_{i^*}d_0 \\
    &= \p{\sum_{i \neq i^*} 2 w_i d_1} +  2w_{i^*}d_0 \\
    &= 2(1 - w_{i^*}) d_1 + 2w_{i^*} d_0.
\end{align*}

Then, $\dtv(P_{i^*}, Q) = (1-w_{i^*}) d_1 + w_{i^*} d_0$.
Using the fact that $d_1/d_0 \geq 3-\alpha$,
\[
    \frac{\dtv(P_{i^*}, Q)}{\dtv(P_{i^*}, H_{i^*})}
    \geq \frac{(1 - w_{i^*}) d_1 + w_{i^*} d_0}{d_0}
    \geq \p{1-w_{i^*}}\p{3 - \alpha} + w_{i^*}
    = 3 - \alpha - 2 w_{i^*} + \alpha w_{i^*}.
\]

By the initial assumption, the algorithm achieves an expected approximation factor of $3-\frac{2}{n}-\frac{C_0}{n}$, then $\E{w_{i^*}} > \frac{1+C_0/2}{n}$. Therefore, the output distribution of this algorithm can be used to guess $i^*$ with probability more than $\frac{1}{n}\p{1 + o(1)}$, reaching a contradiction with the indistinguishability guarantee of \Cref{thm:exp-lower-bound}.
\end{proof}

To prove the main theorem, we will make use of the following lemma about indistinguishability and multi-dimensional couplings.
This is an extension of the classic relation between TV distance, couplings, and indistinguishability to $n$ rather than $2$ distributions.

\begin{definition}[Multi-dimensional Couplings and Coincident Subsets]
Let $\calD_1, \ldots, \calD_n$ be distributions over the same  domain $\calX$.
A distribution $\calQ$ over $\calX^n$ is a \emph{multi-dimensional coupling} of $\calD_1, \ldots, \calD_n$ if, for all $i \in [n]$ and for all $x \in \calX$,
\begin{equation*}
    \Pr[A \sim \calQ]{A_i = x} = \Pr[X \sim \calD_i]{X = x}.
\end{equation*}
Given a multi-dimensional coupling $\calQ$, a subset $S \subseteq \calX$ is a \emph{coincident subset} if, for all $x \in S$ and for all $i \in [n]$,
\begin{equation}
    \Pr[A \sim Q]{A_i = x} = \Pr[A \sim Q]{\bigcup_{j=1}^n A_j = x}.
\end{equation}
\end{definition}

\begin{lemma}[Multi-dimensional Coupling Indistinguishability]\label{lem:multi-coupling}
Let $\calD_1, \ldots, \calD_n$ be distributions over $\calX$, let $\calQ$ be a multi-dimensional coupling of these distributions, and let $S \subseteq \calX$ be a coincident subset.
Let $\gamma = \Pr[A \sim Q]{\bigcup_{i=1}^n A_i \notin S}$ be the total mass of elements not in $S$.
Consider any (randomized) algorithm $\calA(x;r)$ which receives a sample from a uniformly random distribution among $\calD_1, \ldots, \calD_n$ and guesses the index of the distribution.
Then, the probability of $\calA$ guessing correctly, denoted as $\phi(\calA)$, is at most:
\begin{equation*}
    \phi(\calA)
    = \Pr[i^* \sim \Unif\p{[n]}, X \sim \calD_{i^*}, r]{\calA(X;r) = i^*} \leq \gamma + \frac{1-\gamma}{n}.
\end{equation*}
\end{lemma}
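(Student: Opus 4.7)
The plan is to use the multi-dimensional coupling to reduce the $n$-way guessing game to a setting where the algorithm's input is almost independent of $i^*$. The key structural consequence to extract from the coincident property is: on the event that any coordinate $A_j$ lies in $S$, all coordinates must agree on the same element of $S$. This follows because for any $x \in S$, the defining identity $\Pr{A_i = x} = \Pr{\bigcup_j A_j = x}$ forces the single-coordinate event $\{A_i = x\}$ to coincide almost surely with the ``witness'' event $\{\exists j: A_j = x\}$ for every $i$; running this over all $i$ simultaneously shows that if some $A_j = x \in S$, then $A_i = x$ for every $i$. In particular, the event $E := \{A_1 = \cdots = A_n \in S\}$ is exactly the complement of $\{\exists j: A_j \notin S\}$, so $\Pr{E} = 1 - \gamma$.

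Next I would re-express the success probability through the coupling. Drawing $i^* \sim \Unif([n])$ and $X \sim \calD_{i^*}$ has the same joint law as drawing $A \sim \calQ$ independently of $i^* \sim \Unif([n])$ and $r$, and setting $X = A_{i^*}$, because $\calQ$ has the correct marginals. Condition on $E$ versus $E^c$. On $E$, the sample $X = A_{i^*}$ equals the common value $A_1$, which is a deterministic function of $A$ alone; hence $\calA(X; r)$ depends only on $(A, r)$ and is independent of $i^*$. Since $i^* \sim \Unif([n])$ is independent of $(A, r)$, the conditional probability $\Pr{\calA(X;r) = i^* \mid E}$ is exactly $1/n$. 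On $E^c$ I use the trivial upper bound of $1$.

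Combining the two cases yields
\[
\phi(\calA) \;\le\; \Pr{E}\cdot \frac{1}{n} + \Pr{E^c} \;=\; (1-\gamma)\cdot \frac{1}{n} + \gamma \;=\; \gamma + \frac{1-\gamma}{n},
\]
which is the claimed bound. The only delicate step is the first one: upgrading ``the coordinates have equal marginal mass on $S$'' to ``the coordinates are literally equal on $S$ almost surely.'' This drops out cleanly by comparing a single-coordinate point-mass event against the corresponding existence event, but it is the crucial ingredient --- without it, $X$ could still carry information about $i^*$ even on $E$, and the $1/n$ conditional bound would fail. The remainder of the argument is just a conditional/unconditional decomposition of the success probability.
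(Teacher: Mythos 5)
Your argument is correct and follows essentially the same route as the paper's: split the success probability on a good event of probability $1-\gamma$, show the conditional success probability there is $1/n$ because conditioned on that event the sample carries no information about $i^*$, and bound the complement trivially by $1$. The paper phrases the good event marginally as $\{X \in S\}$ and exhibits a common conditional law $\calE$ for all $\calD_i$ restricted to $S$, whereas you work directly at the coupling level via the almost-sure agreement of all coordinates on $S$ (equivalently making $X = A_{i^*}$ a function of $A$ alone, independent of $i^*$); these are two bookkeepings of the same underlying fact, and your explicit derivation that the coincident-subset identity forces a.s.\ coordinate agreement is a clean unpacking of a step the paper treats as directly implied.
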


\begin{proof}
By the definition of a multi-dimensional coupling and coincident subset, for any $i \in [n]$, $\Pr[X \sim \calD_i]{X \in S} = 1-\gamma$.
$\phi(\calA)$ can be expressed as the sum of conditional probabilities depending on whether or not $X \sim \calD_{i^*}$ is in $S$. 
Then, to prove the desired bound of $\phi(\calA) \leq \gamma + \frac{1-\gamma}{n}$, it suffices to show that the correctness probability is at most $1/n$ conditioned on $X \in S$.

Let $\calE$ be the conditional distribution where, for all $i \in [n]$ and $x \in S$, $\Pr[Y \sim \calE]{Y = x} = \Pr[X \sim \calD_i]{X = x | X \in S}$.
The existence of such a distribution which is independent of the choice of $i$ is directly implied by the definition of a coincident subset.
Then, 
\begin{align*}
    \Pr[i^* \sim \Unif\p{[n]}, X \sim \calD_i, r]{\calA(X;r) = i^* | X \in S}
    &= \Pr[i^* \sim \Unif\p{[n]}, Y \sim \calE, r]{\calA(Y;r) = i^*} \\
    &= \sum_{i=1}^n \Pr[i^* \sim \Unif\p{[n]}, Y \sim \calE, r]{\p{\calA(Y;r) = i} \cap \p{i^* = i}} \\
    &= \sum_{i=1}^n \Pr[Y \sim \calE, r]{\calA(Y;r) = i} \Pr[i^* \sim \Unif\p{[n]}]{i^* = i} \\
    &= \sum_{i=1}^n \Pr[Y \sim \calE, r]{\calA(Y;r) = i}\p{\frac{1}{n}} \\
    &= \frac{1}{n}.
\end{align*}
\end{proof}

We are now ready to prove the main theorem.
\begin{proof}[Proof of \Cref{thm:exp-lower-bound}]
Consider any $i \in [n]$ and any valid choices of indicator functions $F, G$.
We will calculate the distances between a true distribution $P_{(i,F,G)}$ and the hypotheses.

\paragraph{Distance to $H_i$}
By \Cref{eq:exp-lb-true-distbn}, it suffices to only consider the elements in $T_{2i-1} \cup T_{2i}$ as $P_{(i,F,G)}(x) = H_i(x)$ on all elements $x$ outside of this interval.
The distance is:
\begin{align*}
    \dtv\p{P_{(i,F,G)}, H_i}
    &= \frac{1}{2} \norm{P_{(i,F,G)} - H_i}_1 \\
    &= \frac{1}{2|\calX|}\p{
        2k\p{1+\beta} + 2k(\ell-1) \p{0}
    } \\
    &= \frac{k\p{1+\beta}}{|\calX|}.
\end{align*}

\paragraph{Distance to $H_j$}
Consider any $j \neq i$.
It suffices to only consider the elements in $T_{2i-1} \cup T_{2i} \cup T_{2j-1} \cup T_{2j}$ as $P_{(i,F,G)}(x) = H_i(x)$ on all elements $x$ outside of this interval.
The distance is:
\begin{align*}
    \dtv\p{P_{(i,F,G)}, H_j}
    &= \frac{1}{2} \norm{P_{(i,F,G)} - H_j}_1 \\
    &= \frac{1}{2|\calX|}\p{
        2k + 2k(\ell-1)\p{\beta} + 2k\ell\p{\beta}
    } \\
    &= \frac{k(1 + 2\ell\beta - \beta)}{|\calX|} \\
    &= \frac{k\p{1 + 2\p{\frac{1}{\beta} + 1}\beta - \beta}}{|\calX|} \\
    &= \frac{k(3 + \beta)}{|\calX|}.
\end{align*}

This ratio of distances $\frac{\dtv\p{P_{(i,F,G)}, H_j}}{\dtv\p{P_{(i,F,G)}, H_i}} = \frac{3+\beta}{1+\beta}$ is lower bounded by $3-\alpha$ if $\beta = O(\alpha)$, so it suffices to choose $\ell = \Theta(1/\alpha)$.

\paragraph{Indistinguishability}
It remains to show that, given insufficient samples from a randomly chosen distribution $P_{(i,F,G)}$, it is impossible to guess $i$ with probability much greater than $1/n$.
Let $\calD_i^s$ be the probability distribution over sets of $s \ll |\calX|/\ell$ samples constructed by first sampling $f^1, \ldots, f^k, g^1, \ldots, g^k \sim F$ and then sampling $s$ elements from the corresponding distribution $P_{(i,F,G)}$.
Note that $\calD_i^s$ has the following probability mass function for any sample $\bx \in [|\calX|]^s$:
\begin{equation}\label{eq:exp-lb-marginal}
    \Pr[B \sim \calD_i^s]{B = \bx}
    = \E[F, G]{\Pr[Z_1, \ldots, Z_s \iid P_{(i,F,G)}]{\bigcup_{j=1}^s Z_j=\bx_j}}.
\end{equation}

We will construct a multi-dimensional coupling $\calQ$ over the domain $[|\calX|]^{n \times s}$ where the $i$th row of a sample from $\calQ$ is distributed as $\calD_i^s$.
We will design $\calQ$ to have the following coincident subset.
Let $\Bunique \subset [|\calX|]^s$ be the set of all vectors which contain at most one element from any interval $T_u^v$ for all $u, v$.
Observe that all distributions $\calD_i^s, \ldots, \calD_n^s$ are equivalent up to a reordering of the intervals $T_u^v$, so the probability of sampling an element of $\Bunique$ is the same for all $\calD_1^s, \ldots, \calD_n^s$.
Let $\gamma$ be the probability that a sample of size $s$ contains at least two samples within the same interval.
For a matrix $\bM$, let $\bM_{i*}$ be its $i$th row vector.
We are now ready to define $\calQ$:
\begin{equation*}
    \Pr[A \sim \calQ]{A = \bM} = 
    \begin{cases}
        \frac{1}{|\calX|^s} &\qquad \text{if } \bM_{1*} = \ldots = \bM_{n*} \in \Bunique \\
        \frac{1}{\gamma^{(n-1)}} \prod_{i=1}^n \Pr[B_i \sim \calD_i^s]{B_i = \bM_{i*}} &\qquad \text{if } \bM_{i*} \notin \Bunique \text{ for all } i \in [n] \\
        0 &\qquad \text{otherwise}
    \end{cases}.
\end{equation*}

It remains to show that for $A \sim \calQ$, the row-marginal distributions are distributed as $A_{i*} \sim \calD_i^s$.
Consider any $i \in [n]$ and any $\bx \in [|\calX|]^s$.
We will proceed by comparing $\Pr[A \sim \calQ]{A_{i*} = \bx}$ and $\Pr[B \sim \calD_i^s]{B = \bx}$ by cases.

\textbf{Case 1:} $\bx \notin \Bunique$.
The row-marginal distribution is:
\begin{align*}
    \Pr[A \sim \calQ]{A_{i*} = \bx}
    &= \sum_{\bM \in [|\calX|]^{n \times s}: \bM_{i*} = \bx} \Pr[A \sim \calQ]{A = \bM} \\
    &= \sum_{\bM \in [|\calX|]^{n \times s}: \bM_{i*}  = \bx \text{ and } \bM_{j*} \notin \Bunique \text{ for all } j \in [n]} \Pr[A \sim \calQ]{A = \bM} \tag{All other terms are zero}\\
    &= \sum_{\bw_1, \ldots, \bw_n \notin \Bunique: \bw_i = \bx} \gamma^{-(n-1)} \prod_{j=1}^n \Pr[B_j \sim \calD_j]{B_j = \bw_j}  \\
    &= \gamma^{-(n-1)} \Pr[B_i \sim \calD_i^s]{B_i = \bx} \sum_{\bw_1, \ldots, \bw_{i-1}, \bw_{i+1}, \ldots, \bw_n \notin \Bunique} \prod_{j \neq i} \Pr[B_j \sim \calD_j]{B_j = \bw_j}  \\
    &= \gamma^{-(n-1)} \Pr[B_i \sim \calD_i^s]{B_i = \bx} \cdot \\
    & \qquad \underbrace{\p{\sum_{\bw_1 \notin \Bunique}  \Pr[B_1 \sim \calD_1^s]{B_1 = \bw_1}} \ldots \p{\sum_{\bw_n \notin \Bunique}  \Pr[B_n \sim \calD_n^s]{B_n = \bw_n}}}_{n-1} \\
    &= \gamma^{-(n-1)} \Pr[B_i \sim \calD_i^s]{B_i = \bx} \cdot \\
    & \qquad \underbrace{\Pr[B_1 \sim \calD_1^s]{B_1 \notin \Bunique} \ldots \Pr[B_n \sim \calD_n^s]{B_n \notin \Bunique}}_{n-1} \\
    &= \gamma^{-(n-1)} \Pr[B_i \sim \calD_i^s]{B_i = \bx} \gamma^{n-1} \\
    &= \Pr[B_i \sim \calD_i^s]{B_i = \bx}.
\end{align*}

\textbf{Case 2:} $\bx \in \Bunique$.
The row-marginal distribution of $\calQ$ is:
\begin{align*}
    \Pr[A \sim \calQ]{A_{i*} = \bx}
    &= \sum_{\bM \in [|\calX|]^{n \times s}: \bM_{i*} = \bx} \Pr[A \sim \calQ]{A = \bM} \\
    &= \Pr[A \sim \calQ]{A = 
    \begin{bmatrix}
        \bx \\
        \ldots \\
        \bx
    \end{bmatrix}
    } \tag{All other terms are zero}\\
    &= \frac{1}{|\calX|^s}.
\end{align*}
Recall from \Cref{eq:exp-lb-marginal} that the probability mass function of $B$ is an expectation over the following terms:
\begin{equation}\label{eq:exp-lb-unique-sample-target}
    \Pr[B \sim \calD_i^s]{B = \bx}
    = \E[F, G]{\prod_{j=1}^s \Pr[Z \sim P_{(i,F,G)}]{Z = \bx_j}}.
\end{equation}

Let $\bw$ be an $s$ vector of index tuples such that $\bw_j = \p{u_j, v_j}$ is the index of the interval containing $\bx_j$: $\bx_j \in T_{u_j}^{v_j}$.
Note that as $\bx \in \Bunique$, all entries of $\bw$ are unique.
For a given element $v \in T_u^v$, we will consider the following equivalent hierarchical sampling process where we first sample a interval and an element within the interval:
\begin{equation*}
    \Pr[Z \sim P_{(i,F,G)}]{Z = v}
    = \Pr[Z]{Z \in T_u^v} \cdot \Pr[Z]{Z = v | Z \in T_u^v}.
\end{equation*}

By \Cref{eq:exp-lb-uniform-blocks}, $\Pr[Z \sim P_{(i,F,G)}]{Z \in T_u^v} = \ell/|\calX|$ does not depend on the distribution parameters $i, F, G$.
The conditional probability $\Pr[Z \sim P_{(i,F,G)}]{Z = v | Z \in T_u^v}$ depends on $i$, $f^t$, and $g^t$ (if $t=2i-1$ or $t=2i$) but not on any of the other coordinates of $F, G$.
As $F, G$ are product distributions with each coordinate chosen independently as a uniformly random indicator function, we can rewrite \Cref{eq:exp-lb-unique-sample-target} as:
\begin{align*}
    \Pr[B \sim \calD_i^s]{B = \bx}
    &= \E[F, G]{\prod_{j=1}^s  \Pr[Z \sim P_{(i,F,G)}]{Z \in T_{u_j}^{v_j}} \Pr[Z]{Z = \bx_j | Z \in T_{u_j}^{v_j}}} \\
    &= \p{\frac{\ell}{|\calX|}}^s \E[F, G]{\prod_{j=1}^s \Pr[Z]{Z = \bx_j | Z \in T_{u_j}^{v_j}}} \\
    &= \p{\frac{\ell}{|\calX|}}^s \prod_{j=1}^s \E[f^{v_j}, g^{v_j} \iid \calF]{\Pr[Z]{Z = \bx_j | Z \in T_{u_j}^{v_j}}} \\
    &= \p{\frac{\ell}{|\calX|}}^s \prod_{j=1}^s 
    \begin{cases}
        \E[f \sim \calF]{f(\bx_j) \cdot 0 + \p{1-f(\bx_j)}\p{\frac{1 + \beta}{\ell}}} & \quad \text{if } v_j = 2i-1 \\
        \E[g \sim \calF]{g(\bx_j) \cdot 2 + \p{1-g(\bx_j)}\p{\frac{1 - \beta}{\ell}}} & \quad \text{if } v_j = 2i \\
        \frac{1}{\ell} & \quad \text{o.w.}
    \end{cases} \\
    &= \p{\frac{\ell}{|\calX|}}^s \frac{1}{\ell^s} \\
    &= \frac{1}{|\calX|^s}.
\end{align*}
Therefore, $\calQ$ is a valid multi-dimensional coupling of $\calD_1^s, \ldots, \calD_n^s$ with coincident subset $\Bunique$.

By \Cref{lem:multi-coupling}, no algorithm can correctly guess the index of the true distribution with probability greater than $\gamma + \frac{1-\gamma}{n}$.
Recall that $\gamma$ is the probability that a sample of $s$ elements contains at least two samples within the same interval $T_u^v$.
There are $|\calX|/\ell$ total intervals, each of size $\ell$.
We can upper bound this probability as:
\begin{align*}
    \gamma
    &= 1 - \frac{\abs{\Bunique}}{|\calX|^s} \\
    &= 1 - \frac{\p{\frac{|\calX|}{\ell}} \ell \p{\frac{|\calX|}{\ell}-1} \ell \ldots \p{\frac{|\calX|}{\ell}-(s-1)} \ell}{|\calX|^s} \\
    &= 1 - \p{1} \p{1-\frac{\ell}{|\calX|}} \p{1-\frac{2\ell}{|\calX|}} \ldots \p{1-\frac{(s-1)\ell}{|\calX|}} \\
    &\leq 1 - \p{1 - \frac{s\ell}{|\calX|}}^s \\
    &\leq 1 - \p{1 - \frac{s^2\ell}{|\calX|}} \tag{Bernoulli's inequality} \\
    &= \frac{s^2 \ell}{|\calX|}.
\end{align*}

Recall that $|\calX| = 2nk\ell$. If $k = \omega(s^2)$, then $\gamma + \frac{1-\gamma}{n} \leq \frac{1}{n}\p{1 + o(1)}$, completing the proof.
\end{proof}

\begin{remark}\label{rem:lb-bug}
This theorem is a generalization of the lower bound in \cite{bousquet2019optimal} for $n=2$.
We now present the issue with that analysis and how our result avoids it.

Consider our construction also with $n=2$.
The key difference is that in the prior construction, the extreme elements for each distribution $P_i$ are chosen uniformly at random within the meta-interval $T_i$, and there are no sub-intervals $T_i^j$.
In our construction, there is a single extreme element chosen within each sub-interval.
The analysis in \cite{bousquet2019optimal} argues that, the distribution $\calD_i^s$ of $s$ samples from a randomly chosen distribution $P_i$ conditioned on sampling no duplicate elements is the same as the distribution of $s$ samples from the uniform distribution $U$ on $[|\calX|]$, also conditioned on sampling no duplicate elements.
We argue that, for our construction, this holds among samples which contain no duplicate sub-intervals (no two sampled elements belong to the same sub-interval).

While both claims are true when $s=1$, for larger samples, the former claim is untrue.
Note that in their setting, every distribution $P_i$ in the hard family of distributions has a fixed number $z$ of domain element where its probability mass is zero. In particular, within all but some single meta-interval $T_k$, the distribution will have nonzero mass on every domain element. On $T_k$, there will be a fixed nonzero number of domain elements which have probability mass zero. Consider the extreme case where the sample size is $s=|\calX|-z$. 
Consider the distribution of counts of elements in each meta-interval from a sample from $\calD_i^s$ conditioned on no collisions.
This count distribution will be supported only on a single element: namely, the count vector which has full count on every meta-interval other than $T_k$.
On the other hand, the count distribution of the uniform distribution conditioned on no collisions is supported on many different count vectors as $s < |\calX|$.
Therefore, the distributions $\calD_i^s$ and a uniform sample of $s$ elements, both conditioned on no collisions, are not equivalent.

As an alternative counter-argument, one can consider the case where $s=2$ and directly compute the probabilities conditional on no collisions of both samples lying in $T_{2i-1}$ (where some extreme elements have probability $0$) as opposed to lying in $T_{2i}$ (where some extreme elements have probability $2/|\calX|$), the latter of these probabilities being slightly larger due to sampling without replacement. By symmetry, this is not the case for the uniform distribution. 



The error occurs in the first two sentences of Claim 26 in \cite{bousquet2019optimal} which extends the claim from $s=1$ to general $s$.

The introduction of sub-intervals containing a single extreme element in our construction alleviates this problem by introducing independence.
Each sub-interval has probability exactly $\ell/|\calX|$, so conditioning on no collisions, the collection of sub-intervals that we sample from is a uniformly random size $s$ subset of all sub-intervals. Moreover, the random choices of the functions $f^j,g^j$ for $j=1,\dots,k$ ensure that elements sampled within the intervals are independent and uniformly random within these intervals. Sampling from the uniform distribution, conditioned on no duplicate sub-intervals, can be simulated by first sampling $s$ sub-intervals without replacement, and then independently sampling a uniformly random element within each of these: this exact same probabilistic process generates a sample from $D_i^s$ conditioned on no sub-interval collisions.

We have contacted the authors, and they have corrected the issue in an updated manuscript.
\end{remark}

\section{Expected Approximation Upper Bound}\label{sec:expected}




The main theorem of this section is the following. 

\thmEVUB*

Recall from~\Cref{sec:tech_ev} that in order to prove the theorem, we are interested in solving the following system of linear equations (we can then later normalize the $p_i$ to sum to $1$).

\begin{equation}\label{eq:linear_system}
\begin{bmatrix}
    0 & 1 + \frac{W(H_2)}{W(H_1)} & \ldots & 1 + \frac{W(H_n)}{W(H_1)} \\
    1 + \frac{W(H_1)}{W(H_2)} & 0 & \ldots & 1 + \frac{W(H_n)}{W(H_2)} \\
    \vdots & \vdots & \vdots & \vdots \\
    1 + \frac{W(H_1)}{W(H_n)} & 1 + \frac{W(H_2)}{W(H_n)} & \ldots & 0
\end{bmatrix}
\begin{bmatrix}
    p_1 \\
    p_2 \\
    \vdots \\
   p_n
\end{bmatrix}
=
\begin{bmatrix}
    1 \\
    1 \\
    \vdots \\
    1
\end{bmatrix},
\end{equation}
Denote by $A$ the $n\times n$ matrix in the above equation. Each of the constraints impose that the approximation ratio upper bound should be the same no matter the choice of $i^*$. For now, we assume that we know the (max) semi-distances and hence $A$ exactly, but we will get rid of that assumption with~\Cref{lem:step3} later. The below lemma gives the solution to the above linear system.

\begin{restatable}{lemma}{lemevcalc}\label{lem:explicit_p}
Define the constants 
\[ C = \sum_{i = 1}^n \frac{1}{W(H_i)}, \quad D = \sum_{i=1}^n W(H_i),\]
and let 
 \begin{equation}\label{eq:probs}
    p_i = \frac{D/W(H_i) - (n-2)}{DC - n(n-2)}
\end{equation} for all $1\le i \le n$. Let $p \in \R^n$ be the vector with coordinates $p_i$.  We have $\sum_i p_i = 1$ and 
\[Ap = \left( \frac{DC - (n-2)^2}{DC - n(n-2)} \right) \cdot \mathbf{1}, \]
where $\mathbf{1}$ is the vector of all ones and 
\[\frac{DC - (n-2)^2}{DC - n(n-2)}  \le 2 - \frac{2}n. \]
\end{restatable}
We prove the above lemma in \Cref{sec:explicitp}. We briefly note that this lemma mostly serves to give intuition of \emph{where} our final probability values are coming from. We could have alternatively proved \Cref{thm:general_n_ev} (which bounds the expected approximation guarantee with this choice of $p_i$) by directly introducing the probabilities $p_i$ from \Cref{lem:explicit_p} and explicitly calculating the expected value under this distribution. Note that for showing our upper bound, we do not require the solution for the vector $p$ to be unique.

Before we prove \Cref{thm:ev_ub}, we build some intuition, starting from the $n=3$ and $n=4$ case of the linear system \Cref{eq:linear_system}. However, the following intuition can be skipped. 

\subsection{Gaining Intuition via Small $n$ Examples}\label{sec:intuition_ev}

\paragraph{Solution for $n=3$}
 For $n=3$, we can solve \Cref{eq:linear_system} explicitly, which gives the following solution for $p_i$. The solution is symmetric, so we will display the results for $H_1$:
\begin{equation}
    p_1 = \frac{W(H_2) W(H_3) \p{W(H_2) + W(H_3)}}{\sum_{i \neq j} W(H_i)^2 W(H_j)}.
\end{equation}
Then, assuming $i^* = 3$ (which is an arbitrary choice since all possible $i^*$ values yield the same approximation factor by design),
\begin{align*}
    p_1 \p{1 + \frac{W(H_1)}{W(H_3)}}
    &= \frac{W(H_2) W(H_3) \p{W(H_2) + W(H_3)}}{\sum_{i \neq j} W(H_i)^2 W(H_j)} \p{1 + \frac{W(H_1)}{W(H_3)}} \\
    &= \frac{W(H_2) \p{W(H_1) + W(H_3)} \p{W(H_2) + W(H_3)}}{\sum_{i \neq j} W(H_i)^2 W(H_j)},
\end{align*}
and thus the upper bound on the approximation ratio will be
\begin{align*}
    \frac{\E[]{\dtv(P, H)}}{\OPT} 
    &\leq 1 + \frac{\p{W(H_1) + W(H_2)} \p{W(H_1) + W(H_3)} \p{W(H_2) + W(H_3)}}{\sum_{i \neq j} W(H_i)^2 W(H_j)} \\
    &= 1 + \frac{2W(H_1)W(H_2)W(H_3) + \sum_{i \neq j} W(H_i)^2 W(H_j)}{\sum_{i \neq j} W(H_i)^2 W(H_j)}.
\end{align*}

By Muirhead's inequality,
\begin{equation*}
    6W(H_1)W(H_2)W(H_3) \leq \sum_{i \neq j} W(H_i)^2 W(H_j).
\end{equation*}
Therefore,
\begin{equation*}
    \frac{\E[\br]{\dtv(P, H)}}{\OPT} \leq 1 + \frac{4}{3}.
\end{equation*}

\paragraph{The problem with $n=4$.} Now solving the $n=4$ case of the linear system gives  us the following probabilities:
\begin{equation*}
    p_1 = \frac{W(H_2)W(H_3)W(H_4) \p{-W(H_1) + W(H_2) + W(H_3) + W(H_4)}}{ - 4W(H_1)W(H_2)W(H_3)W(H_4) + \sum_{\text{distinct }i, j, k} W(H_i)^2 W(H_j) W(H_k)},
\end{equation*}
and again in a similar manner we can calculate that the upper bound on the approximation ratio is $ \le 1 + \frac{3}2$, which concurs with our the desired $3-\frac{2}n$ bound.  However, a serious problem is that if $W(H_1) > W(H_2) + W(H_3) + W(H_4)$, the $p_i$'s do not form a probability distribution as they are negative.

Thus, to summarize, two challenges remain to generalize the approach to general $n$:
\begin{enumerate}
    \item Can we get a handle on the solution to the system given in \Cref{eq:linear_system} for general $n$?
    \item How can we ensure that the resulting values of $p_1, \ldots, p_n$ are non-negative and form a valid probability distribution?
\end{enumerate}

\paragraph{Generalizing to Arbitrary $n$}
We handle challenge (1) first. We present an explicit solution to the system given in \Cref{eq:linear_system}, which will be an explicit formula for the probabilities $p_1, \cdots, p_n$ in terms of the max-semi distances of \Cref{def:semi_distances}. Initially, some $p_i$ can be negative and later we will deal with challenge (2) to round these values to be a proper probability distribution. Again, we remark that this section can be safely skipped as it is just for intuition, and the full proof details are given in \Cref{sec:formal_ev_ub_proof}.

\subsection{Proof of~\Cref{lem:explicit_p}}\label{sec:explicitp}
Let $A$ be the matrix of \Cref{eq:linear_system} defined as $A_{ij} = 1 + \frac{W(H_j)}{W(H_i)}$ for $i \ne j$ and $0$ for $i = j$. Our goal is to find probabilities $p = (p_1, \ldots, p_n)$ such that $Ap$ is a constant vector with $\sum_i p_i = 1$. Note that without loss of generality, we can solve for $Ap = \mathbf{1}$ and then normalize the $p_i$'s. 

We now prove~\Cref{lem:explicit_p} providing a solution to the system \Cref{eq:linear_system}, up to scaling. As stated in \Cref{sec:tech_ev}, this lemma mainly serves to give intuition about how that probabilities, values used sample a hypothesis from, are derived. If desired, the proof can be skipped entirely and the reader can instead directly proceed to \Cref{sec:formal_ev_ub_proof}, where we prove \Cref{thm:ev_ub}. The only part of \Cref{lem:explicit_p} that is used is the form of $p_i$ from \eqref{eq:probs}.


\begin{proof}[Proof of~\Cref{lem:explicit_p}]
Let $\mathbf{W}$ to be the vector
\[ \mathbf{W} = (1/W(H_1), \ldots, 1/W(H_n)).\]
Note that $\sum_{i=1}^n p_i = 1$ by construction so we just have to check that $p_i$ solves the system of \Cref{eq:linear_system}. We have
\[ (A \mathbf{1})_i = \sum_{j \ne i} \left( 1 + \frac{W(H_j)}{W(H_i)} \right)= (n-1) + \frac{1}{W(H_i)} \left( D - W(H_i)\right) = n-2 + \frac{D}{W(H_i)}, \]
so 
\begin{equation}\label{eq:relation1}
    A \mathbf{1} = (n-2)\mathbf{1} + D \mathbf{W}.
\end{equation}
We also have
\[ (A\mathbf{W})_i = \sum_{j \ne i}\left( 1 + \frac{W(H_j)}{W(H_i)} \right) \cdot \frac{1}{W(H_j)} = \sum_{j \ne i} \left( \frac{1}{W(H_i)} + \frac{1}{W(H_j)} \right) = \frac{n-2}{W(H_i) } + C,\]
meaning
\begin{equation}\label{eq:relation2}
    A \mathbf{W} = (n-2)\mathbf{W} + C \mathbf{1}.
\end{equation}
Combining equations \eqref{eq:relation1} and \eqref{eq:relation2} gives us
\[ A(D \mathbf{W} - (n-2) \mathbf{1}) = ( DC - (n-2)^2) \mathbf{1}. \]
Since $DC \ge n^2 > (n-2)^2$ by Cauchy-Schwarz, we know that $DC - (n-2)^2 > 0$ and so the vector $p = \frac{D \mathbf{W} - (n-2)}{DC - (n-2)^2}$ indeed satisfies
\[Ap = \left( \frac{DC - (n-2)^2}{DC - n(n-2)} \right) \cdot \mathbf{1}.\]
Finally, we can check that \[ \frac{DC - (n-2)^2}{DC - n(n-2)} = 1 + \frac{2(n-2)}{DC - n(n-2)} \le 1 + \frac{n-2}n,\]
where the last inequality follows from 
\[ DC \ge n^2 \implies DC - n(n-2) \ge 2n \implies \frac{2(n-2)}{DC - n(n-2)} \le \frac{n-2}n,\]
as desired.
\end{proof}


Note that it is entirely possible for some of the $p_i$ to be negative, as in the $n=4$ case. 

\subsection{Proof of the Main \Cref{thm:ev_ub}}\label{sec:formal_ev_ub_proof}
We are now ready to prove our main theorem, \Cref{thm:ev_ub}. There are three steps to the proof. 
\begin{itemize}
    \item Step 1: We show that $\{p_i\}_{i=1}^n$, ignoring the fact that it is not a valid distribution, is able to give a $3-\frac{2}n$ approximation factor.
    \item Step 2: We show how to round $\{p_i\}_{i=1}^n$ to a a valid distribution $\{q_i\}_{i=1}^n$ which maintains the same approximation factor.
    \item Step 3: Lastly, we show how to we can form $\{q_i\}_{i=1}^n$ using the approximated max-semi distances instead, introducing an additive $+\eps$ factor in the bound of \Cref{thm:ev_ub}.
\end{itemize}

We begin with the first step.

\begin{lemma}[Step 1]\label{thm:general_n_ev}
    Define $p_i$ as in Equation \ref{eq:probs}. We have
    \[ 1 + \sum_{i \neq i^*} p_i \p{1 + \frac{W(H_i)}{W(H_{i^*})}} \le 3 - \frac{2}n.\]
\end{lemma}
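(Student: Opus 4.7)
The plan is to leverage the explicit solution from \cref{lem:explicit_p} together with Cauchy–Schwarz. The fundamental observation is that $p$ was engineered so that $Ap$ is a scalar multiple of $\mathbf{1}$, which means the quantity $\sum_{j \neq i^*} p_j\bigl(1 + W(H_j)/W(H_{i^*})\bigr) = (Ap)_{i^*}$ takes the \emph{same value} regardless of which index $i^*$ is the true minimizer. So the bound reduces to computing this single constant and comparing it to $3 - 2/n$.

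First I would extract from the proof of \cref{lem:explicit_p} the identity $A\bigl(D\mathbf{W} - (n-2)\mathbf{1}\bigr) = \bigl(DC - (n-2)^2\bigr)\mathbf{1}$. Since the normalized probabilities from \cref{eq:probs} are $p = \frac{1}{DC - n(n-2)}\bigl(D\mathbf{W} - (n-2)\mathbf{1}\bigr)$, multiplying through gives
\[
    (Ap)_{i^*} \;=\; \frac{DC - (n-2)^2}{DC - n(n-2)},
\]
which is independent of $i^*$ as promised. Before proceeding I would note that $DC - n(n-2) > 0$: by Cauchy–Schwarz, $DC = \bigl(\sum_i W(H_i)\bigr)\bigl(\sum_i 1/W(H_i)\bigr) \geq n^2 > n(n-2)$, so denominators can be cleared without flipping inequalities.

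Next I would do the routine algebraic manipulation: the claim $1 + (Ap)_{i^*} \leq 3 - 2/n$ is equivalent to
\[
    \frac{2DC - 2(n-1)(n-2)}{DC - n(n-2)} \;\leq\; \frac{3n-2}{n},
\]
and cross-multiplying (legal by the positivity check above), collecting terms, and simplifying reduces this to the inequality $(n-2) \cdot DC \geq (n-2) \cdot n^2$. For $n \geq 3$ this is precisely $DC \geq n^2$, which is Cauchy–Schwarz; for $n = 2$ both sides vanish and the bound holds trivially (or directly verifies with $3 - 2/n = 2$). That completes the argument.

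There is no serious obstacle: the only subtlety is verifying the sign of $DC - n(n-2)$ before clearing denominators, and the only external ingredient is Cauchy–Schwarz. The whole proof is a three-line algebraic consequence of \cref{lem:explicit_p}, together with the observation that the tightness of the $3 - 2/n$ factor corresponds exactly to the tightness of Cauchy–Schwarz (equality when all $W(H_i)$ are equal).
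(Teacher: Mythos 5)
Your proposal is correct and, at its core, takes the same approach as the paper: both compute the quantity $\sum_{i \neq i^*} p_i\bigl(1 + W(H_i)/W(H_{i^*})\bigr)$ to be exactly $\frac{DC - (n-2)^2}{DC - n(n-2)}$ and then finish with the Cauchy--Schwarz bound $DC \geq n^2$. The only difference is presentational: the paper re-derives the sum by direct expansion and simplification of $\sum_{i\neq n}(W(H_i)+W(H_n))(D/W(H_i)-(n-2))$, rewriting the result as $1 + \frac{2(n-2)}{DC - n(n-2)}$ and bounding that fraction by $\frac{n-2}{n}$, while you read off $(Ap)_{i^*}$ from the identity $A(D\mathbf{W}-(n-2)\mathbf{1}) = (DC-(n-2)^2)\mathbf{1}$ already established inside the proof of \cref{lem:explicit_p} and then clear denominators. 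Your route is slightly cleaner in that it avoids repeating the telescoping algebra, and you are also careful to note that the denominator $DC - n(n-2)$ is positive (via $DC\geq n^2$) before cross-multiplying, which is a nice bit of hygiene; the paper leaves this sign check implicit. Both proofs are valid, and both tie the tightness of the $3 - 2/n$ constant to equality in Cauchy--Schwarz.
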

\begin{proof}
Without loss of generality, suppose $i^* = n$. Then,
\[ p_i \left( 1 + \frac{W(H_i)}{W(H_n)} \right) = \frac{\left(W(H_i) + W(H_n)\right) \left( \frac{D}{W(H_i)} - (n-2) \right)}{W(H_n) \left( DC - n(n-2) \right)}, \]
which gives
\[
\sum_{i \ne n} p_i \left( 1 + \frac{W(H_i)}{W(H_n)} \right)  = \frac{1}{W(H_n) \left( DC - n(n-2) \right)} \cdot \sum_{i \ne n} \left(W(H_i) + W(H_n)\right) \left( \frac{D}{W(H_i)} - (n-2) \right).\]
Let's calculate the numerator. We have
\begin{align*}
    &\sum_{i \ne n}  \left(W(H_i) + W(H_n)\right) \left( \frac{D}{W(H_i)} - (n-2) \right) \\
    &= D(n-1) + D W(H_n) \left (C - \frac{1}{W(H_n)} \right) - (n-2)(D-W(H_n)) - (n-2)(n-1)W(H_n) \\
    &=  W(H_n)(DC - (n-2)^2).
\end{align*}
Plugging back in, the sum we want has value 
\[ \frac{DC - (n-2)^2}{DC - n(n-2)} = 1 + \frac{2(n-2)}{DC - n(n-2)} \le 1 + \frac{n-2}n,\]
where the last inequality follows from 
\[ DC \ge n^2 \implies DC - n(n-2) \ge 2n \implies \frac{2(n-2)}{DC - n(n-2)} \le \frac{n-2}n,\]
as in the proof of \Cref{lem:explicit_p}.
Thus,
\[ 1 + \sum_{i \ne n}  p_i \left( 1 + \frac{W(H_i)}{W(H_n)} \right) \le 3 - \frac{2}n,\]
as desired.
\end{proof}

Again, the main issue with the setting of the probabilities given in \Cref{eq:probs} is that some $p_i$ values can be negative. We show how to find a set of non-negative probabilities that also give an approximation ratio at most $3-2/n$. First, let's recall key formulas:

\[ p_i(n) = \frac{D/W(H_i) - (n-2)}{DC - n(n-2)}, \quad C = \sum_{i = 1}^n \frac{1}{W(H_i)}, \quad D = \sum_{i=1}^n W(H_i). \]

Suppose without loss of generality that 
\[ W(H_n) \ge W(H_{n-1}) \ge \ldots \ge W(H_1).\]
We call an index $k \in [n]$ good if it satisfies 
\[ \forall j \in [k], \quad (k-3)W(H_j) \le \sum_{i \in [k] \setminus \{j\}} W(H_i), \]
and otherwise, bad. Note that
\begin{itemize}
    \item $k \le 3$ is always good.
    \item If $k$ is bad then $k'$ is also bad for all $k' > k$. This is true since $k$ being bad means there exists a $j \in [k]$ such that 
    \begin{align*}
        (k-3)W(H_j) > \sum_{i \in [k] \setminus \{j\}} W(H_i) 
        &\implies (k-2)W(H_j) > \sum_{i =1}^k W(H_i) \\
        &\implies (k-2)W(H_{k+1}) > \sum_{i = 1}^k W(H_i).
    \end{align*}
\end{itemize}
Now given the values $p_1, \ldots, p_n$ of Equation \ref{eq:probs}, we consider the rounding procedure procedure given in \Cref{fig:rounding}.
\begin{figure}
\begin{mdframed}
\begin{enumerate}
    \item If $n$ is good then all the probabilities are non-negative so simply set $q_i = p_i(n)$ for all $1 \le i \le n$.
    \item Otherwise, consider the largest good $k$. Let $q_1, \ldots, q_k$ be the values $p_1(k), \ldots, p_k(k)$ (probabilities for the $k$ case) and set $q_{k+1} = \ldots = q_n = 0$. Note that we only use the formula for the $k$ case but plug in the fixed $W(H_i)$ values from the $n$ case. 
\end{enumerate}
\end{mdframed}
\caption{The rounding procedure takes values $\{p_i\}_{i=1}^n$ and outputs a valid distribution $\{q_i\}_{i=1}^n$. \label{fig:rounding}}
\end{figure}
Since $k$ is good, we know all $q_i \ge 0$. The goal is to show that our desired approximation ratio under $q$, which is a valid probability distribution, is also at most $3-2/n$.

\begin{lemma}[Step 2]\label{lem:step2}
     Define $q_i$ as in \Cref{fig:rounding}. We have
    \[ 1 + \sum_{i \neq i^*} q_i \p{1 + \frac{W(H_i)}{W(H_{i^*})}} \le 3 - \frac{2}n.\]
\end{lemma}

\begin{proof}

    We consider two cases. If the optimum hypothesis is among the first $k$, then our distribution already achieves approximation ratio $3-2/k \le 3-2/n$ from Theorem \ref{thm:general_n_ev}. So suppose the optimum index is some $i^* > k$. We wish to bound
    \[1+ \sum_{i = 1}^k q_i \left( 1 + \frac{W(H_i)}{W(H_{i^*})} \right). \]

Since $k+1$ is bad, we know that there exists a $j \in [k+1]$ satisfying
\[ (k-2)W(H_j) > \sum_{i \in [k+1] \setminus \{j\}} W(H_i) \implies (k-1)W(H_{k+1}) \ge (k-1) W(H_j) > \sum_{i=1}^{k+1} W(H_i)  \]
and thus
\[(k - 2) W(H_{i^*}) \ge (k-2) W(H_{k+1}) > \sum_{i=1}^k W(H_i) \implies W(H_{i^*}) > \frac{1}{k-2} \sum_{i=1}^k W(H_i). \]
Define $D' = \sum_{i=1}^k W(H_i)$ and similarly $C' = \sum_{i=1}^k 1/W(H_i)$. We have

\begin{align*}
 \sum_{i = 1}^k q_i \left( 1 + \frac{W(H_i)}{W(H_{i^*})} \right) &\le \sum_{i=1}^k q_i \left( 1 + \frac{(k-2) W(H_i)}{\sum_{i=1}^k W(H_i)} \right)  \\
 &= \sum_{i=1}^k \left( \frac{\frac{D'}{W(H_i)}  - (k-2)}{D'C' - k(k-2)} \right) \left( 1 + \frac{(k-2) W(H_i)}{D'} \right)  \\
 &= \frac{1}{D'(D'C' - k(k-2))}\sum_{i=1}^k \left(\frac{D'}{W(H_i)} -(k-2) \right)(D' + (k-2)W(H_i)) \\
 &= \frac{1}{D'(D'C' - k(k-2))} \cdot  \left( (D')^2C'  - (k-2)^2 D'\right) \\
 &= \frac{D'C' - (k-2)^2}{D'C' - k(k-2)} \\
 &=  1 + \frac{2(k-2)}{D'C' - k(k-2)}\\
 &\le 1 + \frac{k-2}k.
\end{align*}
where the last inequality follows as in Theorem \ref{thm:general_n_ev} since $D'C' \ge k^2$. The lemma follows.
\end{proof}

Finally, the third step shows that it suffices to compute the distribution $\{q_i\}_{i=1}^n$ using noisy semi-distances. Towards that end,  consider the values $\{p_i\}_{i=1}^n$ defined in \Cref{eq:probs} and let $\{\tilde{p}_i\}_{i=1}^n$ be the corresponding values where we use the same formula, but instead use approximated max-semi distances satisfying $\forall i, W(H_i) \le \hat{W}(H_i) \le W(H_i) + \eps$. Now run the same rounding procedure as in \Cref{fig:rounding} on $\{\tilde{p}_i\}_{i=1}^n$ to produce $\{\tilde{q}_i\}_{i=1}^n$. Note that this is a valid operation since the rounding procedure described in \Cref{fig:rounding} only depends on the inputted approximate semi-distances. So we can still define a good index $k$ as done before, but with respect to the approximate max semi-distances, and round $\{\tilde{p}_i\}_{i=1}^n$ to a valid distribution $\{\tilde{q}_i\}_{i=1}^n$. The same proof as in \Cref{lem:step2} readily implies that 
\begin{equation}\label{eq:approx_bound}
     1 + \sum_{i \neq i^*} \tilde{q}_i \p{1 + \frac{\hat{W}(H_i)}{\hat{W}(H_{i^*})}} \le 3 - \frac{2}n,
\end{equation}
where note that we are using the approximate $\hat{W}(H_i)$ values.

The challenge of the next lemma is showing that $\tilde{q}_i$ are also a `valid' set of hypothesis to use for the \emph{true} semi-distances (which determines the actual expected value approximation factor). Thus, the equation in \Cref{lem:step3} involves the actual max-semi distance values, and the "noisy" distribution $\{\tilde{q}_i\}_{i=1}^n$.

\begin{lemma}[Step 3]\label{lem:step3}
Consider the distribution $\{\tilde{q}_i\}_{i=1}^n$ defined above. We have 
\[       \sum_{i \neq i^*} \tilde{q}_i \p{W(H_{i^*}) + W(H_i)} \le \left(2 - \frac{2}n \right) \cdot W(H_{i^*}) + 2\eps. \]
\end{lemma}

\begin{proof}
Note that all values  $\{\tilde{q}_i\}_{i=1}^n$ are non-negative by construction. By subtracting $1$, \Cref{eq:approx_bound} implies 

    \[       \sum_{i \neq i^*} \tilde{q}_i \p{\hat{W}(H_{i^*}) + \hat{W}(H_i)} \le \left(2 - \frac{2}n \right) \cdot \hat{W}(H_{i^*}). \]
    Since we can assume (by shifting and adjusting $\eps$ in \Cref{lem:sample-scheffe}) that
    \[ \forall i, W(H_i) \le \hat{W}(H_i) \le W(H_i) + \eps, \]
    we have 
    \[   \sum_{i \neq i^*} \tilde{q}_i \p{W(H_{i^*}) + W(H_i)} \le  \sum_{i \neq i^*} \tilde{q}_i \p{\hat{W}(H_{i^*}) + \hat{W}(H_i)} \]
    and 
    \[ \left(2 - \frac{2}n \right) \cdot \hat{W}(H_{i^*}) \le  \left(2 - \frac{2}n \right) \cdot (W(H_{i^*}) + \eps) \le \left(2 - \frac{2}n \right) \cdot W(H_{i^*}) + 2 \eps.\]
    Combining the above two bounds implies 
    \[   \sum_{i \neq i^*} \tilde{q}_i \p{W(H_{i^*}) + W(H_i)} \le   \left(2 - \frac{2}n \right) \cdot W(H_{i^*}) + 2 \eps, \]
    as desired.
    
\end{proof}

Putting together the three steps completes the proof of \Cref{thm:ev_ub}.

\begin{proof}[Proof of Theorem \ref{thm:ev_ub}]
We use the probabilities $\{\tilde{q}_i\}_{i=1}^n$ from \Cref{lem:step3}. We have (where the expectation is under the draw of $\{\tilde{q}_i\}_{i=1}^n$),

\begin{align*}
    \E[]{\dtv(P, H)} &\leq \OPT + \sum_{i \neq i^*} \tilde{q}_i \p{\semi{i}{i^*} + \semi{i^*}{i}} \\
    &\leq \OPT + \sum_{i \neq i^*} \tilde{q}_i \p{W(H_{i^*}) + W(H_i)}.
\end{align*}
Then using \Cref{lem:step3}, 

\begin{align*}
    \E[]{\dtv(P, H)} &\le \OPT  + \left(2 - \frac{2}n \right) \cdot W(H_{i^*}) + 2\eps\\
    &\le \OPT  + \left(2 - \frac{2}n \right) \cdot \OPT + 2\eps \\
    &=  \left(3 - \frac{2}n \right) \cdot \OPT + 2\eps,
\end{align*}
 as desired, since $\OPT \ge W(H_{i^*})$. The theorem follows by adjusting $\eps$.
\end{proof}

\section{Fast Hypothesis Selection with Moderate Probability}\label{sec:moderateprob}

This section is devoted to presenting an algorithm that achieves a time complexity of $\tilde{O}\left(\frac{n}{\epsilon^2\delta}\right)$. We refer to this as the \textit{moderate failure probability} regime, as the running time is competitive with quadratic-time algorithms only when $\delta$ is not too small (i.e., $\delta \geq 1/n$). A high-level description of our algorithm and our proofs can be found in ~\Cref{sec:technical_overview_moderate}. We recall the main theorem which we will prove in this section.

\thmFAST*

\subsection{Terminology and definitions}

\paragraph{A graph theory terminology:}
For the ease of description, we borrow the language of graph theory to describe our algorithm. 
Consider a fixed threshold parameter $\tb \in [0,1]$. Here, we describe a graph $G_b = \left(U = [n],E\right)$. For every hypothesis $\Hi$ in $\HH$, we have a vertex $u_i$ in the graph. We have a subset of vertices $\bS$ that is initially $U$, but it gradually may be emptied out. For every vertex $u_i \in U$ and $v_j \in V$, if $\tb < \hatwij$, then we set a {\em directed} edge from $u_i$ to $v_j$. Connecting these definition to our high-level description in \Cref{sec:technical_overview_moderate}, $\bS$ is the set of vertices that we have not found an incoming edge to it. When $\bS$ becomes empty, we can return $\perp$ for the semi-distance threshold problem. 

We define the {\em (fractional) out-degree} and {\em in-degree} of the vertices as follows:

\begin{align*}
    \dui & \coloneqq \frac{\abs{\left\{v_j \mid (u_i, v_j) \in E\right\}}}{\abs{\bS}}\,,
    \\
    \dvj & \coloneqq  \frac{\abs{\left\{u_i \mid (u_i, v_j) \in E\right\}}}{\abs{U}}\,. 
\end{align*}
Note that the out-degree of a vertex $u_i$ indicates how prompting it is. In particular, every $u_i \in U$ is $\tau$-prompting with respect to $\bS$ for any $\tau \leq \dui$. 
Clearly, the average degree in both $\bS$ and $U$ is $$\dbar \coloneqq \frac{\abs{E}}{\abs{U}\cdot\abs{\bS}}\,.$$




For a given vertex $u_i$, its degree $\dui$ determines the fraction of vertices in $\bS$ to which it has an edge. An outgoing edge from $u_i$ to a vertex $v_j \in \bS$ constitutes evidence for removing $v_j$ from $\bS$. Therefore, the degree of $u_i$ indicates the fraction of vertices that can be removed from $\bS$ if we spend $O(\abs{V})$ time to scan for all of its neighbors. For efficiency, we ideally want to find a vertex with a high degree to prune $\bS$ as quickly as possible. We refer to such a vertex as a \emph{prompting hypothesis}, which we formally define as follows:

\begin{definition}
Suppose we are given a set of hypotheses $\bS \subset \HH$ and two parameters $\tau, \tb \in [0,1]$. We say a hypothesis $H_i \in \HH$ is \emph{$(\tau, \tb)$-prompting with respect to $\bS$} if, for at least a $\tau$-fraction of the hypotheses $H_j \in \bS$, we have:
$$ \tb < \hatwij\,. $$
When $\tb$ is clear from the context, we write $\tau$-prompting. 
\end{definition}

\subsection{Estimating degrees and average degrees}
It is straightforward to estimate both the average degree and individual vertex out-degrees up to a constant factor. We provide \Cref{algo:avg_degree_estimation} to estimate the average degree and \Cref{algo:degree_estimation} to estimate individual degrees. The performance guarantees for these algorithms are proven in \Cref{lem:deg_estimation}.

\begin{algorithm}[ht]
\caption{Algorithm for estimating average degree}
\label{algo:avg_degree_estimation}
\begin{algorithmic}[1]
\Procedure{\textsc{Estimate-Average-Degree}}{$\gamma$, $\beta$}
    \State $T \gets C \log(1/\beta)/\gamma$
    \State $X \gets 0$
    \For{$i = 1, \ldots, T$}
        \State $(u,v) \gets$ draw a random pair from $U \times \bS$ uniformly.
        \If{there is an edge from $u$ to $v$}
            \State $X_i \gets 1$
        \Else
            \State $X_i \gets 0$
        \EndIf
        \State $X \gets X + X_i$
    \EndFor
    \State \textbf{return} $\frac{X}{T}$
\EndProcedure
\end{algorithmic}
\end{algorithm}

\begin{algorithm}[ht]
\caption{Algorithm for estimating out-degree}
\label{algo:degree_estimation}
\begin{algorithmic}[1]
\Procedure{\textsc{Estimate-Out-Degree}}{$u$, $\gamma$, $\beta$}
    \State $T \gets C \log(1/\beta)/\gamma$
    \State $X \gets 0$
    \For{$i = 1, \ldots, T$}
        \State $v \gets$ draw a random pair from $\bS$ uniformly.
        \If{there is an edge from $u$ to $v$}
            \State $X_i \gets 1$
        \Else
            \State $X_i \gets 0$
        \EndIf
        \State $X \gets X + X_i$
    \EndFor
    \State \textbf{return} $\frac{X}{T}$
\EndProcedure
\end{algorithmic}
\end{algorithm}

\begin{lemma}\label{lem:deg_estimation}
For two given parameters $\beta$ and $\gamma$, there is an algorithm that uses $O(\log(1/\beta)/\gamma)$ many queries to semi-distances and provide an estimate $\hatd$ for the average degree such that the following holds with probability $1-\beta$
\begin{itemize}
    \item If $\dbar \geq \gamma/2$, then $\hatd$ is a constant approximation of $d$. That is, $\hatd \in [\dbar/2, 2\,\dbar]$. 
    \item If we have $\dbar < \gamma/2$ then $\hatd < \gamma$.
\end{itemize}
Similarly, one can estimate the out-degree of a vertex $u_i \in U$ with the same number of queries to the semi-distances:
\begin{itemize}
    \item If $\dui \ge \gamma/2$, then $\hatd$ is a constant approximation to $\dui$: $\hatd \in [\dui/2, 2\, \dui]$.
    \item If $\dui < \gamma/2$, then $\hatd < \gamma$.
\end{itemize}
The identical statement also holds for in-degree estimation.
\end{lemma}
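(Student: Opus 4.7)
The plan is to analyze both \cref{algo:avg_degree_estimation} and \cref{algo:degree_estimation} by a direct multiplicative Chernoff bound, splitting into a large-mean and a small-mean case. Both algorithms return $\hat d = X/T$ where $T = C \log(1/\beta)/\gamma$ and $X$ is a sum of $T$ i.i.d.\ Bernoulli indicators; for the average-degree procedure, each indicator has mean $\dbar$ (by the uniform draw from $U \times \bS$), and for the out-degree procedure at a fixed query vertex $u$, each indicator has mean $\dui$ (by the uniform draw from $\bS$). The query complexity $T = O(\log(1/\beta)/\gamma)$ is immediate from inspection, so the only real work is the concentration analysis, which is essentially the same in both cases.

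Let $\mu$ denote the true quantity being estimated (either $\dbar$ or $\dui$), so $\E{X} = T\mu$. First I would handle the large-mean regime $\mu \ge \gamma/2$. Here the two-sided multiplicative Chernoff bound gives
\[
\Pr\!\left[\,\abs{X/T - \mu} > \mu/2\,\right] \le 2 \exp(-T\mu/12),
\]
and since $T\mu \ge (C/2)\log(1/\beta)$, taking $C$ a sufficiently large absolute constant makes this at most $\beta$. On this event, $\hat d \in [\mu/2, 2\mu]$, establishing the first bullet.

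Next I would handle the small-mean regime $\mu < \gamma/2$. Here no multiplicative bound can hope to approximate $\mu$, but we only need the one-sided guarantee $\hat d < \gamma$, i.e., $X < T\gamma = C \log(1/\beta)$. Since $T\gamma \ge 2\,\E{X}$, the upper-tail multiplicative Chernoff bound for deviation factor $1 + \delta \ge 2$,
\[
\Pr[\,X \ge (1+\delta)\,\E{X}\,] \le \exp(-\delta\,\E{X}/3),
\]
yields $\Pr[X \ge T\gamma] \le \exp(-(T\gamma - \E{X})/3) \le \exp(-T\gamma/6) \le \beta$ once $C$ is sufficiently large. Combining both cases gives the claim for the average-degree algorithm, and the argument ports verbatim to the out-degree and in-degree procedures since the underlying indicators are again Bernoullis whose expectation equals the target degree.

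The main technical point, rather than a real obstacle, is the choice to apply \emph{two} different forms of Chernoff depending on the regime: a relative-error bound handles the large-mean case but degrades as $\mu \to 0$, which is precisely why the guarantee in the small-mean case is stated as an absolute upper bound rather than a multiplicative approximation. No other subtleties arise because the samples in both algorithms are independent and uniform, so the indicators form an honest i.i.d.\ Bernoulli sum and no more delicate tool (e.g., Bernstein or a martingale bound) is needed.
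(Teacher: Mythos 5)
Your proposal is correct and follows essentially the same route as the paper: both apply a two-sided multiplicative Chernoff bound in the regime $\mu \ge \gamma/2$ and a one-sided upper-tail Chernoff bound in the regime $\mu < \gamma/2$, then observe the argument ports verbatim to out-degree and in-degree estimation. Your case-2 computation is a bit more explicit about the deviation factor $1+\delta$, but this is the same calculation the paper carries out.
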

\begin{proof}
Let $T \ge C \log(1/\beta)/\gamma$ for a sufficiently large constant $C \ge 1$ chosen later. Let $\hatd = \frac{1}T \sum_{i = 1}^T X_i$ be the random variable where each $X_i$ is the indicator for a uniformly chosen pair $(u, v)$ for $u \in U, v \in V$ being a directed edge in $G$. Note that $\E{X_i} = \dbar$ by definition and computing a single $X_i$ takes one semi-distance query.

In the first case where $\dbar \ge \gamma/2$, the Chernoff bound gives us
\[\Pr{|\hatd - \dbar| \ge \dbar/2 } \le 2e^{-\frac{T \dbar}{12}},\]
which implies that if as long as $T = \Omega(\log(1/\beta)/\dbar)$, then $\hatd$ is within a constant factor of $\dbar$ with probability $1-\beta$. Since $\dbar \ge \gamma/2$, we have $1/\gamma \ge 1/\dbar$, so our choice of $T$ is valid. 

In the second case, the upper tail of the Chernoff bound gives
\[
\Pr{\hatd \ge \gamma } \le \Pr{\hatd \ge \left( 1+ \left(\frac{\gamma}{\dbar} -1\right) \right) \cdot \dbar} \le e^{- \Theta( \dbar \cdot T \cdot \frac{\gamma}{\dbar}) \le e^{-\Theta(T \cdot \gamma)}},
\]
so again our choice of $T$ implies that with probability $1-\beta$, $\dbar < \gamma$ in the case where $\dbar = \E{\hatd} \le \gamma/2$. Putting together the two cases completes the proof.

Note that the same analysis also applies to estimating the fractional out-degree or fractional in-degree of vertices in $U$ and $V$ respectively, with the only change being that the expected value is different. 
\end{proof}

\subsection{The Semi-Distance Threshold Problem}\label{sec:binary_search}
We will define the following key subroutine, parameterized by a threshold $\tb$.
\begin{definition}[Semi-Distance Threshold Problem]\label{def:threshold-problem}
The input to \emph{semi-distance threshold} problem are parameters $\tb, \delta \in [0,1]$, hypotheses $\HH$, and samples from $P$.
The valid outputs are:
\begin{enumerate}[label=(\alph*)]
    \item $\perp$ if, for all $j \in [n]$, there exists $i \in [n]$ s.t. $\hsemi{i}{j} > b$.
    \item $H_j$ for any $j \in [n]$ where $\hsemi{i^*}{j} \leq b$.
\end{enumerate}
An algorithm solving this problem must produce a valid output with probability $1-\delta$.
\end{definition}

We will now describe how to efficiently solve the hypothesis selection problem given access to a subroutine for the semi-distance threshold problem via binary search.
Then, the rest of this section will be devoted to designing an efficient algorithm for this problem.

\begin{theorem}\label{thm:bin-search}
Let $\calA(b)$ be an algorithm for the semi-distance threshold problem with variable parameter $b$ and fixed inputs $\eps, \delta' = \Theta\p{\frac{\delta}{\log(1/\eps)}}, \HH$ and $s=O\p{\frac{\log n + \log(1/\delta) + \log\log(1/\eps)}{\eps^2}}$ samples from $P$.
Let $T \geq 1$ be an upper bound on the runtime of $\calA(b)$ for any $b \in [0,1]$.
Then, there exists an algorithm which solves the hypothesis selection problem with approximation factor $3$, additive error $\eps$, success probability $1 - \delta$, and runtime $O\p{T \log(1/\eps)}$.
\end{theorem}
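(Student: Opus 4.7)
The plan is to reduce hypothesis selection to $O(\log(1/\eps))$ invocations of $\calA$ via a binary search on the threshold $b$. The crucial observation is that the two extreme values of $b$ force opposite outcomes: at $b=1$, clause (a) of \cref{def:threshold-problem} requires some $\hsemi{i}{j} > 1$, which is impossible since all semi-distance estimates lie in $[0,1]$, so $\calA(1)$ must return some hypothesis $H_{j_0}$; at $b=0$, either $\calA(0) = \perp$, or $\calA(0) = H_{j_0}$ with $\hsemi{i^*}{j_0} \le 0$, in which case I would return $H_{j_0}$ immediately and be done (since $\semi{i^*}{j_0} \le \eps$ yields a $3$-approximation directly via \cref{prop:semidist-approx}). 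Thus I may assume WLOG that $\calA(0) = \perp$ and $\calA(1) \ne \perp$.

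Next I would discretize to the grid $B = \{0, \eps, 2\eps, \ldots, 1\}$ and maintain an invariant pair $(l, r) \in B \times B$ with $\calA(l) = \perp$ and $\calA(r) \ne \perp$, initialized to $(0, 1)$. At each step I query the midpoint $m$ of $l$ and $r$ rounded to the grid, and set $l \leftarrow m$ if $\calA(m) = \perp$ or $r \leftarrow m$ otherwise. This preserves the invariant after every step regardless of how $\calA$ behaves at untouched thresholds, and after $O(\log(1/\eps))$ iterations the interval shrinks to $r - l \le \eps$. I would then set $b_0 = l$ and output $H_{j_0}$, where $j_0$ is the hypothesis returned by the final call to $\calA(r)$.

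For correctness, I would condition on all $O(\log(1/\eps))$ invocations of $\calA$ succeeding, which happens with probability $1 - O(\delta' \log(1/\eps)) = 1 - \delta$ by the choice $\delta' = \Theta(\delta/\log(1/\eps))$ and a union bound. From the validity of $\calA(b_0) = \perp$ applied to $j = i^*$, there exists some $i$ with $\hsemi{i}{i^*} > b_0$; combined with the semi-distance estimates being accurate to $\pm\eps$ (guaranteed by the sample size $s$ via \cref{lem:sample-scheffe}, with its failure folded into $\delta$), \cref{prop:semidist-underest} yields $\OPT \ge \semi{i}{i^*} \ge b_0 - \eps$. From the validity of $\calA(r) = H_{j_0}$ with $r \le b_0 + \eps$, I have $\hsemi{i^*}{j_0} \le b_0 + \eps$, hence $\semi{i^*}{j_0} \le b_0 + 2\eps \le \OPT + 3\eps$. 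Applying \cref{prop:semidist-approx} gives $\dtv(P, H_{j_0}) \le 2\OPT + \semi{i^*}{j_0} \le 3\OPT + 3\eps$; a constant rescaling of $\eps$ at the start of the algorithm then matches the exact bound in the theorem. The total runtime is $O(T \log(1/\eps))$ plus lower-order bookkeeping.

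The main obstacle I anticipate is that the output of $\calA$ is \emph{not} monotone in $b$ in general---shrinking $b$ could in principle flip the output in either direction, so a textbook binary search on a monotone predicate does not apply. The invariant-based search resolves this: it only records confirmed behavior at the two endpoints, and each new query commits one endpoint to the midpoint without ever requiring a comparison of outcomes across different thresholds. A secondary subtlety is sample reuse: the same $s$ samples drive all $O(\log(1/\eps))$ calls, which is exactly why the sample complexity carries the $\log\log(1/\eps)$ term and why $\delta'$ is chosen to be $\Theta(\delta/\log(1/\eps))$ so that the overall union bound still yields failure probability $\delta$.
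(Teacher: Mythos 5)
Your proof is correct and follows essentially the same approach as the paper: query the two extremes $b=0$ and $b=1$ to establish boundary behavior, then run an invariant-based binary search maintaining a pair $(l,r)$ with $\calA(l)=\perp$ and $\calA(r)\ne\perp$, and extract the $3\OPT + O(\eps)$ guarantee from the terminal pair via Propositions~\ref{prop:semidist-underest} and~\ref{prop:semidist-approx}, with a union bound over $O(\log(1/\eps))$ calls and a constant rescaling of $\eps$ at the start. The only cosmetic difference is that you discretize to an $\eps$-grid while the paper halves the real interval until its length drops below $\eps/3$; both achieve the same effect and arrive at the same bound.
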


\begin{proof}
Let $\eps' = \eps/3$. By the choice of $s$ and \Cref{lem:sample-scheffe}, $\abs{\hsemi{i}{j} - \semi{i}{j}} \leq \eps'$ for all $i, j \in [n]$ with probability $1 - \delta/2$.
Consider the following hypothesis selection algorithm.
We will assume that all calls to $\calA$ produce a valid output, though it may fail with probability $\delta'$, we will consider the overall failure probability at the end of the proof.

Query $\calA(0)$ and $\calA(1)$.
If $\calA(0) = H_j$ for some $j$, then the algorithm may return $H_j$.
By \Cref{prop:semidist-approx}, $\dtv(H_j, P) \leq 2\OPT + \semi{i^*}{j} \leq 2\OPT + \eps'$, so $H_j$ is a valid solution.
Note that all semi-distances are bounded in $[0,1]$, so $\perp$ is not a valid output for $\calA(1)$.
We will proceed assuming $\calA(0) = \perp$ and $\calA(1) \neq \perp$.

Let $k_0 = 0$ and $\ell_0 =1$ and consider the following binary search.
For any $t > 0$, $\ell_t - k_t > \eps'$, query the midpoint threshold $\calA\p{\frac{k_t+\ell_t}{2}}$.
If the answer is $\perp$, set $k_{t+1} = \frac{k_t+\ell_t}{2}$ and $\ell_{t+1} = \ell_t$.
Otherwise, set $k_{t+1}=k_t$ and $\ell_{t+1} = \frac{k_t+\ell_t}{2}$, maintaining the invariant that $\calA(k_t) = \perp$ and $\calA(\ell_t) \neq \perp$.
After $t' = \Theta(\log(1/\eps))$ iterations, $\ell_{t'} - k_{t'} \leq \eps'$.
At this point, return $\calA(\ell_{t'})$.

By the correctness of $\calA(k_{t'})$, for all $j \in [n]$, there exists some $i$ such that $\hsemi{i}{j} > k_{t'} > \ell_{t'} - \eps'$.
By the sampling error, $\semi{i}{j} > \ell_{t'} - 2\eps'$.
In particular, this holds for $j = i^*$.
By the underestimation property of semi-distances from \Cref{prop:semidist-underest}, $\OPT > \ell_{t'} - 2\eps'$.

Consider the output hypothesis $H_j = \calA(\ell_{t'})$.
It must be the case that $\hsemi{i^*}{j} \leq \ell_{t'}$, implying that $\semi{i^*}{j} \leq \ell_{t'} + \eps'$.
By \Cref{prop:semidist-approx}, $H_j$ is a valid solution:
\begin{equation*}
    \dtv(H_j, P)
    \leq 2\OPT + \ell_{t'} + \eps'
    \leq 3\OPT + 3\eps'
    = 3\OPT + \eps.
\end{equation*}

It is clear that the total runtime of the algorithm is $O(T\log(1/\eps))$.
Furthermore, by the choice of $\delta'$, the probability that any call to $\calA(b)$ fails is at most $\delta/2$.
Union bounding with the failure probability of the semi-distance estimates, the overall failure probability of the algorithm is at most $\delta$.
\end{proof}

\subsection{Finding a hypothesis with average degree promptingness}\label{sec:ave-deg-prompting}
The goal of this section is to \emph{find} a vertex $u_i \in U$ with $\dui \ge \Omega(\dbar)$ in roughly $\tilde{O}(1/\dbar)$ semi-distance queries (see \Cref{lem:large_degree} for the precise bound). To do so, we first partition the degrees in $U$ in geometrically decreasing values based on their fractional degrees. This is similar in spirit to "level sets" arguments commonly used in sublinear algorithms, e.g. see the description of the "work investment strategy" of \cite{berman2014lp}.

More formally, recall that $\dbar$ is the average fractional out-degree of the vertices in $U$. Note the prior subsection helps us test the value of $\dbar$, which will be crucially used in the final algorithm of the next subsection. 

Now consider a partition of the interval $[0,1]$ into $k \coloneqq \ceil{\log_2 \left(1/\dbar \right)} + 2$ 
intervals $[0,1] = \bigcup_{r=1}^k\{I_r\}$ as follows:

\begin{align*}
I_r \coloneqq
\begin{cases}
  \left(2^{-r},\,2^{-r+1}\right]  & \quad\quad r \in \{1,\dots,k-1\}\,,
    \vspace{3mm}\\
  \left[0,\,2^{-r+1}\right]         & \quad\quad r = k.
\end{cases}
\end{align*}

The intervals naturally partition the vertices as $U = \bigcup_{r=1}^k  \{U_r\}$, via
$$U_r \coloneqq \left\{ u_i \in U \mid \dui \in I_r \right\}\,.$$
Note we may not explicitly know this partition, but all we require is that such a partition exists. The first lemma states that some partition has relatively large fraction of all the vertices.

\begin{lemma}\label{lem:r*}
There exists an $r^* \in \{1, \cdots, k-1\}$ satisfying
\[ \frac{|U_{r^*}|}{|U|} \ge \frac{ 2^{r^*} \cdot \dbar }{4 \cdot \log_2(1/\dbar)}. \]
\end{lemma}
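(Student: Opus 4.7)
My plan is a standard level-set / averaging argument, partitioning the total out-degree mass across the buckets $U_1, \ldots, U_k$ and pigeonholing to one of the first $k-1$ buckets.

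First, I would write the total out-degree as
\[
    |U| \cdot \dbar \;=\; \sum_{u_i \in U} \dui \;=\; \sum_{r=1}^{k} \sum_{u_i \in U_r} \dui,
\]
and bound each inner sum from above using the definition of $I_r$. For $r \in \{1,\dots,k-1\}$ every vertex $u_i \in U_r$ satisfies $\dui \le 2^{-r+1}$, so that bucket contributes at most $|U_r| \cdot 2^{-r+1}$ to the total. The point of isolating bucket $k$ is that its contribution is negligible: by the choice $k = \lceil \log_2(1/\dbar)\rceil + 2$, every $u_i \in U_k$ has $\dui \le 2^{-k+1} \le \dbar/2$, so $U_k$ contributes at most $|U|\cdot \dbar/2$ to the total.

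Next, subtracting the bucket-$k$ contribution from $|U|\cdot \dbar$ gives
\[
    \sum_{r=1}^{k-1} |U_r| \cdot 2^{-r+1} \;\ge\; |U| \cdot \dbar - |U|\cdot \tfrac{\dbar}{2} \;=\; \tfrac{|U|\cdot \dbar}{2}.
\]
Since this is a sum of $k-1$ nonnegative terms, by averaging there exists $r^* \in \{1,\ldots,k-1\}$ with
\[
    |U_{r^*}| \cdot 2^{-r^*+1} \;\ge\; \frac{|U|\cdot \dbar}{2(k-1)},
\]
and rearranging yields $|U_{r^*}|/|U| \ge 2^{r^*} \dbar / (4(k-1))$.

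Finally, I would convert $k-1$ into $\log_2(1/\dbar)$. Since the claim is only nontrivial when $\dbar$ is small (if $\dbar \ge 1/4$ say, then $2^{r^*}\dbar/(4\log_2(1/\dbar))$ is easily dominated by the trivial bound $|U_1|/|U|$ from the first bucket after a small re-indexing), I may assume $\dbar$ is small enough that $k-1 = \lceil \log_2(1/\dbar)\rceil + 1 \le \log_2(1/\dbar)$ after absorbing the additive constant into the leading factor of $4$ in the denominator. The only subtle point — and what I expect to be the main (minor) obstacle — is precisely this bookkeeping around $k-1$ versus $\log_2(1/\dbar)$ and the boundary case when $\dbar$ is not especially small; everything else is routine. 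The argument does not use anything beyond the definition of $\dbar$ and the fact that the intervals $I_r$ cover $[0,1]$.
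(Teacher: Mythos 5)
Your proof follows exactly the same level-set plus averaging argument as the paper: decompose $|U|\cdot\dbar = \sum_{r}\sum_{u\in U_r}\dui$, bound each bucket $r<k$ by $|U_r|\cdot 2^{-r+1}$, show bucket $k$ contributes at most $|U|\cdot\dbar/2$ via the choice $k=\lceil\log_2(1/\dbar)\rceil+2$, and pigeonhole over the remaining $k-1$ buckets to get $|U_{r^*}|/|U| \ge 2^{r^*}\dbar/(4(k-1))$. This matches the paper's proof step for step, so the approach is the same.

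You correctly identify that the conversion from $k-1$ to $\log_2(1/\dbar)$ is the only subtle point, but your proposed resolution of the boundary case is wrong. You write that if $\dbar\ge 1/4$ the stated bound ``is easily dominated by the trivial bound $|U_1|/|U|$.'' This is not so: as $\dbar\to 1$, the term $\log_2(1/\dbar)\to 0$, and the right-hand side $2^{r^*}\dbar/(4\log_2(1/\dbar))$ blows up past $1$ for every admissible $r^*$, so the lemma as literally stated is false (e.g., try $\dbar=0.9$, where both $r^*=1$ and $r^*=2$ give a lower bound exceeding $1$). Moreover, since $k-1 = \lceil\log_2(1/\dbar)\rceil + 1 > \log_2(1/\dbar)$ always, the derived bound with denominator $4(k-1)$ is strictly weaker than the stated bound with denominator $4\log_2(1/\dbar)$, so there is no way to ``absorb the additive constant into the leading factor of $4$'' literally. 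The paper's own proof hand-waves this same step (``the lemma statement follows by rearranging and using our setting of $k$''), and the inconsistency is harmless downstream because the caller (\cref{lem:large_degree}) only uses the bound with a generous constant ($100$ in place of $4$) and in a regime where $\dbar$ is bounded away from $1$. The clean fix is to state the lemma with denominator $4(k-1)$, or equivalently $4(\lceil\log_2(1/\dbar)\rceil+1)$, rather than $4\log_2(1/\dbar)$; your proof as written establishes exactly that and is otherwise complete.
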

\begin{proof}
    We can write the average degree $\dbar$, based on the contribution of the vertices in each of these subsets $U_r$ as follows:

\begin{align*}
    \abs{U} \cdot \dbar & = \sum_{u_i \in U} \dui\\  
    &= \sum_{r=1}^k \sum_{u_i \in U_r} \dui
    \\ & \leq \sum_{r=1}^k \abs{U_r} \cdot 2^{-r+1}\\
    &\leq |U| \cdot 2^{-(k-1)} + \sum_{r=1}^{k-1} \abs{U_r} \cdot 2^{-r+1}
    \,.
\end{align*}
We can upper bound the first term by $|U| \cdot \dbar/2$ using the definition of $k$:

\begin{align*}
	\frac{1}{\dbar} \leq  2^{\ceil{\log_2 \left(1/\dbar \right)}} \le 2^{k-2} \quad \Longleftrightarrow \quad \abs{U} \cdot 2^{-(k-1)} \leq \frac{\abs{U} \cdot \dbar}{2}.
\end{align*}

Combining the previous two equations, we have:

$$\frac{\abs{U} \cdot  \dbar}{2} \leq \sum_{r=1}^{k-1} \abs{U_r} \cdot 2^{-r+1} \le (k-1) \cdot \max_{ 1\le r \le k-1} |U_r| \cdot 2^{-r+1}.$$
Thus, there exists an $r^*$ in $\{1, 2, \ldots, k-1\}$ satisfying
$$\frac{\abs{U} \cdot \dbar}{2\,(k-1)}\leq \abs{U_{r^*}} \cdot 2^{-r^* + 1},$$
and the lemma statement follows by rearranging and using our setting of $k$.
\end{proof}

Now we can prove the main lemma of the subsection, which gives a simple sampling algorithm which, combined with \Cref{lem:deg_estimation}, can output a vertex with out-degree comparable to $\dbar$ in approximately $1/\dbar$ semi-distance queries.

\begin{lemma}\label{lem:large_degree}
    Suppose $\dbar \ge \gamma$ for some parameter $\gamma \in (0, 1)$. There exists an algorithm that uses $  O\left( \frac{\log\left(\frac{1}{\beta\gamma  } \right) \log^2\left(\frac{1}\gamma\right)}{\gamma} \right)$ many queries to semi-distances and outputs a vertex $u_i \in U$ such that $\dui \ge \frac{\gamma}{1000}$ with probability at least $1-\beta$.
\end{lemma}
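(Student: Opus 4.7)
The plan is to guess the level $r^{*}$ from \cref{lem:r*} and, for each guess $r \in \{1,\dots,k-1\}$, use uniform sampling from $U$ together with the degree estimator of \cref{lem:deg_estimation} to locate a vertex whose out-degree is about $2^{-r}$. Concretely, for each $r$ from $1$ to $k-1$, repeat $t_r := \Theta\!\bigl(\log(1/\beta')\log(1/\gamma)/(2^r\gamma)\bigr)$ times the following: draw $u$ uniformly from $U$ and call $\textsc{Estimate-Out-Degree}(u,\beta',2^{-r})$; if the returned value is at least $2^{-r-1}$, output $u$ and halt. Here $\beta' = \beta/\poly(1/\gamma)$ is chosen small enough to union-bound over all estimator calls.

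For correctness, first note that whenever the algorithm outputs a vertex $u$, the contrapositive of the second clause of \cref{lem:deg_estimation} (with estimator parameter $2^{-r}$, so rejection threshold $2^{-r}$) forces the true out-degree to satisfy $\dui \geq 2^{-r-1}$; because $r\leq k-1 \leq \log_2(1/\gamma)+2$, this yields $\dui \geq \gamma/8 \geq \gamma/1000$, as required. Second, by \cref{lem:r*} there exists $r^{*}\in\{1,\dots,k-1\}$ with $|U_{r^{*}}|/|U|\geq \frac{2^{r^{*}}\gamma}{4\log_2(1/\gamma)}$. If the loop reaches $r=r^{*}$ without having already output, the probability that none of the $t_{r^{*}}$ samples lands in $U_{r^{*}}$ is at most
\[
\left(1 - \frac{2^{r^{*}}\gamma}{4\log_2(1/\gamma)}\right)^{t_{r^{*}}} \leq \beta/2
\]
by the choice of $t_{r^{*}}$. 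Any such sample $u\in U_{r^{*}}$ has $\dui>2^{-r^{*}}$, so the first clause of \cref{lem:deg_estimation} guarantees that the estimator returns a value in $[\dui/2,2\dui]\supseteq [2^{-r^{*}-1},\,\infty)$, triggering the output. Union-bounding the sampling failure and all estimator failures (controlled by $\beta'$) gives overall failure at most $\beta$.

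For the query bound, each iteration of the inner loop uses $O(\log(1/\beta')\cdot 2^{r})$ queries via \cref{lem:deg_estimation}, so guess $r$ costs $t_r\cdot O(\log(1/\beta')\cdot 2^{r}) = O\!\bigl(\log^{2}(1/\beta')\log(1/\gamma)/\gamma\bigr)$. Summing over the $k-1=O(\log(1/\gamma))$ guesses and substituting $\log(1/\beta')=O(\log(1/(\beta\gamma)))$ gives the claimed bound up to constants and low-order logarithmic factors.

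The only nontrivial part is the bookkeeping: one must choose $\beta'$ and the constant in $t_r$ jointly so that (i) the union bound over the $\sum_r t_r$ estimator calls still yields total failure probability at most $\beta$, (ii) the acceptance threshold $2^{-r-1}$ is both provably passed by any sampled $u\in U_{r^{*}}$ and not spuriously passed by low-degree vertices, and (iii) the final constant works out (which is why the statement is loosened from $\gamma/8$ to $\gamma/1000$). No step is technically hard beyond these routine choices.
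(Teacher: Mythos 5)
Your high-level plan is exactly the paper's: loop over guesses $r\in\{1,\dots,k-1\}$ of the level $r^*$ from \cref{lem:r*}, sample $\tilde O\bigl(1/(2^r\gamma)\bigr)$ vertices, estimate each out-degree via \cref{lem:deg_estimation}, and accept the first one whose estimate clears a threshold of order $2^{-r}$. The query-complexity accounting also matches the paper up to low-order $\log$ factors.

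However, your soundness step as written does not follow from \cref{lem:deg_estimation}. You call the estimator with parameter $\gamma_{\text{param}}=2^{-r}$ and accept when $\hat d_{u_i}\geq 2^{-r-1}$, but the second clause of \cref{lem:deg_estimation} only says: if $d_{u_i}<\gamma_{\text{param}}/2=2^{-r-1}$ then $\hat d_{u_i}<\gamma_{\text{param}}=2^{-r}$. Its contrapositive, ``$\hat d_{u_i}\geq 2^{-r}$ implies $d_{u_i}\geq 2^{-r-1}$,'' governs the threshold $2^{-r}$, not your acceptance threshold $2^{-r-1}$. The lemma is silent about the range $\hat d_{u_i}\in[2^{-r-1},2^{-r})$, so a vertex with very small out-degree could, for all the stated guarantee says, land there and be spuriously accepted. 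On the other hand, if you raise the acceptance threshold to $2^{-r}$ to fix soundness, completeness breaks: a vertex $u\in U_{r^*}$ has $d_{u}>2^{-r^*}$, and clause 1 only guarantees $\hat d_{u}\geq d_u/2\geq 2^{-r^*-1}$, which may fall below $2^{-r^*}$. The paper resolves this tension by calling the estimator with the strictly smaller parameter $\gamma_r=2^{-r}/100$: then a vertex in $U_{r^*}$ is guaranteed an estimate $\geq 50\gamma_{r^*}$ (well above the acceptance threshold $\gamma_{r^*}$), while any accepted vertex has $\hat d\geq\gamma_r$ and hence $d\geq\gamma_r/2=2^{-r}/200\geq\gamma/800$, which is where the slack constant $1000$ in the lemma statement comes from. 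So the fix is to decouple the estimator's internal threshold from the ``true'' scale $2^{-r}$ by a constant factor; with your current choice the two are conflated and the contrapositive you invoke does not cover the accepting event.
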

\begin{proof}
Consider the following simple algorithm which loops over $r = 1, \ldots, k-1$. 
\begin{enumerate}
    \item Set $T_r = \frac{10^4 \log_2(1/\gamma) \log(100/\beta)}{2^{r} \gamma}$ and sample $T_r$ vertices $u_1, \cdots, u_{T_r} \in U$ uniformly at random. 
    \item For $1 \le i \le T_r$, let $X_{u_i}$ be the estimate of the outdegree $\dui$ given by \Cref{lem:deg_estimation} with failure probability $\beta/(100 k T_r)$ and parameter $\gamma_r = 2^{-r}/100$ (in the input to \Cref{lem:deg_estimation}).
    \item If $X_{u_i} > \gamma_r,$ then the algorithm returns $u_i$ and terminates.
\end{enumerate}
(If the algorithm has not terminated at this stage, then we simply return an arbitrary vertex in $U$).

First we bound the total number of semi-distance queries. In the $r$th loop, we make
\[ O\left( T_r \cdot \log\left(\frac{kT_r}\beta\right) \cdot 2^r\right) = O\left( \frac{\log\left(\frac{1}{\beta \gamma} \right) \log\left(\frac{1}\gamma \right)}{\gamma} \right)\]
semi-distance queries, using the bound from \Cref{lem:deg_estimation} and $\dbar \ge \gamma$. Summing across $k = O(\log(1/\dbar)) = O(\log(1/\gamma))$ iterations, we can can upper bound total number of semi-distance queries by
\[ O\left( \frac{\log\left(\frac{1}{\beta\gamma  } \right) \log^2\left(\frac{1}\gamma\right)}{\gamma} \right), \]
as claimed.

Now we handle the approximation guarantee. For every fixed $r$, the estimate $X_{u_i}$ satisfies the guarantees of \Cref{lem:deg_estimation} simultaneously for all $T_r$ vertices with probability at least $1-\beta/(100k)$ via the union bound. Then taking the union bound across the $\le k$ iterations, we know that all the estimates $X_{u_i}$ across all sampled vertices $u_i$ satisfy the guarantees of \Cref{lem:deg_estimation} (with their respective $\gamma_r$ parameters), with probability at least $1-\beta/100$. Concretely, this means that across all iterations $r$, we only terminate if it is the case that the sampled vertex $u_i$ satisfies $\dui \ge \gamma_r/2$. This is because otherwise, $\dui < \gamma_r/2$ and \Cref{lem:deg_estimation} guarantees that our estimate $X_{u_i} < \gamma_r$. We condition on this event. 

Thus, \emph{if} we terminate inside a loop for some $r$, then we know $\dui \ge \gamma_r/2 \ge 2^{-r}/200 \ge \gamma/1000$ since $1 \le r \le k-1$. However, we still need to guarantee that there exists \emph{some} $r$ for which we terminate with high probability. 

Towards this end, let $1 \le r^* \le k-1$ be the value guaranteed by \Cref{lem:r*}. We don't know what this $r^*$, but we must loop over it at some point in our algorithm since we loop over all $1 \le r \le k-1$. When this $r^*$ is considered in the outer loop, we have
    \[\Pr{u_i \in U_{r*}} \ge \frac{ 2^{r*} \cdot \dbar }{100 \cdot \log_2(1/\dbar)}.\]

Thus by our choice of $T_{r^*}$, we know that the probability that at least one sampled vertex $u_i \in U_{r*}$ is at least 
\[ 1 - \left( 1 - \frac{ 2^{r*} \cdot \dbar }{100 \cdot \log_2(1/\dbar)} \right)^{T_{r^*}} \ge 1 - e^{-  \frac{ 2^{r*} \cdot \dbar }{100 \cdot \log_2(1/\dbar)}  \cdot T_{r^*}}  \ge 1-\beta/100,\]
since 
\begin{align*}
     \frac{ 2^{r*} \cdot \dbar }{100 \cdot \log_2(1/\dbar)}  \cdot T_{r^*} &= \frac{ 2^{r*} \cdot \dbar }{100 \cdot \log_2(1/\dbar)} \cdot  \frac{10^4 \log_2(1/\gamma)}{2^{r^*} \gamma} \cdot \log(100/\beta)\\
     &\ge 10^2 \cdot \frac{\dbar/\log_2(1/\dbar)}{\gamma/\log_2(1/\gamma)} \cdot \log(100/\beta) \\
     &\ge 10^2 \log(100/\beta),
\end{align*}
since the function $x/\log_2(1/x)$ is increasing in the interval $[0, 1]$ and $\dbar \ge \gamma$ by assumption. We condition on this event. 

Thus, some vertex $u_i \in U_{r^*}$ is sampled in the $r^*$th iteration of the outer loop. For this sampled vertex, we have $\dui \ge 2^{-r^*} \ge 100\gamma_{r^*}$ by definition of being in $U_{r^*}$. By our earlier stated guarantees on degree estimation across all sampled vertices, we know that the estimate satisfies $X_{u_i} \ge 10\gamma_r$ for this vertex, so if we have not terminated before, we are guaranteed to at least terminate on this vertex. This completes the proof.
\end{proof}

\subsection{Finding prompting hypothesis via neighbor set}
\label{sec:small-neighborhood}
In this section, we propose another approach for finding a prompting hypothesis. More precisely, we find a vertex with  out-degree $\dui$ at least $\tilde{O}(\beta)$ with probability $1-\beta$. The main difference between this approach  and the one we describe in the previous section is that the quality of the vertex we find remains the same regardless of the average degree, however its query complexity varies depending on the average degree of the underlying graph. The query complexity of this approach is $\tilde{O}(n+ n\dbar/\beta)$, which makes this approach suitable for scenarios where the average degree is low, i.e., $\bar{d} \ll \beta \approx \delta$. When the average degree is this small, attempting to approximate the degree of all vertices in $U$ by querying semi-distances would be too costly. 

At a high-level, the algorithm samples a vertex $v_j$ from the bottom set, $\bS$, uniformly at random and then exhaustively scans the upper set, $U$, to find all of its neighbors, which we denote as $\text{Nei}(v_j)$. Since the average degree is low, we anticipate that this neighbor set will be small by Markov's inequality. To achieve a high probability of success, we repeat this sampling process $t = O(\log(1/\beta))$ times to make sure that the one with the minimum-sized neighbor set, $v_{j_{\min}}$, certainly has a small set of neighbors.

Once we have the neighbor set, two possibilities arise. First, there might be a $ \tilde{O}(\beta)$-prompting hypothesis among the neighbors. Given the small size of the neighbor set, we can efficiently test each neighbor to see if it is a prompting one (i.e., if its estimated degree $\hat{d}_{u_i}$ exceeds a threshold $ \tilde{O}(\beta)$). If such a vertex is found, the algorithm returns it and halts.

Second, it's possible that no prompting hypothesis exists in the neighbor set. In this case, while the immediate search for a prompting hypothesis has failed, the sampled vertex $v_j$ has a useful property: it is unlikely to have an edge from $u_{i^*}$, meaning it cannot be prompted past our threshold, $\tb$ via $\Hstar$. This property can then be utilized to determine the final output for the hypothesis selection (which we discuss later).

The pseudocode for this procedure is described in \Cref{algo:delta_prompting}. And, the correctness of our algorithm is formally proven in the following theorem.

\begin{algorithm}
\caption{Find prompting}
\label{algo:delta_prompting}
\begin{algorithmic}[1]
\State \textbf{input}: $\beta$, $\hatd$, and query access to edges of $G$.
\State Set $\beta' \gets \beta/4$.
\State Set $t=c\log n$ for a sufficiently large constant $c$.
\For{$r = 1, \ldots, t$}
    \State $v_{j_r} \gets$ Sample a vertex from $\bS$ uniformly at random.
    
    \State Search all $u_i \in U$, and find the set of vertices $\text{Nei}(v_{j_r})$ which have an incoming edge to $v_{j_r}$.
\EndFor
\State $S\gets $  all sampled vertices $v_{j_r}$. 
\State $T\gets$ all  vertices $v_{j_r}\in S$ where $|\text{Nei}(v_{j_r})|\leq 20\hatd\cdot n$ 



    
\For{every $v_j\in T$ and every $u_i\in \text{Nei}(v_j)$} 
    \State $\hatdui \gets$ \textsc{Estimate-Out-Degree}$\left(u_i,\ \beta'/(2t),\ n^{-11}\right)$. \label{line:out-deg-estimation} 

    \If{$\hatdui \geq \beta'/(2t)$} 
        \State \textbf{return} $u_i$ as a prompting hypothesis and halt. \label{line:output_prompting}
    \EndIf
\EndFor
\If {no prompting hypothesis is found}\label{line:no-prompting} 
\State $S'\gets$ another $t$ randomly sampled vertices from $V$.
\State $T'\gets$ all  vertices $v_{j_r}\in S'$ where $|\text{Nei}(v_{j_r})|\leq 20\hatd\cdot n$ 
\For{every $v_j\in S'$}
\For{every $u_i\in \text{Nei}(v_j)$}

\State $\hatdui \gets$ \textsc{Estimate-Out-Degree}$\left(u_i,\ 2\beta'/t,\ n^{-11}\right)$. \label{line:out-deg-estimation2} 
\EndFor
\If{$\hatdui\leq 2\beta'/t$ for all $u_i\in \text{Nei}(v_j)$}
\State \textbf{return} $v_j$
\EndIf
\EndFor
\EndIf
\State \textbf{return} \textsf{FAIL}
\end{algorithmic}
\end{algorithm}

\begin{lemma}\label{lem:delta_prompting}
Consider \Cref{algo:delta_prompting} with arbitrary input parameters $\beta, \hatd \in (0,1)$ and query access to the graph $G$ (or equivalently to the semi-distances). Suppose that $\hatd$ is guaranteed to satisfy $\hatd \leq 2\dbar$. There exists an error event $\mathcal{E}$ such that $\Pr{\mathcal{E}}\leq n^{-10}$. Moreover, conditioning on $\neg \mathcal{E}$, if~\Cref{algo:delta_prompting} returns
\begin{itemize}
    \item a hypothesis $u_i$, then $u_i$ is $\Omega\left(\beta/(\log n)\right)$-prompting with probability 1, and if it returns
    \item a vertex $v_{j}$, then with probability $1-\beta$, there  is no edge from $\ustar$ to $v_{j}$.
\end{itemize}
The query complexity of the algorithm is:
$$O\left(n\log n+\frac{\hatd n(\log n)^2}{\beta}\right)\,.$$
In the special case where $\hatd =O(\beta)$, this complexity becomes nearly-linear in $n$: $O\left(n \cdot (\log n)^2\right)$.
\end{lemma}

\begin{proof}
Let us first focus on the probabilistic correctness guarantee. To do so, we will show that certain {\em bad} events each occur with low probability and define $\mathcal{E}$ as the union of these events. We next show that conditioned on $\neg\mathcal{E}$, the probabilistic guarantees in the theorem hold.

We first argue that the probability of any out-degree estimation in \Cref{line:out-deg-estimation} or~\Cref{line:out-deg-estimation2} being incorrect is at most $n^{-10}$. The proof follows directly from the guarantee of \Cref{lem:deg_estimation}. Each degree estimation is accurate with a probability of at least $1-n^{-11}$. By a union bound, all estimations satisfy the properties of \Cref{lem:deg_estimation} with a collective probability of at least $1-n^{-10}$. Denote the small probability error event by $E_1$.

We first argue that if $E_1$ doesn't hold, then the first bullet point in the theorem statement is true, namely if the algorithm outputs an $u_i$, then $u_i$ is indeed sufficiently prompting. To later prove the statement of the second bullet point, we have to extend $E_1$ to the larger error event $\mathcal{E}$, but if an output is correct with probability $1$ conditioned on $\neg E_1$, this will also hold when conditioning on $\neg \mathcal{E}\subset \neg E_1$.
\begin{enumerate}
    \item[1.] \textbf{Correctness:} Conditioning on $\neg E_1$, the algorithm does not output an invalid prompting hypothesis. Specifically, if the algorithm outputs a hypothesis $u_i$ in \Cref{line:output_prompting}, its degree is guaranteed to be $\dui \geq \beta'/(4t) = \Omega(\beta/(\log n))$.
\end{enumerate}

We next turn our attention to the statement of the second bullet point. For this we require an additional \emph{completeness} guarantee of the algorithm, namely that a $(\beta'/t)$-prompting hypothesis in the neighborhood of $v_j$ cannot escape the degree estimators attention.

\begin{enumerate}
    \item[2.] \textbf{Completeness:} Conditioning on $\neg E_1$, the algorithm finds any substantially prompting hypothesis within the chosen neighbor set. More formally, if a vertex $u_i$ exists in the neighbor set of some $v_{j}$ with $\dui \geq \beta'/t$, the algorithm is guaranteed to output a hypothesis in \Cref{line:output_prompting} and halt. This is because its estimated degree will be at least $\beta'/(2t)$.
\end{enumerate}

Towards the result, we first show a simple but useful lemma stating that most vertices in $v$ have small neighborhoods in $U$.
\begin{lemma}\label{lemma:markov-degree}
at least a $(9/10)$-fraction of elements $v_{j_r}\in V$ have $\abs{\text{Nei}(v_{j_r})}\leq 20\,\hatd \cdot \abs{U}$.
\end{lemma}
\begin{proof}
For a vertex $v_{j_r}$ chosen randomly from $\bS$, its expected neighborhood size is $\E{\abs{\text{Nei}(v_{j_r})}} = \dbar \cdot \abs{U}$ by definition. Therefore, by Markov's inequality and the guarantee that $\hatd \geq \dbar/2$,
\[
\Pr{\abs{\text{Nei}(v_{j_r})} > 20\,\hatd \cdot \abs{U}} \leq \Pr{\abs{\text{Nei}(v_{j_r})} > 10\,\dbar \cdot \abs{U}} \leq \frac{1}{10},
\]
as desired.
\end{proof}

Denote by $F$ denote the set of elements in $V$ having degree at most $20\hatd |U|$ and satisfying that at least one of their neighbors $u_i\in U$ has $\dui \geq \beta'/t$, namely is  $(\beta'/t)$-prompting. Denote by $f=|F|/|V|$ the fraction of such elements. Let $E_2$ denote the error event that $f\geq 1/2$ and that~\Cref{line:out-deg-estimation} does not return a prompting hypothesis.
\begin{lemma}
It holds that $\Pr{E_2}=O( n^{-10})$.
\end{lemma}
\begin{proof}
We can bound $\Pr{E_2}\leq \Pr{E_2\wedge \neg E_1}+O(n^{-10})$. However, note that if $f\geq 1/2$, then sampling $r=c\log n$ random hypotheses from $V$, one of them $v_j$ will lie in $F$ with the desired high probability. If this happens, since $E_1$ implies that all degree estimates are accurate, and containment in $F$ implies that $v_j$ has  neighbor in $U$ with $\dui \geq \beta'/t$, it follows from the Completeness guarantee above, that the algorithm will return a $(\beta'/(2t))$-prompting neighbor of $v_j$.
\end{proof}




We finally define $\mathcal{E}=E_1\cup E_2$ and show that if $\mathcal{E}$ does not occur, then if the algorithm outputs some $v_j$, the probability that $v_{j}$ has an edge from $\ustar$ is at most $1-\beta$.  For the analysis, consider the first time that the algorithm proceeds to line~\Cref{line:no-prompting} and assume that $\mathcal{E}$ does not occur. Since $E_2$ did not occur, we know that $f<1/2$. By~\Cref{lemma:markov-degree}, at least a $(9/10)$-fraction of elements $v_{j_r}\in V$ have $\abs{\text{Nei}(v_{j_r})}\leq 20\,\hatd \cdot \abs{U}$, and it follows that at least at least an $1/2-1/10=2/5$ fraction of elements in $V$ have no $(\beta'/t)$-prompting neighbors and have degree at most $ 20\,\hatd \cdot \abs{U}$. Since $S'$ has size $c\log n$ for a large constant $c$, it follows that with high probability in $n$, at least one such $v_j$ will be sampled into $S'$, and the algorithm will indeed return some $v_j$ and not $\perp$. We now argue that $v_{i^*}$ has no edge to $v_j$ with probability at least $1-\beta$. We consider the following two cases.


\textbf{Case 1: $\dustar \geq 4\beta'/t$}. 
Since $f<1/2$, it holds that $|V\setminus F|>|V|/2$ and in particular, more than half of the elements of $V$ have no $(\beta'/t)$-prompting neighbors.  For any such $v_j$, since $E_1$ implies that all degree estimates are accurate, for all neighbors $u_j$ of $v_j$, we must have that $\hatdui\leq 2\beta'/t$, so the algorithm can return $v_j$. Moreover, by the Completeness guarantee, if  $u_{i^*}$ is in the neighborhood of some $v_j$, then since $\dustar \geq 4\beta'/t$ and by the correctness guarantee, the algorithm will never return such a $v_j$. We conclude that in this case, the algorithm with very high probability returns a $v_j$ such that $u_{i^*}$ has no edge to $v_j$, as desired.

    


\textbf{Case 2: $\dustar < 4\beta'/t$}.
Since $u_{i^*}$ has small degree, the probability that any single sample in $S'$ has $u_{i^*}$ as a neighbor is at most $4\beta'/t$. By a union bound, the probability that there exists an element in $v_j\in S'$ such $u_{i^*}$ has an edge to $v_j$ is at most $4\beta'=\beta$. Thus, with probability $1-\beta$, the $v_j$ returned does not have an edge from $u_{i^*}$, as desired. 

\paragraph{Query complexity:}
For each of the $2t$ sampled vertices we have to iterate over all possible $u_i$ and query the semi-distances to check whether an edge exists or not. Hence, finding the neighbor sets takes $O(\abs{U} \cdot t) = O(n \cdot \log n)$ queries. For every $u_i$ in some $\text{Nei}(v_{j})$, where $\abs{\text{Nei}(v_{j})}\leq 20\hatd n$, we have to estimate their out-degree, which takes the following number of queries via \Cref{lem:deg_estimation}:

\begin{align*}
    \text{Query complexity of each degree estimation:} & = O\left(
    \frac{t \cdot \log n}{\beta'}\right) =  
    O\left(\frac{(\log n)^2}{\beta}\right) 
\,.
\end{align*}  

Hence, the total query complexity is: 
\begin{align*}
    \text{Query complexity } =
    O\left(n\log n+\frac{\hatd n(\log n)^2}{\beta}\right)
\end{align*}
Hence, the proof is complete. 
\end{proof}

\subsection{Putting the pieces together} \label{sec:moderate_delta_putting_together}

\begin{algorithm}[h!]
\caption{Semi-Distance Threshold Algorithm}
\label{alg:empty_set}
\begin{algorithmic}[1]
\Require Threshold $\tb$, failure probability $\delta \geq 1/n$, and access to the graph $G=(U \cup V, E)$ 
\State $\zeta \gets \Theta(1/n^4)$
\While{$\abs{\bS} > 0$}
    \State $\hat{d} \gets$ \textsc{Estimate-Average-Degree}$\left(\delta, \zeta\right)$

    \If{$\hat{d} < \delta$}
        \State Run \Cref{algo:delta_prompting} with parameter $\beta = \delta$.
        \If{it returns a prompting hypothesis $u_i$}
            \State Update $\bS$ by removing any $v_j$ with $\hsemi{i}{j} \geq \tb$.
            \State \textbf{continue}
        \ElsIf{it returns $v_j$ (a vertex without an edge from $u_{i*}$)}
            \State \Return $\Hj$ and halt.
        \EndIf
    \Else
        \State Run the algorithm of \Cref{lem:large_degree} with $\gamma=\delta/2$ and failure probability $\zeta$ to get a prompting hypothesis $\Hi$.
        \State Update $\bS$ by removing any $v_j$ with $\hsemi{i}{j} \geq \tb$.
    \EndIf
\EndWhile
\State \Return $\bot$.
\end{algorithmic}
\end{algorithm}

In this section, we combine the tools we have developed to solve the Semi-Distance Threshold Problem via an iterative process.
The theorem will then follow from the binary search over thresholds from \Cref{thm:bin-search}.
We repeatedly find prompting hypotheses to empty out $\bS$.
We will end this process with high probability in one the following cases:
\begin{enumerate}
    \item The set $\bS$ is fully emptied out and we return $\perp$.
    \item A vertex $v_j$ is returned that does not have an incoming edge from $\ustar$: $\hatwistarj$ is at most $b$.
\end{enumerate}

\Cref{alg:empty_set} describes our procedure. Our goal in this section is to show that this process stops in few rounds.

\begin{theorem}\label{thm:semidist-thresh}
\Cref{alg:empty_set} solves the Semi-Distance Threshold Problem (see \Cref{def:threshold-problem}) with $s$ samples, $n$ hypotheses, and failure probability $\delta$  and runs in time 
\begin{equation*}
    O\p{\frac{n s \log^4(n)}{\delta}}.
\end{equation*}
\end{theorem}

\begin{proof}
Consider the first $\Theta\p{n^3}$ iterations of \Cref{alg:empty_set}.
By the choice of failure probabilities with \Cref{lem:deg_estimation}, \Cref{lem:large_degree}, and \Cref{lem:delta_prompting}, all calls to subroutines succeed with probability greater than $1-1/n > 1 - 1/\delta$.
We will condition on this event and later show that the algorithm will terminate within this number of iterations.

Consider any single such iteration of the while loop in \Cref{alg:empty_set}.
By the degree estimation guarantees of \Cref{lem:deg_estimation}, if $\hat{d} < \delta$, then $\dbar < \delta/2$.
In this case, by the guarantee of \Cref{lem:delta_prompting}, either we find a $\delta/(\log n)$-prompting hypothesis or we find a vertex $v_j$ such that $\semi{i^*}{j} \leq b$ with probability $1-\delta$.
If we are in the latter case, then the algorithm will return $v_j$ and satisfy the guarantee of the Semi-Distance Threshold Problem with overall failure probability $O(\delta)$.
Assuming that this event does not occur, then $V$ will shrink by a factor of $1 - \Omega(\delta/\log(1/\delta))$.
After $O\p{\frac{\log n}{\delta}}$ of these iterations, $V$ will shrink by half.
There can be a total of $O\p{\frac{\log^2 n}{\delta}}$ such iterations.
By \Cref{lem:delta_prompting}, each iteration requires queries $O(n\log^2 n)$.
The total time of these iterations is:
\begin{equation*}
    O\p{\frac{n s \log^4 n}{\delta}}.
\end{equation*}

If we are in the case that $\hat{d} \geq \delta$, then $\bar{d} \geq \delta/2$.
By \Cref{lem:large_degree}, $V$ will shrink by a factor of $1 - \Omega(\bar{d}) \geq 1 - \Omega(\delta)$.
So, there can be a total of $O(\log n / \delta)$ of these iterations.
By \Cref{lem:large_degree}, the queries for each iteration is $O(\log n \log^2(1/\delta) / \delta)$, yielding total time:
\begin{equation*}
    O\p{\frac{s \log^2 n \log^2(1/\delta)}{\delta^2}}.
\end{equation*}
As $\delta > 1/n$, this quantity is dominated by the preceding one.

Finally, each call to estimate the average degree requires $O(s\log(n)/\delta)$ time, which is dominated by the time required for calls to either of the other two sub-routines.
\end{proof}

\begin{proof}[Proof of \Cref{thm:fast}]
The proof follows directly from \Cref{thm:bin-search} and \Cref{thm:semidist-thresh}.
Letting $\delta' = \delta/\log(1/\eps)$ and $s = O\p{\frac{\log n + \log(1/\delta) + \log\log(1/\eps)}{\eps^2}}$, the overall runtime becomes
\begin{align*}
    &O\p{\frac{n\log^4(n)\p{\log n + \log(1/\delta) + \log\log(1/\eps)}\p{\log(1/\eps)}}{\eps^2\delta}} = \tilde{O}\p{\frac{n}{\eps^2 \delta}}.
\end{align*}
\end{proof}

\section{Fast Hypothesis Selection with High Probability via Additional Resources}\label{sec:highprob}

\subsection{Near-Linear Time with Known Optimal Distance}\label{sec:knownopt}
In this section, we consider the setting where $\OPT=\min_i d_{TV}(P,H_i)$ is provided to us as part of the input. We prove the following theorem.

\thmKnownOPT*

\begin{proof}
Our algorithm is again based on the semi-distances used in the previous sections. The algorithm is quite simple and we now describe how it operates. Define $\eps_0=\eps/2$ and $\delta_0=\delta/3$. Based on the samples, we define the approximate semi-distances $\hsemi{i}{j}$. By~\Cref{lem:sample-scheffe} and a union bound over all distinct $i,j\in[n]$, we have that $|\hsemi{i}{j}-\semi{i}{j}|\leq \eps_0$ for all distinct $i,j\in[n]$ with probability $1-\delta_0$. We know that the optimal hypothesis $H_{i^*}$ has $W(H_{i^*})\leq \OPT$, and so, we have that $\hat W(H_{i^*})\leq \OPT+\eps_0$. We let $S_0=\{H_1,\dots,H_n\}$ and construct sets $S_1, S_2,\cdots,$ as follows. Having constructed $S_{k-1}$, we iterate over all hypotheses $H_j\in S_{k-1}$, and sample with replacement a multiset of hypotheses $A_j\subset \{H_1,\dots, H_n\}\setminus \{H_j\}$ of size $s_k=\alpha 2^k \log (n/\delta_0)$ for a sufficiently large constant $\alpha$ that does not depend on $k$. Define,
\[
\lambda_j=\frac{|\{H_i\in A_j\mid \hsemi{i}{j}>R+\eps_0\}|}{s_k}.
\]
If we find a $j$ such that $\lambda_j \leq \frac{1}{2^{k+1}}$, we define $S_k=\{i\in [n]\setminus \{j\}\mid \hsemi{i}{j}>R+\eps_0\}$ and further define $j_{k-1}=j$. In this case, if $S_k=\emptyset$, we return $H_{j_{k-1}}$. Otherwise, if no such $j$ with $H_j\in S_{k-1}$ exists, we halt the process of constructing the sets $S_k$, and our algorithm returns $H_{j_{k-2}}$ as the final output. We will show shortly that with high probability, if the algorithm does not halt, then $S_{k_0}= \emptyset$ for some  for $k_0\leq \lceil\lg n\rceil $.

Define
\[
\lambda_j'=\frac{|\{i\in [n]\setminus\{j\}\mid \hsemi{i}{j}>R+\eps_0\}|}{n-1}.
\]
We state and prove two claims bounding the probability of two error events over the randomness of the algorithm.
\begin{claim}\label{claim:all-small}
With probability at least $1-\delta_0$, for all $1\leq k\leq \lceil\lg n\rceil$, such that $S_k$ exists, we have that $|S_k|<n/2^k$.
\end{claim}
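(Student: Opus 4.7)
The plan is to show that, with probability at least $1-\delta_0$, the empirical fraction $\lambda_j$ is a good enough estimate of the true fraction $\lambda_j'$ that whenever the algorithm commits to some $j_{k-1}$ (because $\lambda_{j_{k-1}} \leq 1/2^{k+1}$), the true fraction must in fact satisfy $\lambda_{j_{k-1}}' < 1/2^k$. Since by definition $|S_k| = (n-1)\,\lambda_{j_{k-1}}'$, this immediately yields $|S_k| < n/2^k$, which is what we want.

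To make the estimate-versus-truth comparison precise, fix any $j \in [n]$ and any $1 \leq k \leq \lceil \lg n \rceil$, and observe that $s_k \lambda_j$ is distributed as a sum of $s_k$ i.i.d.\ Bernoulli random variables with mean $\lambda_j'$ (since $A_j$ is drawn with replacement from $\{H_1,\dots,H_n\}\setminus\{H_j\}$). The "bad event" for this pair $(j,k)$ is that $\lambda_j' \geq 1/2^k$ but the empirical fraction satisfies $\lambda_j \leq 1/2^{k+1} \leq \lambda_j'/2$. By a standard multiplicative Chernoff bound,
\[
    \Pr\bigl[s_k \lambda_j \leq s_k \lambda_j'/2\bigr] \leq \exp\!\bigl(-s_k \lambda_j'/8\bigr) \leq \exp\!\bigl(-s_k/(8 \cdot 2^k)\bigr) = \exp\!\bigl(-\alpha \log(n/\delta_0)/8\bigr),
\]
and, for $\alpha$ chosen as a sufficiently large absolute constant, this is at most $\delta_0/(n \lceil \lg n \rceil)$.

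Finally, I take a union bound over all at most $n \lceil \lg n \rceil$ pairs $(j,k)$ with $j \in [n]$ and $1 \leq k \leq \lceil \lg n \rceil$, so that with probability at least $1-\delta_0$, no bad event occurs. Conditioned on this, for every $k$ for which the algorithm successfully constructs $S_k$, the selected $j_{k-1}$ must satisfy $\lambda_{j_{k-1}}' < 1/2^k$ (otherwise the bad event for the pair $(j_{k-1}, k)$ would have triggered), giving $|S_k| = (n-1)\,\lambda_{j_{k-1}}' < n/2^k$ as claimed. The only routine point to be careful about is the choice of the constant $\alpha$; there is no real obstacle here, since the exponential-in-$\alpha$ saving in the Chernoff bound easily absorbs the polylog($n$, $1/\delta_0$) factors from the union bound.
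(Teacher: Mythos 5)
Your proof is correct and is essentially the same argument as the paper's: a multiplicative Chernoff bound showing that $\lambda_j$ cannot be much smaller than $\lambda_j'$ when $\lambda_j' \geq 1/2^k$, followed by a union bound over the relevant $(j,k)$ pairs, and then reading off $|S_k| = (n-1)\lambda_{j_{k-1}}' < n/2^k$ from the fact that the chosen $j_{k-1}$ must have a small true fraction. The only cosmetic difference is how the union bound is split (you union over all $n\lceil \lg n\rceil$ pairs at once with per-pair failure probability $\delta_0/(n\lceil\lg n\rceil)$, while the paper uses $\delta_0/n^2$ per pair, union bounds over $j$ then over $k$); both absorb the overhead into the choice of $\alpha$.
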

\begin{proof}[Proof of Claim]
Suppose $j$ with $H_j\in S_{k-1}$ is such that $\lambda_j'> 1/2^{k}$. Then, a Chernoff bound gives that with probability at least $1-\delta_0/n^2$, it holds that $\lambda_j> 1/2^{k+1}$ as long as $\alpha$ is sufficiently large, and thus such a $j$ is not used to define $S_k$. In particular, union bounding over all such $j$ with $H_j\in S_{k-1}$, we obtain that if $S_k$ exists, then $|S_k|\leq 1/2^{k}(n-1)<n/2^k$ with probability $1-\delta_0/n$. Crudely union bounding over all  $k\leq \lceil\lg n\rceil$ gives the result.
\end{proof}
\begin{claim}\label{claim:no-halt}
With probability at least $1-\delta_0$, there exists no $k\leq \lceil \lg n \rceil$ such that for some $H_j\in S_{k-1}$, it holds that $\lambda_j'\leq 1/2^{k+2}$ but where the algorithm halts before constructing $S_k$. 
\end{claim}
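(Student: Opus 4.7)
The plan is to show that whenever some $H_j \in S_{k-1}$ has $\lambda_j' \leq 1/2^{k+2}$, this same $j$ will satisfy $\lambda_j \leq 1/2^{k+1}$ with very high probability, so the algorithm encounters some qualifying index (at worst this $j$) during its scan of $S_{k-1}$ and proceeds to define $S_k$ instead of halting. The claim therefore reduces to an upper-tail concentration bound on $\lambda_j$ around its mean $\lambda_j'$, combined with a union bound over $j \in [n]$ and $k \leq \lceil \lg n \rceil$.

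Fix $k$ and any $j$ with $\lambda_j' \leq 1/2^{k+2}$. Since $A_j$ is a multiset sampled with replacement from $\{H_1,\dots,H_n\}\setminus\{H_j\}$, the count $X = s_k \lambda_j$ is a sum of $s_k = \alpha 2^k \log(n/\delta_0)$ i.i.d.\ Bernoulli$(\lambda_j')$ indicators, with mean $\mu \leq s_k/2^{k+2} = \alpha \log(n/\delta_0)/4$. The event we must exclude is $\{X > s_k/2^{k+1}\} = \{X > \alpha\log(n/\delta_0)/2\}$, whose threshold is at least $2\mu$. Applying the multiplicative Chernoff bound $\Pr[X \geq (1+\delta)\mu] \leq \exp(-\mu \cdot ((1+\delta)\ln(1+\delta)-\delta))$ with $1+\delta = \alpha\log(n/\delta_0)/(2\mu) \geq 2$, this probability is at most $\exp(-c \alpha \log(n/\delta_0))$ for an absolute constant $c > 0$; the worst case is $\lambda_j' = 1/2^{k+2}$ exactly, which gives $\delta = 1$ and $(1+\delta)\ln(1+\delta)-\delta = 2\ln 2 - 1 > 1/3$, and for smaller $\lambda_j'$ the variable $X$ is stochastically dominated so the tail is only smaller.

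Choosing the constant $\alpha$ in the definition of $s_k$ large enough, this per-pair failure probability is at most $\delta_0/(n \lceil \lg n \rceil)$. Union bounding over all $j \in [n]$ and $k \in \{1,\dots,\lceil \lg n \rceil\}$ yields total failure probability at most $\delta_0$, establishing the claim. The subtle point is that Chernoff must be applied in the upper-tail form valid for $\delta \geq 1$ rather than the usual small-deviation $(1\pm\delta)$ form, and it is precisely the factor-$2$ gap between the hypothesized $\lambda_j' \leq 1/2^{k+2}$ and the algorithm's acceptance threshold $\lambda_j \leq 1/2^{k+1}$ that provides the slack needed for the bound to go through, mirroring the analogous factor-$2$ gap exploited in the lower-tail argument of \cref{claim:all-small}.
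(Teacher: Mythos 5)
Your proof is correct and follows the same route as the paper's: for each fixed $(j,k)$ pair, conditional on $S_{k-1}$, the count $s_k \lambda_j$ is a binomial with mean $s_k \lambda_j' \leq \alpha\log(n/\delta_0)/4$, a Chernoff bound bounds the probability it exceeds twice that mean, and a union bound over $j$ and $k$ finishes. The paper states this more tersely (and in fact union bounds per-$j$ at failure rate $\delta_0/n$, folding the $\lceil\lg n\rceil$ factor into the ``crudely''), but the substance is identical; your explicit identification of the factor-$2$ slack between $1/2^{k+2}$ and $1/2^{k+1}$ and the stochastic-dominance reduction to the boundary case $\lambda_j' = 1/2^{k+2}$ are both accurate, though the standard $\exp(-\mu\delta^2/3)$ bound already covers $\delta=1$, so the caveat about needing a ``form valid for $\delta\geq 1$'' is unnecessary.
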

\begin{proof}[Proof of Claim]
If there exists a $j$ with $H_j\in S_{k}$ such that $\lambda_j'\leq 1/2^{k+2}$, then it follows from a Chernoff bound that with probability $1-\delta_0/n$, $\lambda_j \leq 1/2^{k+1}$ (if  $\alpha$ is sufficiently large) in which case the algorithm does not halt. Crudely union bounding over all $k\leq \lceil \lg n\rceil$ gives the result.
\end{proof}

Let $E_0$ denote the event that $|\hsemi{i}{j}-\semi{i}{j}|\leq \eps_0$ for all distinct $i,j\in [n]$. Let $E_1$ and $E_2$ denote the probability at most $\delta_0$ events of respectively~\Cref{claim:all-small}  and~\Cref{claim:no-halt}. We show that if neither of these events occur, then the algorithm does in fact return a hypothesis $H_i$ with $\dtv(P, H_i) \leq 3 \cdot \OPT  + \eps$. The result follows as $\Pr{\bigcup_{i=1}^3 E_i}\leq 3\delta_0=\delta$.
\paragraph{Case 1: The algorithm does not halt.}
Since $E_1$ did not occur, $S_{k_0+1}=\emptyset$ for some $0\leq k_0<\lceil \lg n\rceil$. At the point where this happens, the algorithm terminates by returning $H_{j_{k_0}}$.
Now by definition, $S_{k_0+1}=\emptyset$ is equivalent to $\hsemi{i}{j_{k_0}}\leq R+\eps_0$ holding for all $i\in [n]\setminus \{j_{k_0}\}$, which in turn implies that $\semi{i}{j_{k_0}}\leq R+2\eps_0$ for all $i\in [n]\setminus \{j_{k_0}\}$, and in particular, $\semi{i^*}{j_{k_0}}\leq R+2\eps_0$. The result then follows from~\Cref{prop:semidist-approx}.

\paragraph{Case 2: The algorithm halts.}
Note first that the algorithm does not halt when constructing $S_1$. Indeed, $H_{i^*}\in S_0$ and $\hat W(H_{i^*})\leq \OPT+\eps_0\leq R+\eps_0$, so for $j=i^*$, $\lambda_j=0$ regardless of the sampling from $S_0$. Let thus $k\geq 2$ be such that the algorithm halts in constructing $S_{k}$. The algorithm then returns $H_{j_{k-2}}$ which is well-defined since $k\geq 2$.
Since $E_2$ did not occur, for all $H_j\in S_{k-1}$, $\lambda_j'>1/2^{k+2}$. In particular $H_{i^*}\notin S_{k-1}$ since $\lambda_{i^*}=0$. But $S_{k-1}=\{i\in [n]\setminus \{j_{k-2}\}\mid \hsemi{i}{j_{k-2}} >R+\eps_0\}$, and thus $\hsemi{i^*}{j_{k-2}}\leq R+\eps_0$. This implies that $\semi{i^*}{j_{k-2}}\leq R+2\eps_0$ and the result again follows from~\Cref{prop:semidist-approx}.

\paragraph{Running time.}
We next argue about the runtime of the algorithm. Estimating a single semi-distance based on the samples takes time $O(\frac{\log (n/\delta)}{\eps^2})$, so estimating a single $\lambda_j$ takes time $O(\frac{2^k\log^2 (n/\delta)}{\eps^2})$, and the time to estimate $\lambda_j$ for all $H_j\in S_{k-1}$ is thus $O(\frac{n\log^2 (n/\delta)}{\eps^2})$ assuming that $E_1$ did not occur. Defining $S_k$ requires us to calculate all semi-distances from $j_{k-1}$ which thus takes time $O(\frac{n\log (n/\delta)}{\eps^2})$, but this is of lower order than the previous bound. Summing over all $k\leq \lceil \lg n\rceil $, the total runtime is thus $O\p{\frac{n\log^3 (n/\delta)}{\eps^2}}$.
\end{proof}

\subsection{Subquadratic Time with Preprocessing}\label{sec:preprocess}

Our main theorem in the section is the following; see \Cref{sec:tech_overview_preprocessing} for a technical overview and discussion.

\corPreProcessing*

The starting point of our algorithm is the algorithm of \cite{mahalanabis2007density} discussed in \Cref{sec:tech_overview_preprocessing}. It is described in \Cref{algo:mahalanabis}.

\begin{algorithm}
\caption{Algorithm of \cite{mahalanabis2007density} \label{algo:mahalanabis}}
\begin{algorithmic}[1]
\State \textbf{input}: Hypothesis $H_1, \ldots, H_n$, list $L$ of pairs $\{H_i, H_j\}$ sorted in decreasing order by $\dtv(H_i, H_j) = \frac{1}2 \|H_i - H_j\|_1$.
\State $S \leftarrow \mathcal{H}$
\Repeat
    \State pick the first edge $\{H_i, H_j\}$ in $L$
    \If{$\hsemi{j}{i} < \hsemi{i}{j}$}
        \State $h' \leftarrow H_i$
    \Else
        \State $h' \leftarrow H_j$
    \EndIf
    \State remove $h'$ from $S$
    \State remove pairs containing $h'$ from $L$
\Until{$|S| = 1$}
\State output the distribution in $S$
\end{algorithmic}
\end{algorithm}

As stated, the algorithm takes quadratic time in the worst case since we need to traverse the list $L$ which is of size $\Theta(n^2)$. Our goal is to speed up this algorithm with polynomial preprocessing. Note that this is non-trivial since removing the losing hypothesis $h'$ from $S$ (as done in Line $9$) may cause $L$ to be significantly updated, since $h'$ could participate in many remaining pairs further down in the list. 

Before proceeding, we make one simplification. To compare two hypothesis in Line 5 of \Cref{algo:mahalanabis}, we compute (approximations) to the appropriate semi-distances by using samples from $P$. This computation takes $O(\log(n)/\eps^2)$ time since we need to count the number of samples of $P$ that fall in the set $\Sij$. Thus, it suffices to abstract away this computation and solve the \emph{Tournament Revelation Problem} defined in \cite{mahalanabis2007density} (and then multiply the overall runtime by $O(\log(n)/\eps^2)$). We recall its definition below.

\tournament*

As stated, the Tournament Revelation problem is very general and it is not clear if a sub-quadratic algorithm exists, even if preprocessing is allowed. However, we can exploit \emph{geometric} structure by viewing the hypothesis as vectors in $\ell_1$ space in dimension $|\calX|$. Then the largest weight edge is simply the pair which witnesses the $\ell_1$ diameter of the point set that remains. Thus, we seek a dynamic diameter data structure for the $\ell_1$ norm which takes $o(n)$ query time. However, we should not expect to obtain the \emph{exact} diameter in sublinear time.\footnote{e.g. for very similar geometric problems, one can show near quadratic time lower bound via SETH \cite{rubinstein2018hardness}.}
Instead, we aim for $1-\eps$ approximation of the diameter at every step, which \emph{is} possible in sublinear query time. We remark that this only affects the approximation guarantee of the algorithm mildly, which we touch upon shortly. 

Our first goal is to process the hypothesis so that we can instead work in a lower dimensional space. This will follow from applying two successive dimensionality reductions, one not so well known from \cite{nguyen2014algorithms}, and the other the classical Johnson-Lindenstrauss (JL) lemma \cite{johnson1984extensions}. The first result allows us to embed $\ell_1$ into $\ell_2^2$ in a slightly larger dimension. Then finally, we embed everything into logarithmic in $n$ dimensions using JL.

Note that the theorem stated below assumes an aspect ratio (ratio of the largest to smallest distances) of $\poly(1/\eps)$. This can be easily obtained in our setting by rounding the coordinates of the hypothesis, again viewed as vectors in $\ell_1$ to multiples of $\poly(\eps)$ and removing any duplicate vectors. This only affects the additive $\eps$ guarantee in \Cref{cor:preprocessing}. 

\begin{theorem}[Theorem 116 in \cite{nguyen2014algorithms}]
    Let  $H \subset \R^d$ be a point set. There is a mapping $f: H \rightarrow \R^{d'}$ for $d' = d \cdot \poly(1/\eps, \log d)$ such that for any $x,y \in H$,
     \[ \|x-y\|_1 \le \|f(x) - f(y)\|_2^2 \le (1+\eps)\|x-y\|_1.\]
     Furthermore, the mapping takes $O(d')$ time per point. 
\end{theorem}

The second result is the classic JL lemma.

\begin{theorem}[\cite{johnson1984extensions}]
     Let $H \subset \R^d$ be a point set. There is a mapping $f: H \rightarrow \R^{d'}$ for $d' = O(\log(|H|)/\eps^2)$ such that for any $x,y \in H$,
     \[ \|x-y\|_2^2 \le \|f(x) - f(y)\|_2^2 \le (1+\eps)\|x-y\|_2^2.\]
     Furthermore, the mapping takes $O(dd')$ time per point. 
\end{theorem}

Combining these two results gives the following corollary which allows us to view the hypothesis as vectors in $\ell_2^2$ in dimension $O(\log(|H|)/\eps^2)$ at the cost of distorting the original $\ell_1$ distances by multiplicative $1+\pm \eps$ factors.

\begin{corollary}\label{cor:dim_reduction}
      Let  $H \subset \R^d$ be a point set. There is a mapping $f: H \rightarrow \R^{d'}$ for $d' = O(\log(|H|)/\eps^2)$ such that for any $x,y \in H$,
     \[ \|x-y\|_1 \le \|f(x) - f(y)\|_2^2 \le (1+\eps)\|x-y\|_1.\]
     Furthermore, the mapping takes $d \cdot \poly(1/\eps, \log d, \log|H| )$ time per point. 
\end{corollary}

Now we turn our attention back to the Tournament Revelation problem. Recall that we need to a dynamic diameter data structure. Towards this, we first recall an algorithm of \cite{indyk2003better} for answering furthest neighbor queries, which one can think of as a query version of the diameter problem, where we ask for the furthest point (in a given set of points) to a given query point.

\begin{theorem}[\cite{indyk2003better}]\label{thm:indyk}
    Let $\alpha \in (0, 1)$. Given a set of $n$ points in $\ell_2^{2}$ in dimension $k$, there exists a datastructure which requires polynomial in $n$ and $k$ preprocessing time and can answer $1-\alpha$-approximate furthest neighbor queries and perform insertions or deletions in time $\tilde{O}(k n^{1-\alpha}/\alpha)$.
\end{theorem}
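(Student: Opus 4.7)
The plan is to construct an LSH-style data structure for approximate furthest-neighbor in $\ell_1$, built on random projections through the $1$-stable Cauchy distribution together with a standard amplification scheme. The high-level strategy is to design a single randomized ``primitive'' structure that, for a given query $q$, returns a $(1-\alpha)$-approximate furthest neighbor with probability at least $p^* \gtrsim n^{-(1-\alpha)}$, and then run $L = \tilde{\Theta}(n^{1-\alpha}/\alpha)$ independent copies and take the best candidate, boosting the overall success probability to $1 - 1/\mathrm{poly}(n)$. The $n^{1-\alpha}$ exponent emerges naturally from this amplification because the ``gap'' between the probability that a near-maximal point wins a single projection and the probability that a significantly closer point wins it scales precisely with the approximation slack $\alpha$.

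First I would set up the projections. For each of the $L$ repetitions, draw $T$ i.i.d.\ Cauchy random vectors $w_1,\dots,w_T \in \mathbb{R}^k$ and, for each data point $p$, store its projections $\langle w_j,p\rangle$ for $j=1,\dots,T$. Because the Cauchy distribution is $1$-stable, $\langle w_j,p-q\rangle$ is distributed as $\|p-q\|_1 \cdot W$ for a standard Cauchy $W$, which provides a principled bridge between the scalar projected coordinate and the true $\ell_1$ distance. For each repetition and each direction $w_j$, I would maintain a balanced binary search tree keyed by the projected coordinates, so that given a query $q$ I can read off in $O(\log n)$ time the data point whose projection deviates maximally (in either direction) from $\langle w_j,q\rangle$.

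Given these primitives, query answering proceeds by projecting $q$ onto each of the $T$ Cauchy directions in $O(kT)$ time, performing $T$ extreme-projection lookups, and reporting the best candidate across all $L$ repetitions; insertions and deletions update the $L \cdot T$ BSTs, each in $O(\log n)$ time per update. Choosing $T = O(\mathrm{polylog}(n)/\alpha)$ makes the per-repetition work $\tilde{O}(k)$, so the total query and update time becomes $\tilde{O}(k n^{1-\alpha}/\alpha)$, and preprocessing consists of building $L$ BSTs over the $n$ input points, which is polynomial in $n$ and $k$.

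The main obstacle is the probabilistic analysis. One must show that for any fixed query $q$ and any candidate $p^{*}$ with $\|p^{*}-q\|_1 \ge (1-\alpha)\max_{p}\|p-q\|_1$, at least one of the $L$ repetitions has at least one direction $w_j$ along which the signed projection of $p^{*}-q$ strictly dominates the signed projections of every other point that lies outside the $(1-\alpha)$-shell. Quantitatively, this requires controlling the relative tails of two Cauchy variates: for scales $a = \|p^{*}-q\|_1$ and $b < (1-\alpha)a$, one must show $\Pr[a W_1 > b W_2] \ge 1/2 + \Omega(\alpha)$ for independent standard Cauchies $W_1,W_2$, and then argue via independence across the $T$ directions and a union bound over the at most $n$ competitor points that a single repetition succeeds with probability at least $n^{-(1-\alpha)}$. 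Getting this exponent precisely right, and using it to balance the parameters $T$ and $L$ so that the union bound over the $L$ repetitions yields $1-1/\mathrm{poly}(n)$ success while keeping the total work at $\tilde{O}(k n^{1-\alpha}/\alpha)$, is the delicate part of the proof.
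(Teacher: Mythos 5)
The paper cites this theorem directly from \cite{indyk2003better} without reproducing a proof, so there is no in-paper argument to compare against; your proposal attempts to reprove the external result from scratch.

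There is a genuine gap: the proposed primitive does not achieve the $n^{-(1-\alpha)}$ per-repetition success probability you claim, and a correct estimate shows the amplification scheme leads to $\tilde{O}(kn)$ query time rather than $\tilde{O}(kn^{1-\alpha})$. Consider the adversarial instance with one point $p^*$ at distance $r$ from $q$ and $n-1$ points all at distance just under $(1-\alpha)r$, so $p^*$ is the unique valid answer. Along one Cauchy direction, the competitors have projected magnitudes $(1-\alpha)r|W_i|$; because $\Pr[|W|>t]\sim 2/(\pi t)$, the maximum $M=\max_{i\ge 2}|W_i|$ concentrates near $\Theta(n)$ and satisfies $\E{1/M}\approx \pi/(2n)$. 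The probability that $p^*$ wins the direction is then $\Pr{|W_1|>(1-\alpha)M}\approx \E{2/(\pi(1-\alpha)M)}=\Theta(1/((1-\alpha)n))=\Theta(1/n)$, not $n^{-(1-\alpha)}$. With $T=\tilde{O}(1/\alpha)$ directions per repetition, the per-repetition success probability is only $\tilde{O}(T/n)$; amplification then forces $L=\tilde{\Theta}(n/T)$ repetitions and total work $L\cdot T\cdot \tilde{O}(k)=\tilde{O}(kn)$, which is no speedup over a brute-force scan. The exponent $1-\alpha$ simply does not emerge from the ``max of Cauchy projections plus union bound'' primitive you describe; the argument in \cite{indyk2003better} proceeds by different means (the title of that paper is about asymmetric embeddings, not an LSH-style max-projection scheme).

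There is also a smaller technical slip in the stated building block. For independent standard Cauchy $W_1,W_2$, the combination $aW_1-bW_2$ is $\mathrm{Cauchy}(0,a+b)$ by $1$-stability, hence symmetric about $0$, so $\Pr{aW_1>bW_2}=1/2$ exactly for every $a,b>0$; there is no $\Omega(\alpha)$ advantage. You almost certainly intended $\Pr{|aW_1|>|bW_2|}$, which is indeed $1/2+\Theta(\alpha)$ when $b=(1-\alpha)a$ since the density of $|W_1|/|W_2|$ is positive at $1$. But even with that repair the per-repetition probability estimate above does not close the gap, so the subsequent parameter balancing cannot produce the claimed $n^{1-\alpha}$ bound.
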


We also need the following reduction from \cite{eppstein1995dynamic} which gives a general conversion from a data structure answering furthest neighbor queries to one which which maintains a furthest neighbor \emph{within} a set of points undergoing deletions.\footnote{The method of \cite{eppstein1995dynamic} is more general, but we only state the guarantees needed for our purposes.}

\begin{theorem}[\cite{eppstein1995dynamic}]\label{thm:eppstein}
Let $d( p, q)$ be a distance function for which some data structure allows us to perform $1-\alpha$ furthest neighbor queries, and insert and delete points, in time $T(n, \alpha)$
per operation, and let $T(n,\eps)$ be monotonic in $n$ and satisfy $T(3n, \alpha) = O(T(n, \alpha))$.Then we can maintain a $1+\eps$ approximation to the diameter, as well as a pair of points realizing that distance, in amortized time $\tilde{O}(T(n, \alpha))$ per insertion, and $\tilde{O}(T(n, \alpha))$ per deletion.\footnote{\cite{eppstein1995dynamic} only discusses the case of exact queries ($\alpha = 0$), but one can easily verify that the method can also handle the case where the underlying data structure outputs approximate queries.}
\end{theorem}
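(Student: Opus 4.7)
The plan is to maintain, for each point $p$ currently in the active set $S$, a ``representative'' $f(p) \in S \setminus \{p\}$ that is a $(1-\alpha)$-approximate furthest neighbor of $p$, together with a max-heap keyed by $d(p, f(p))$. The claim is that the top of this heap (when kept fresh) is a $(1-\alpha)$-approximation of $\mathrm{diam}(S)$, because if $(p^*, q^*)$ is a true diameter pair then $d(p^*, f(p^*)) \geq (1-\alpha) d(p^*, q^*) = (1-\alpha)\,\mathrm{diam}(S)$; setting $\alpha$ as an appropriate function of $\eps$ gives the $(1+\eps)$-approximation promised. The main work is in scheduling when to recompute $f(p)$-pointers so that insertions, deletions, and the heap remain correct at amortized cost $\tilde{O}(T(n,\alpha))$ per operation.

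For insertions, the plan is straightforward: when a point $p$ arrives, insert it into the underlying furthest-neighbor data structure (cost $T(n,\alpha)$), issue one furthest-neighbor query to set $f(p)$ (cost $T(n,\alpha)$), and push $(p, d(p,f(p)))$ into the heap (cost $O(\log n)$). Existing $f(q)$ values remain $(1-\alpha)$-approximate lower bounds on $q$'s furthest distance in the old set, and if the new diameter pair involves $p$ then the freshly computed $f(p)$ certifies it; otherwise the old top-of-heap entry still certifies the diameter. So the invariant ``the maximum over fresh heap entries is a $(1-\alpha)$-approximation of $\mathrm{diam}(S)$'' is preserved.

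For deletions, the hard case is that removing a point $p^*$ invalidates every $q$ with $f(q) = p^*$, and a single deletion could in principle invalidate $\Omega(n)$ pointers. The plan is to resolve this via hierarchical rebuilding in the Bentley--Saxe style: maintain $O(\log n)$ levels, where level $i$ is fully rebuilt (i.e., $f(p)$ is recomputed by one furthest-neighbor query per point) every $2^i$ update operations. A full rebuild at a level containing $m$ points costs $m \cdot T(m,\alpha)$; charging this against the $\Theta(2^i)$ operations since that level was last rebuilt, and using the stability hypothesis $T(3n,\alpha) = O(T(n,\alpha))$ to treat $T(\cdot,\alpha)$ as essentially size-invariant across a constant number of rebuilds, the amortized contribution of each level is $O(T(n,\alpha))$, summing to $\tilde{O}(T(n,\alpha))$ total. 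Between rebuilds, a deletion of $p^*$ is handled lazily: we delete $p^*$ from the underlying structure (cost $T(n,\alpha)$) and mark every heap entry $(q,\cdot)$ with $f(q) = p^*$ as stale; when such a stale entry bubbles to the top of the heap, we refresh it with one fresh furthest-neighbor query. A potential argument (assigning each point a potential proportional to $\log n$ minus the level of its current $f$-pointer) shows that the total number of on-demand refreshes between rebuilds is bounded by the rebuild budget, so the amortized cost stays $\tilde{O}(T(n,\alpha))$.

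The main obstacle I anticipate is the accounting for stale heap entries: one must guarantee that when the algorithm reports the diameter, the reported value is genuinely a $(1-\alpha)$-approximation despite potential staleness near the top. The cleanest way to handle this is to pay for a single top-of-heap refresh (one furthest-neighbor query) whenever the reported diameter is actually read, and to observe that every ``certified'' top entry $(p, d(p,f(p)))$ corresponds to a genuine pair in the current set, so it is always a \emph{lower} bound on $\mathrm{diam}(S)$; the $(1-\alpha)$ lower-bound direction then follows because after the most recent rebuild touching $p^*$ (an endpoint of the true diameter), $p^*$'s heap key was at least $(1-\alpha)\,\mathrm{diam}$, and subsequent deletions only shrink $\mathrm{diam}(S)$ while $p^*$'s key can only stay the same or be refreshed upward. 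Assembling these pieces, correctness plus the amortized bound $\tilde{O}(T(n,\alpha))$ per insertion and per deletion yield the stated theorem.
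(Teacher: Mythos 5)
The paper does not prove this theorem; it cites it as a black-box result from Eppstein and uses a footnote to assert that the exact-query version extends to approximate queries. So the relevant question is whether your sketch is a correct proof, not whether it matches a proof in the paper.

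Your overall architecture (per-point furthest-neighbor pointers $f(p)$, a max-heap keyed by $d(p,f(p))$, and periodic hierarchical rebuilds) is in the spirit of Eppstein's ``extrema of binary functions'' technique, and your insertion handling and correctness argument (one endpoint of the true diameter pair always has a key $\geq (1-\alpha)\cdot\mathrm{diam}(S)$) are sound. But there is a genuine gap in the deletion amortization. A single deleted point $q$ can be the $f$-pointer of $\Omega(n)$ live points simultaneously (e.g., $q$ is the unique far outlier), so one deletion can make $\Omega(n)$ heap entries stale at once. After that deletion, to report any answer you may have to refresh a long prefix of the heap before you reach a non-stale entry, so ``pay for a single top-of-heap refresh whenever the reported diameter is read'' does not bound the work. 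The potential you propose ($\log n$ minus the level of $f(p)$) does not control this: deleting one point can raise the aggregate potential by $\Theta(n)$ in one step, and a refresh is not guaranteed to move $f(p)$ to a higher level, so neither the increase nor the decrease side of the accounting closes. Your rebuild schedule also does not rescue this, because between consecutive rebuilds of a level there can be many such deletions.

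The missing ingredient is to avoid a single per-point pointer whose freshness depends on an arbitrary other point's lifetime. Eppstein's actual reduction instead maintains, for each of $O(\log n)$ groups in the Bentley--Saxe decomposition, the bichromatic farthest pair between pairs of groups, so that a deletion affects only the $O(\log n)$ pairwise answers involving the deleted point's group and the cost of recomputing them can be charged against the size of the group that was last rebuilt; the stability hypothesis $T(3n,\alpha)=O(T(n,\alpha))$ is used precisely so that lazily deleted points can be left inside a group until the next rebuild without hurting the query time. As written, your proof is missing this decomposition, and the lazy-refresh-plus-potential argument does not substitute for it.
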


Instantiating Theorem \ref{thm:eppstein} on the data structure of Theorem \ref{thm:indyk} gives us the following corollary:

\begin{corollary}\label{cor:diameter}
      Let $\alpha \in (0, 1)$. Given a set of $n$ points in $\ell_2^{2}$ in dimension $k$, there exists a datastructure which requires polynomial in $n$ and $k$ preprocessing time and can answer $1-\alpha$-approximate diameter queries and supports deletions in amortized $\tilde{O}(kn^{1-\alpha}/\alpha)$ time per operation. 
\end{corollary}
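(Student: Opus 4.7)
The plan is to obtain \cref{cor:diameter} by a direct composition of \cref{thm:indyk} and \cref{thm:eppstein}. Concretely, I would invoke \cref{thm:indyk} as the underlying data structure $\mathcal{D}$ supporting $1-\alpha$-approximate furthest neighbor queries in $\ell_1^k$ along with insertions and deletions, each in time $T(n,\alpha) = \tilde{O}(k n^{1-\alpha}/\alpha)$ after $\mathrm{poly}(n,k)$ preprocessing. I would then feed $\mathcal{D}$ into \cref{thm:eppstein} as a black box: the latter gives a reduction that promotes a data structure for (approximate) furthest neighbor queries to one that maintains a $(1-\alpha)$-approximation of the diameter of a dynamic point set under deletions, with only polylogarithmic overhead per operation.

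Before applying \cref{thm:eppstein} I have to check the two mild technical preconditions on $T$. Monotonicity in $n$ is immediate since $n \mapsto k n^{1-\alpha}/\alpha$ is increasing for $\alpha \in (0,1)$. For the doubling condition,
\[
T(3n,\alpha) \;=\; \tilde{O}\!\left(\frac{k (3n)^{1-\alpha}}{\alpha}\right) \;=\; \tilde{O}\!\left(\frac{k\, 3^{1-\alpha}\, n^{1-\alpha}}{\alpha}\right) \;=\; O(T(n,\alpha)),
\]
using that $3^{1-\alpha} \le 3$ is a universal constant. Both hypotheses of \cref{thm:eppstein} therefore hold, and its conclusion yields amortized $\tilde{O}(T(n,\alpha)) = \tilde{O}(k n^{1-\alpha}/\alpha)$ per deletion while maintaining a $1-\alpha$ approximation of the current diameter together with a witnessing pair, which is exactly the statement of the corollary. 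The preprocessing time is the preprocessing time of $\mathcal{D}$ plus the polynomial-time setup of the reduction in \cref{thm:eppstein}, which is $\mathrm{poly}(n,k)$ in total.

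There is no real mathematical obstacle; the only point that deserves care is that \cref{thm:eppstein} as originally stated is phrased for exact furthest neighbor oracles, whereas here the underlying oracle returns only a $(1-\alpha)$-approximation. Fortunately the approximation guarantee is preserved throughout the reduction of \cref{thm:eppstein} (as noted in the footnote following its statement in the excerpt), so no modification of the reduction is required beyond re-reading the proof with ``distance'' replaced by ``$1-\alpha$-approximate distance'' throughout. Once this observation is in place, the composition is immediate and gives the claimed bound.
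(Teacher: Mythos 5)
Your proof matches the paper's, which likewise obtains the corollary by plugging the data structure of \cref{thm:indyk} into the reduction of \cref{thm:eppstein}; you simply spell out the (routine) checks of monotonicity and the $T(3n,\alpha)=O(T(n,\alpha))$ condition that the paper leaves implicit. This is the same approach, just with the bookkeeping made explicit.
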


Combining everything proves the following result:

\begin{proof}[Proof of \Cref{cor:preprocessing}]
We simply instantiate the data structure of \Cref{cor:diameter} on the $n$ input hypothesis, viewed as vectors in $\ell_2^{2}$ in dimension $k = O(\log(n)/\eps^2)$, via  \Cref{cor:dim_reduction}. We can simulate the Tournament Revelation Problem of Definition \ref{def:tournament_problem} by retrieving a $1-\eps$ approximate diameter of the remaining hypothesis after every comparison step of Line 5 of \Cref{algo:mahalanabis}. Thus, the total running time across $n$ operations is $\tilde{O}(k \cdot n^{2-\eps}/\eps)$, coming from the $n$ applications of  $1-\eps$ approximate diameter queries, and an overhead of $O(\log(n)/\eps^2)$ per operation, coming from the time to simulate every comparison of line 5 of \Cref{algo:mahalanabis} by computing approximate semi-distances.

It only remains to prove the approximation factor. This is not immediate since the tournament revelation problem requires us to output the exact diameter at every stage. However, it is straightforward to check that obtaining approximate diameter pairs suffices. Indeed, Corollary 11 in \cite{mahalanabis2007density} already shows that if we instead only have access to $1\pm \eps$ approximation to the $\dtv$ distances, instead of exact values when constructing the list $L$ of \Cref{algo:mahalanabis}, then simulating the algorithm as is leads to an approximation factor of $3+\eps$, as desired.
\end{proof}

\section*{Acknowledgements}
The authors thank Sushruth Reddy for useful discussions leading to the algorithm of \Cref{sec:expected}, and Shay Moran for valuable discussion regarding \Cref{q:improper} and \Cref{sec:expected-lb}.

Anders Aamand was supported by the VILLUM Foundation grant 54451. Justin Y. Chen was supported by an NSF Graduate Research Fellowship under Grant No. 17453. Part of this work was done when Maryam Aliakbarpour and Justin Y. Chen were visiting the Simons Institute for  Theory of Computing as part of the program on Sublinear Algorithms.

\bibliographystyle{alpha}
\bibliography{bib}

@inproceedings{AliakbarpourBR16,
  author       = {Maryam Aliakbarpour and
                  Eric Blais and
                  Ronitt Rubinfeld},
  editor       = {Vitaly Feldman and
                  Alexander Rakhlin and
                  Ohad Shamir},
  title        = {Learning and Testing Junta Distributions},
  booktitle    = {Proceedings of the 29th Conference on Learning Theory, {COLT} 2016,
                  New York, USA, June 23-26, 2016},
  series       = {{JMLR} Workshop and Conference Proceedings},
  volume       = {49},
  pages        = {19--46},
  publisher    = {JMLR.org},
  year         = {2016},
}

@article{aliakbarpour2024optimal,
  title={Optimal Hypothesis Selection in (Almost) Linear Time},
  author={Aliakbarpour, Maryam and Bun, Mark and Smith, Adam},
  journal={Advances in Neural Information Processing Systems},
  volume={37},
  pages={141490--141527},
  year={2024}
}

@book{devroye2001combinatorial,
  author    = {Luc Devroye and G{\'a}bor Lugosi},
  title     = {Combinatorial Methods in Density Estimation},
  year      = {2001},
  publisher = {Springer},
  series    = {Springer Series in Statistics},
  isbn      = {978-0-387-95115-8},
  doi       = {10.1007/978-1-4613-0125-7},
  url       = {https://link.springer.com/book/10.1007/978-1-4613-0125-7},
}

@inproceedings{mahalanabis2007density,
  author       = {Satyaki Mahalanabis and
                  Daniel Stefankovic},
  editor       = {Rocco A. Servedio and
                  Tong Zhang},
  title        = {Density Estimation in Linear Time},
  booktitle    = {21st Annual Conference on Learning Theory - {COLT} 2008, Helsinki,
                  Finland, July 9-12, 2008},
  pages        = {503--512},
  publisher    = {Omnipress},
  year         = {2008},
  url          = {http://colt2008.cs.helsinki.fi/papers/105-Mahalanabis.pdf},
  timestamp    = {Thu, 12 Mar 2020 11:31:00 +0100},
  biburl       = {https://dblp.org/rec/conf/colt/MahalanabisS08.bib},
  bibsource    = {dblp computer science bibliography, https://dblp.org}
}

@inproceedings{bousquet2019optimal,
  title={The optimal approximation factor in density estimation},
  author={Bousquet, Olivier and Kane, Daniel and Moran, Shay},
  booktitle={Conference on Learning Theory},
  pages={318--341},
  year={2019},
  organization={PMLR}
}

@inproceedings{indyk2003better,
  title={Better algorithms for high-dimensional proximity problems via asymmetric embeddings},
  author={Indyk, Piotr},
  booktitle={Proceedings of the fourteenth annual ACM-SIAM symposium on Discrete algorithms},
  pages={539--545},
  year={2003}
}

@article{eppstein1995dynamic,
  title={Dynamic Euclidean minimum spanning trees and extrema of binary functions},
  author={Eppstein, David},
  journal={Discrete \& Computational Geometry},
  volume={13},
  pages={111--122},
  year={1995},
  publisher={Springer}
}

@inproceedings{rubinstein2018hardness,
  title={Hardness of approximate nearest neighbor search},
  author={Rubinstein, Aviad},
  booktitle={Proceedings of the 50th annual ACM SIGACT symposium on theory of computing},
  pages={1260--1268},
  year={2018}
}

@inproceedings{daskalakis2012learning,
  title={Learning poisson binomial distributions},
  author={Daskalakis, Constantinos and Diakonikolas, Ilias and Servedio, Rocco A},
  booktitle={Proceedings of the forty-fourth annual ACM symposium on Theory of computing},
  pages={709--728},
  year={2012}
}

@article{kalai2012disentangling,
  title={Disentangling gaussians},
  author={Kalai, Adam Tauman and Moitra, Ankur and Valiant, Gregory},
  journal={Communications of the ACM},
  volume={55},
  number={2},
  pages={113--120},
  year={2012},
  publisher={ACM New York, NY, USA}
}

@inproceedings{daskalakis2014faster,
  title={Faster and sample near-optimal algorithms for proper learning mixtures of gaussians},
  author={Daskalakis, Constantinos and Kamath, Gautam},
  booktitle={Conference on Learning Theory},
  pages={1183--1213},
  year={2014},
  organization={PMLR}
}

@article{suresh2014near,
  title={Near-optimal-sample estimators for spherical gaussian mixtures},
  author={Suresh, Ananda Theertha and Orlitsky, Alon and Acharya, Jayadev and Jafarpour, Ashkan},
  journal={Advances in Neural Information Processing Systems},
  volume={27},
  year={2014}
}

@inproceedings{kothari2018robust,
  title={Robust moment estimation and improved clustering via sum of squares},
  author={Kothari, Pravesh K and Steinhardt, Jacob and Steurer, David},
  booktitle={Proceedings of the 50th Annual ACM SIGACT Symposium on Theory of Computing},
  pages={1035--1046},
  year={2018}
}

@inproceedings{ashtiani2018sample,
  title={Sample-efficient learning of mixtures},
  author={Ashtiani, Hassan and Ben-David, Shai and Mehrabian, Abbas},
  booktitle={Proceedings of the AAAI Conference on Artificial Intelligence},
  year={2018}
}

@article{ashtiani2020near,
  title={Near-optimal sample complexity bounds for robust learning of gaussian mixtures via compression schemes},
  author={Ashtiani, Hassan and Ben-David, Shai and Harvey, Nicholas JA and Liaw, Christopher and Mehrabian, Abbas and Plan, Yaniv},
  journal={Journal of the ACM (JACM)},
  volume={67},
  number={6},
  pages={1--42},
  year={2020},
  publisher={ACM New York, NY, USA}
}

@article{canonne2022nearly,
  title={Nearly-tight bounds for testing histogram distributions},
  author={Canonne, Cl{\'e}ment L and Diakonikolas, Ilias and Kane, Daniel and Liu, Sihan},
  journal={Advances in Neural Information Processing Systems},
  volume={35},
  pages={31599--31611},
  year={2022}
}

@article{diakonikolas2016learning,
  title={Learning Structured Distributions.},
  author={Diakonikolas, Ilias},
  journal={Handbook of Big Data},
  volume={267},
  pages={10--1201},
  year={2016}
}

@phdthesis{nguyen2014algorithms,
  author       = {Nguyen, Huy Le},
  title        = {Algorithms for High Dimensional Data},
  school       = {Princeton University},
  year         = {2014},
  url          = {https://arks.princeton.edu/ark:/88435/dsp01b8515q61f},
  note         = {PhD thesis}
}

@article{johnson1984extensions,
  title={Extensions of Lipschitz mappings into a Hilbert space},
  author={Johnson, William B and Lindenstrauss, Joram and others},
  journal={Contemporary mathematics},
  volume={26},
  number={189-206},
  pages={1},
  year={1984}
}

@inproceedings{AcharyaD0S17,
  author       = {Jayadev Acharya and
                  Ilias Diakonikolas and
                  Jerry Li and
                  Ludwig Schmidt},
  editor       = {Philip N. Klein},
  title        = {Sample-Optimal Density Estimation in Nearly-Linear Time},
  booktitle    = {Proceedings of the Twenty-Eighth Annual {ACM-SIAM} Symposium on Discrete
                  Algorithms, {SODA} 2017, Barcelona, Spain, Hotel Porta Fira, January
                  16-19},
  pages        = {1278--1289},
  publisher    = {{SIAM}},
  year         = {2017},
  url          = {https://doi.org/10.1137/1.9781611974782.83},
  doi          = {10.1137/1.9781611974782.83},
  timestamp    = {Tue, 02 Feb 2021 17:07:53 +0100},
  biburl       = {https://dblp.org/rec/conf/soda/AcharyaD0S17.bib},
  bibsource    = {dblp computer science bibliography, https://dblp.org}
}

@inproceedings{ChanDSS14,
  author       = {Siu On Chan and
                  Ilias Diakonikolas and
                  Rocco A. Servedio and
                  Xiaorui Sun},
  editor       = {Zoubin Ghahramani and
                  Max Welling and
                  Corinna Cortes and
                  Neil D. Lawrence and
                  Kilian Q. Weinberger},
  title        = {Near-Optimal Density Estimation in Near-Linear Time Using Variable-Width
                  Histograms},
  booktitle    = {Advances in Neural Information Processing Systems 27: Annual Conference
                  on Neural Information Processing Systems 2014, December 8-13 2014,
                  Montreal, Quebec, Canada},
  pages        = {1844--1852},
  year         = {2014},
  url          = {https://proceedings.neurips.cc/paper/2014/hash/287e03db1d99e0ec2edb90d079e142f3-Abstract.html},
  timestamp    = {Thu, 14 Sep 2023 15:14:44 +0200},
  biburl       = {https://dblp.org/rec/conf/nips/ChanDSS14.bib},
  bibsource    = {dblp computer science bibliography, https://dblp.org}
}

@inproceedings{aamand2023data,
  title={Data structures for density estimation},
  author={Aamand, Anders and Andoni, Alexandr and Chen, Justin Y and Indyk, Piotr and Narayanan, Shyam and Silwal, Sandeep},
  booktitle={International Conference on Machine Learning},
  pages={1--18},
  year={2023},
  organization={PMLR}
}

@inproceedings{bousquet2022statistically,
  title={Statistically near-optimal hypothesis selection},
  author={Bousquet, Olivier and Braverman, Mark and Kol, Gillat and Efremenko, Klim and Moran, Shay},
  booktitle={2021 IEEE 62nd Annual Symposium on Foundations of Computer Science (FOCS)},
  pages={909--919},
  year={2021},
  organization={IEEE}
}

@article{aamand2024statistical,
  title={Statistical-Computational Trade-offs for Density Estimation},
  author={Aamand, Anders and Andoni, Alexandr and Chen, Justin and Indyk, Piotr and Narayanan, Shyam and Silwal, Sandeep and Xu, Haike},
  journal={Advances in Neural Information Processing Systems},
  volume={37},
  pages={97907--97927},
  year={2024}
}

@article{yatracos1985rates,
  title={Rates of convergence of minimum distance estimators and Kolmogorov's entropy},
  author={Yatracos, Yannis G},
  journal={The Annals of Statistics},
  volume={13},
  number={2},
  pages={768--774},
  year={1985},
  publisher={Institute of Mathematical Statistics}
}

@article{devroye1996universally,
  title={A universally acceptable smoothing factor for kernel density estimates},
  author={Devroye, Luc and Lugosi, G{\'a}bor},
  journal={The Annals of Statistics},
  pages={2499--2512},
  year={1996},
  publisher={JSTOR}
}

@article{devroye1997nonasymptotic,
  title={Nonasymptotic universal smoothing factors, kernel complexity and Yatracos classes},
  author={Devroye, Luc and Lugosi, G{\'a}bor},
  journal={The Annals of Statistics},
  pages={2626--2637},
  year={1997},
  publisher={JSTOR}
}

@article{acharya2018maximum,
  title={Maximum selection and sorting with adversarial comparators},
  author={Acharya, Jayadev and Falahatgar, Moein and Jafarpour, Ashkan and Orlitsky, Alon and Suresh, Ananda Theertha},
  journal={Journal of Machine Learning Research},
  volume={19},
  number={59},
  pages={1--31},
  year={2018}
}

@inproceedings{acharya2014sorting,
  title={Sorting with adversarial comparators and application to density estimation},
  author={Acharya, Jayadev and Jafarpour, Ashkan and Orlitsky, Alon and Suresh, Ananda Theertha},
  booktitle={2014 IEEE International Symposium on Information Theory},
  pages={1682--1686},
  year={2014},
  organization={IEEE}
}

@article{bun2019private,
  title={Private hypothesis selection},
  author={Bun, Mark and Kamath, Gautam and Steinke, Thomas and Wu, Steven Z},
  journal={Advances in Neural Information Processing Systems},
  volume={32},
  year={2019}
}

@inproceedings{gopi2020locally,
  title={Locally private hypothesis selection},
  author={Gopi, Sivakanth and Kamath, Gautam and Kulkarni, Janardhan and Nikolov, Aleksandar and Wu, Zhiwei Steven and Zhang, Huanyu},
  booktitle={Conference on Learning Theory},
  pages={1785--1816},
  year={2020},
  organization={PMLR}
}

@article{aliakbarpour2023hypothesis,
  title={Hypothesis selection with memory constraints},
  author={Aliakbarpour, Maryam and Bun, Mark and Smith, Adam},
  journal={Advances in Neural Information Processing Systems},
  volume={36},
  pages={50453--50481},
  year={2023}
}

@inproceedings{berman2014lp,
  title={Lp-testing},
  author={Berman, Piotr and Raskhodnikova, Sofya and Yaroslavtsev, Grigory},
  booktitle={Proceedings of the forty-sixth annual ACM symposium on Theory of computing},
  pages={164--173},
  year={2014}
}

@article{Aliakbarpour2025,
  title        = {Nearly-Linear Time Private Hypothesis Selection with the Optimal Approximation Factor},
  author       = {Aliakbarpour, Maryam and Shi, Zhan and Stevens, Ria and Wang, Vincent X.},
  journal      = {arXiv preprint arXiv:2506.01162},
  year         = {2025},
  doi          = {10.48550/arXiv.2506.01162},
  note         = {Submitted 1 June 2025}
}

@misc{canonne2020,
author = {Canonne, Clément},
year = {2020},
month = {02},
pages = {},
title = {A short note on learning discrete distributions},
doi = {10.48550/arXiv.2002.11457}
}

@misc{kamath2025local,
author = {Kamath, Gautam and Pour, Alireza and Regehr, Matthew and Woodruff, David},
year = {2025},
month = {09},
pages = {},
title = {Query-Efficient Locally Private Hypothesis Selection via the Scheffe Graph},
doi = {10.48550/arXiv.2509.16180}
}

@article{valiant2017unseen,
author = {Valiant, Gregory and Valiant, Paul},
title = {Estimating the Unseen: Improved Estimators for Entropy and Other Properties},
year = {2017},
issue_date = {December 2017},
publisher = {Association for Computing Machinery},
address = {New York, NY, USA},
volume = {64},
number = {6},
issn = {0004-5411},
url = {https://doi.org/10.1145/3125643},
doi = {10.1145/3125643},
journal = {Journal of the ACM},
month = oct,
articleno = {37},
numpages = {41},
keywords = {Statistical property estimation, distinct elements, entropy estimation, unseen species}
}

@article{valiant2017automatic,
  title={An automatic inequality prover and instance optimal identity testing},
  author={Valiant, Gregory and Valiant, Paul},
  journal={SIAM Journal on Computing},
  volume={46},
  number={1},
  pages={429--455},
  year={2017},
  publisher={SIAM}
}

@inproceedings{diakonikolas2014testing,
  title={Testing identity of structured distributions},
  author={Diakonikolas, Ilias and Kane, Daniel M and Nikishkin, Vladimir},
  booktitle={Proceedings of the twenty-sixth annual ACM-SIAM symposium on Discrete algorithms},
  pages={1841--1854},
  year={2014},
  organization={SIAM}
}

@article{paninski2008coincidence,
  title={A coincidence-based test for uniformity given very sparsely sampled discrete data},
  author={Paninski, Liam},
  journal={IEEE Transactions on Information Theory},
  volume={54},
  number={10},
  pages={4750--4755},
  year={2008},
  publisher={IEEE}
}

\appendix

\section{Fast Quantile-Based Algorithm}
\label{sec:quantile}

\begin{algorithm}[h]
\caption{Faster Hypothesis Selection}\label{algo:fastselect}
\begin{algorithmic}[1]
\State Active set $S_0 \gets [n]$
\State $\ell \gets 0$
\While{$|S_\ell| > 0$}
    \State $\ell \gets \ell + 1$
    \For{$i \in [n]$}
        \State Sample $O(\log(n)/\delta)$ indices $R$ from  $S_{\ell-1}$ uniformly at random with replacement
        \State Choose minimal $a_i$ s.t. $\abs{\crb{j \in R : \hsemi{i}{j} \geq a_i}} \leq \ceil{2\delta |R|}$ 
        \Comment{$H_i$ kicks less than a $2\delta$ fraction above $a_i$}
    \EndFor
    \State $t_\ell \gets \max a_i$ and $i_\ell \gets \argmax a_i$
    \State $S_\ell \gets \crb{j \in S_{\ell - 1} : \hsemi{i_\ell}{j} < t_\ell}$
\EndWhile
\State $\ell' \gets \argmin t_\ell$
\State \Return $H_i$ where $i$ is chosen u.a.r.\ from $S_{\ell'-1}$
\end{algorithmic}
\end{algorithm}

For simplicity, we will assume that $\semi{i}{j}$ take on distinct values for all $i, j$.

\begin{theorem}\label{thm:fastselect}
Consider any $\eps > 0$ and $\delta > 1/n$.
\Cref{algo:fastselect} returns a hypothesis $H_i$ such that
\[
    \dtv(P, H_i) \leq 3 \cdot \OPT  + O(\eps)
\]
with probability $1 - O(\delta)$.
The algorithm uses $O\p{\frac{\log n}{\eps^2}}$ samples and runs in time $O\p{\frac{n \log^2 n}{\eps^2 \delta^2}}$.
\end{theorem}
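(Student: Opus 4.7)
The plan is to identify an iteration $\ell'$ at which the threshold $t_{\ell'}$ is at most $\OPT+O(\eps)$ and at which all but an $O(\delta)$ fraction of $S_{\ell'-1}$ yields a valid $3$-approximation; by construction $\ell'$ is exactly the $\argmin$ iteration the algorithm outputs. Throughout I would condition on the event from \cref{lem:sample-scheffe} that $|\hsemi{i}{j}-\semi{i}{j}|\le \eps$ holds for every ordered pair $i\ne j$; a union bound over the $O(n^2)$ pairs, using $s=\Theta(\log n/\eps^2)$ samples from $P$, gives this simultaneously with probability $1-1/\poly(n)$.

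The key claim to establish is $\min_\ell t_\ell\le \OPT+\eps$. Because the algorithm terminates with $|S_L|=0$ at some final iteration $L$, the probe $i_L$ must kick every element of $S_{L-1}$: $\hsemi{i_L}{j}\ge t_L$ for all $j\in S_{L-1}$. Two cases then suffice. If $i^*\in S_{L-1}$, applying the kick relation at $j=i^*$ and invoking $\semi{i_L}{i^*}\le \OPT$ (\cref{prop:semidist-underest}) gives $t_L\le \hsemi{i_L}{i^*}\le \OPT+\eps$. Otherwise $i^*$ was kicked at some earlier iteration $\ell^*$, so $\hsemi{i_{\ell^*}}{i^*}\ge t_{\ell^*}$, and the same underestimation property gives $t_{\ell^*}\le \OPT+\eps$. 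Either way $\min_\ell t_\ell\le \OPT+\eps$.

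Given the key claim, at $\ell'=\argmin_\ell t_\ell$ the maximality of $t_{\ell'}$ yields $a_{i^*}\le t_{\ell'}\le \OPT+\eps$. By definition of $a_{i^*}$, at most $\lceil 2\delta|R|\rceil$ of the sampled indices $j\in R$ satisfy $\hsemi{i^*}{j}\ge a_{i^*}$. I would then apply a multiplicative Chernoff bound to transfer this sample statistic to the population $S_{\ell'-1}$, concluding that with high probability the \emph{true} fraction of $j\in S_{\ell'-1}$ with $\hsemi{i^*}{j}\ge a_{i^*}$ is $O(\delta)$. For each of the remaining $1-O(\delta)$ fraction, $\semi{i^*}{j}\le \OPT+2\eps$, and \cref{prop:semidist-approx} then gives $\dtv(P,H_j)\le 3\OPT+O(\eps)$. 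A uniformly random draw from $S_{\ell'-1}$ thus succeeds with probability $1-O(\delta)$.

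The main obstacle I anticipate is this quantile-transfer step. A naive additive Hoeffding bound on $|R|=\Theta(\log n/\delta)$ samples would only yield an $O(\sqrt{\delta})$ deviation in the estimated quantile, degrading the final success probability to $1-O(\sqrt{\delta})$; the claimed $1-O(\delta)$ guarantee genuinely needs a \emph{multiplicative} Chernoff bound tuned to the small expected mass $\approx 2\delta$, coupled with a union bound over the $O(n\log n/\delta)$ quantile estimates performed across all iterations and probes. The same multiplicative Chernoff argument doubles as the termination bound: as long as $|S_{\ell-1}|\gg 1/\delta$ it shows $|S_\ell|\le (1-\Omega(\delta))|S_{\ell-1}|$, so the algorithm ends within $O(\log n/\delta)$ iterations. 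Multiplying by the per-iteration work (an $O(n)$ outer loop each performing $O(\log n/\delta)$ semi-distance computations at $O(\log n/\eps^2)$ cost per computation) then yields the claimed $O(n\log^2 n/(\eps^2\delta^2))$ runtime.
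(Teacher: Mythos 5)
Your proposal matches the paper's own proof in all essential respects: the key claim $\min_\ell t_\ell \le \OPT + \eps$ is established by locating the round where $H_{i^*}$ is evicted and invoking \cref{prop:semidist-underest} (your two-case split is just an unrolling of the paper's single ``round where $H_{i^*}$ is kicked out''), the output guarantee follows from $a_{i^*} \le t_{\ell'}$ plus a multiplicative Chernoff quantile-transfer (the paper's $b_i,c_i$ quantile bracketing), and the $O(\log n/\delta)$-iteration and per-iteration-cost bookkeeping is identical. Your explicit note that a multiplicative Chernoff bound is needed (an additive Hoeffding bound would give $\sqrt{\delta}$ rather than $\delta$ error) and that the same concentration argument drives both the quantile transfer and the geometric shrinkage of $|S_\ell|$ are accurate and, if anything, make the role of the Chernoff bound slightly more explicit than the paper does.
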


\begin{proof}
We separately handle correctness and runtime. The sample complexity is fixed by definition in the algorithm.

\paragraph{Correctness}

We will first show that $t_{\ell'} \leq \OPT + O(\eps)$.
It suffices to show that there exists some $t_\ell \leq \OPT + O(\eps)$ as $\ell'$ is chosen to minimize $t_{\ell'}$.
Consider the round $\ell$ where $H_{i^*}$ is kicked out of the active set: $i^* \in S_{\ell-1} \setminus S_\ell$.
Note that such a round must exist as the algorithm only terminates once the active set is empty.
If $H_{i^*}$ is kicked out, it must be the case that $\hsemi{i_{\ell}}{i^*} \geq t_\ell$.
By \Cref{prop:semidist-underest}, $t_\ell \leq \OPT$.
The statement follows from \Cref{lem:sample-scheffe}.

Consider the empirical quantile $a_i$ given $m=O(\log(n)/\delta)$ random indices $R$ for a fixed $i, \ell$.
Let $b_i, c_i$ be the true $(1-3\delta)$ and $(1-\delta)$-quantiles of $\hsemi{i}{j}$ across all $j \in S_{\ell-1}$, respectively.
Let $X$ be the number of sampled elements indices where $\hsemi{i}{j} \geq b_i$, so $\E{X} = 3\delta m$.
Then, by a standard Chernoff bound,
\begin{equation*}
    \Pr{X \leq 2\delta m} \leq \exp(-\Omega(\delta m)).
\end{equation*}
Similarly, if we let $Y$ be the number of sampled indices where $\hsemi{i}{j} \geq c_i$, then
\begin{equation*}
    \Pr{Y \geq 2\delta m} \leq \exp(-\Omega(\delta m)).
\end{equation*}
By our choice of $m$, with high probability over $\poly(n)$ iterations of the algorithm, $a_i$ will be such that $\abs{j \in S_{\ell-1}: \hsemi{i}{j} \geq a_i} \in \sqb{\delta |S_{\ell-1}|, 3\delta |S_{\ell-1}|}$

Combined with the prior claim, for a $1-O(\delta)$ fraction of the $j \in S_{\ell'-1}$, $\semi{i^*}{j} \leq t_{\ell'} + \eps \leq \OPT + O(\eps)$.
Otherwise, there would be a larger threshold at round $\ell$.
Therefore, a $1-O(\delta)$ fraction of the hypotheses in $S_{\ell'-1}$ are valid approximate solutions.

\paragraph{Runtime}
The total number of iterations is $O\p{\frac{\log n}{\delta}}$ as the size of the active set decreases by a factor of $1-\Omega(\delta)$ in each round.
Within each round, computing $a_i$ requires $O\p{\frac{\log^2(n)}{\eps^2 \delta}}$ time.
As this computation must be done for all $i \in [n]$, the total runtime of the algorithm is
\begin{equation*}
    O\p{\frac{n \log^3 n}{\eps^2 \delta^2}}.
\end{equation*}
\end{proof}

\section{Any Constant Approximation Lower Bound}\label{sec:anyapprox-lb}

We briefly remark that $\Omega(\log(n)/\eps^2)$ samples are necessary to achieve any multiplicative factor guarantee for hypothesis selection, even allowing for outputting an improper distribution (i.e. even if we are allowed to output a distribution not among the known set of $n$ hypothesis). This follows from standard lower bounds from learning discrete distributions and has already been noted to imply lower bounds for hypothesis selection, e.g. see \cite{ChanDSS14}, but we present the full details below for completeness.

First let's define the following `hard' set $\mathcal{H}_k = \{h_1, \cdots, h_{2^k}\}$ of hypotheses. We set $k$ to be the domain size (suppose it is even for simplicity) and let $\eps \in (0, 1)$ be sufficiently small. Split the domain into $k/2$ pairs of size $2$ of consecutive elements. For every pair $\{i, i+1\}$, $\mathcal{H}_k$ consists of all distributions that either have probability mass $\left( \frac{1+\eps}k, \frac{1-\eps}k \right)$ or $\left( \frac{1-\eps}k, \frac{1+\eps}k \right)$ on $i, i+1$ respectively. Thus $|\mathcal{H}_k| = 2^{k/2}$.

The following is a folklore result (see~\cite{canonne2020}).

\begin{theorem}\label{thm:lb_learning_discrete}
    Let $\eps \in (0, 1/2)$. Let $\mathcal{A}$ be an algorithm that draws $m$ samples from a discrete distribution $P \in \mathcal{H}_k$ and with probability at least $9/10$ outputs a
distribution $\hat{P}$ at total variation distance at most $\eps$ from $P$. Then $m = \Omega(k/\eps^2)$.
\end{theorem}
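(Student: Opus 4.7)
The plan is to establish the bound via Assouad's method on the sign-pattern parametrization of $\mathcal{H}_k$. Index each hypothesis by $b \in \{\pm 1\}^{k/2}$ with $h_b(2i-1) = (1+b_i\eps)/k$ and $h_b(2i) = (1-b_i\eps)/k$. The first step is a geometric reduction from TV-learning to coordinate-wise sign recovery: given any candidate $\hat P$, define $\hat b_i = \operatorname{sign}(\hat P(2i-1) - \hat P(2i))$ with arbitrary tie-breaking. Whenever $\hat b_i \neq b_i$, the masses of $\hat P$ on pair $i$ are ordered opposite to those of $h_b$, and a short argument shows the pair alone contributes at least $|h_b(2i-1) - h_b(2i)| = 2\eps/k$ to $\|\hat P - h_b\|_1$. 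Summing over pairs gives $\dtv(\hat P, h_b) \ge (\eps/k)\,d_H(\hat b, b)$, so any learner achieving TV-accuracy $c_1 \eps$ with probability $9/10$ also recovers $b$ in Hamming distance $\le c_1 k$ with probability $9/10$, for a small absolute constant $c_1$ to be chosen.

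The second step is to lower-bound the sample complexity of recovering $b$ in this Hamming sense via Assouad's lemma. For $b$ and $b'$ differing in a single coordinate, the KL divergence decomposes so that only the differing pair contributes, and a direct computation using the elementary bound $(1+\eps)\ln\frac{1+\eps}{1-\eps} + (1-\eps)\ln\frac{1-\eps}{1+\eps} = O(\eps^2)$ together with the pair's total mass of $2/k$ yields $\mathrm{KL}(h_b \| h_{b'}) = O(\eps^2/k)$. Tensorizing across $m$ samples and applying Pinsker's inequality gives $\dtv(h_b^{\otimes m}, h_{b'}^{\otimes m}) = O(\sqrt{m\eps^2/k})$. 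Plugging this into Assouad's lemma in its Hamming-error form,
\[
\max_b \mathbb{E}_b[d_H(\hat b, b)] \ge \frac{k}{4}\Bigl(1 - O\bigl(\sqrt{m\eps^2/k}\bigr)\Bigr),
\]
shows that for any sample size $m = o(k/\eps^2)$ the expected Hamming error is at least $(1 - o(1))k/4$, contradicting the $O(c_1 k)$ recovery guaranteed by the reduction once $c_1$ is chosen small enough.

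The main obstacle is a bookkeeping issue with constants: as written, $\mathcal{H}_k$ has TV-diameter exactly $\eps$ (and indeed the uniform distribution on $[k]$ lies at TV-distance $\eps/2$ from every $h_b$), so the reduction only makes sense for learning to an accuracy $c_1 \eps$ with $c_1 < 1$ rather than to $\eps$ itself. The standard remedy is to run the argument on a rescaled family of bias $c'\eps$ for a small absolute $c'$, which shrinks every relevant distance by the same constant factor and converts a TV-$\eps$ learner into a sign-pattern recoverer with Hamming error well below $k/4$. Since this rescaling only alters the $\eps$-dependence by a constant, the asymptotic conclusion $m = \Omega(k/\eps^2)$ is preserved; the remaining content of the proof is purely the KL computation and the standard Hamming-form application of Assouad's lemma described above.
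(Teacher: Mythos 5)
The paper cites this result as folklore and gives no proof, so there is no in-paper argument to compare against. Your Assouad-based approach is the standard and correct one for this hard family: the sign-pattern parametrization, the pair-by-pair reduction $\dtv(\hat P, h_b) \geq (\eps/k)\,d_H(\hat b, b)$, and the per-coordinate KL bound $\mathrm{KL}(h_b\,\|\,h_{b'}) = \frac{2\eps}{k}\ln\frac{1+\eps}{1-\eps} = O(\eps^2/k)$ all check out, and they combine via Pinsker and Assouad exactly as you describe. You are also right that the theorem as literally stated is broken: $\mathcal{H}_k$ has TV-diameter exactly $\eps$, the uniform distribution on $[k]$ is at TV-distance $\eps/2$ from every member of $\mathcal{H}_k$, and so the zero-sample algorithm that always outputs uniform already meets the $\eps$-accuracy guarantee. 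That is a genuine (if benign) bug in the paper's statement of the folklore result.

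However, your proposed remedy goes in the wrong direction. You suggest rerunning the argument on a family with bias $c'\eps$ for a \emph{small} constant $c' < 1$, but shrinking the bias only shrinks the diameter, so the trivial uniform-output algorithm still achieves TV-error $c'\eps/2 \ll \eps$ and no lower bound for TV-accuracy $\eps$ can possibly hold against such a family. The fix must go the other way. Either (i) keep the family at bias $\eps$ and prove the lower bound for TV-accuracy $c_1\eps$ with $c_1 < 1/2$ a small absolute constant --- this is exactly the version your reduction step already handles, and it is what the paper would need to invoke anyway since its downstream application only uses the asymptotics --- or (ii) fix the target accuracy at $\eps$ and \emph{inflate} the family's bias to $C\eps$ for a sufficiently large absolute constant $C$ (say $C \ge 5$, so that the expected-Hamming bookkeeping with the $9/10$ success probability closes): the reduction then gives Hamming error at most $k/C$, the per-coordinate KL becomes $\Theta(C^2\eps^2/k)$, and Assouad yields $m = \Omega(k/(C^2\eps^2)) = \Omega(k/\eps^2)$ for fixed $C$. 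Either fix preserves the asymptotics; the one you wrote does not.
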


\begin{theorem}\label{thm:C-lower-bound}
    Let $C > 0$ and $\eps \in (0, 1)$ be sufficiently small. Let $\mathcal{A}$ be an algorithm that draws $m$ samples from a discrete distribution $P \in H_k$ and with probability at least $9/10$ outputs a distribution $\hat{P}$ (possibly not in $H_k$) such that 
    \[ \dtv\left(P, \hat{P} \right) \le C \cdot \min_i \dtv(P, h_i) + \eps. \]
    Then $m = \Omega(k/\eps^2) = \Omega(\log(|\mathcal{H}_k|)/\eps^2)$.
\end{theorem}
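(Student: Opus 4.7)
The plan is to reduce directly to \cref{thm:lb_learning_discrete} by observing that the multiplicative part of the guarantee collapses when $P$ itself lies in the hypothesis class. Specifically, if $P \in \mathcal{H}_k$, then $\min_i d_{TV}(P, h_i) = 0$, attained by $P$ itself. Consequently, the promise
\[
d_{TV}\!\left(P,\, \textstyle\sum_i \alpha_i h_i\right) \;\le\; C \cdot \min_i d_{TV}(P, h_i) \,+\, \eps
\]
simplifies to $d_{TV}(P, \sum_i \alpha_i h_i) \le \eps$ with probability at least $9/10$, independent of the value of $C$.

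Next I would note that $\hat{P} := \sum_i \alpha_i h_i$ is a bona fide probability distribution on $[k]$, being a convex combination of probability distributions, even though it need not itself lie in $\mathcal{H}_k$. Hence the algorithm $\mathcal{A}$ can be reinterpreted as an (improper) density estimator that, from $m$ samples of an unknown $P \in \mathcal{H}_k$, outputs $\hat{P}$ satisfying $d_{TV}(P, \hat{P}) \le \eps$ with probability at least $9/10$. This is precisely the hypothesis of \cref{thm:lb_learning_discrete}, which immediately yields $m = \Omega(k/\eps^2)$. Since $|\mathcal{H}_k| = 2^{k/2}$, this is equivalently $\Omega(\log|\mathcal{H}_k|/\eps^2)$, as claimed.

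There is no real obstacle here; the whole argument is a one-line reduction. The only point worth emphasizing is that \cref{thm:lb_learning_discrete} is stated for an arbitrary output distribution $\hat{P}$ (the standard folklore lower bound covers improper learning of discrete distributions), and does not require $\hat{P} \in \mathcal{H}_k$. This is exactly what permits convex mixtures $\sum_i \alpha_i h_i$ to qualify as valid outputs in the reduction.
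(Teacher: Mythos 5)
Your proposal is correct and follows exactly the paper's argument: exploit that $P \in \mathcal{H}_k$ forces $\min_i \dtv(P, h_i) = 0$, so the guarantee collapses to an additive $\eps$-approximation by the valid probability distribution $\hat{P} = \sum_i \alpha_i h_i$, and then invoke \cref{thm:lb_learning_discrete}. The remark about $\hat{P}$ being a proper distribution but possibly outside $\mathcal{H}_k$ is a slightly more explicit articulation of the same reduction the paper makes.
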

\begin{proof}
We note that $ \min_i \dtv(P, h_i) = 0$ since $P \in \mathcal{H}_k$. Thus,
\[ \dtv(P, \hat{P}) \le C \cdot \min_i \dtv(P, h_i) + \eps = \eps. \]
The lower bound now follows from \Cref{thm:lb_learning_discrete}.
\end{proof}

\section{Omitted Proofs}\label{sec:omitted_proofs}

\begin{proof}[Proof of \Cref{prop:semidist-underest} ]
$\semi{i}{j}$ is equal to the absolute difference in mass assigned by $H_j$ and $P$ on some subset of the domain. The total variation distance is defined to be the maximum over such quantities.
\end{proof}

\begin{proof}[Proof of \Cref{prop:semidist-approx}] The first statement can be derived from the triangle inequality. 
\begin{align*}
    \dtv(P, H_j)
    &\leq \dtv(P, H_i) + \dtv(H_j, H_i) 
    \tag{Triangle inequality} \\
    &= \dtv(P, H_i) + \abs{H_j(\Sij) - H_i(\Sij)} \\
    &\leq \dtv(P, H_i) + \abs{H_j(\Sij) - P(\Sij)} + \abs{H_i(\Sij) - P(\Sij)}
    \tag{Triangle inequality} \\
    &= \dtv(P, H_i) + \semi{i}{j} + \semi{j}{i}\,.
\end{align*}

The second statement is derived by setting $i$ equal to $i^*$, and noting: $\semi{j}{i^*} \leq \dtv(P, H_{i^*}) \leq \OPT\,.$
\end{proof}

\begin{proof}[Proof of \Cref{lem:sample-scheffe}]
Recall our computational model: we consider taking a single sample from $P$ or any distribution in $\HH$ as well as a single query to the pdf of any distribution in $\HH$ to take $O(1)$ time. 

First, assume for now that we know the value $H_j(\Sij)$. Let $X$ be a random variable for the number of sampled elements from $P$ which land in $\Sij$, so that $\hsemi{i}{j} = \abs{H_j(\Sij) - X/s}$. Note that $X$ can be calculated by querying the pdf of $H_i$ and $H_j$ on every sample from $P$ and checking the ordering of the returned values (see \Cref{def:scheffe_set}). Note that $\E{X} = s P(\Sij)$, so $\abs{\semi{i}{j} - \hsemi{i}{j}} \leq \abs{X - \E{X}}/s$.
The result follows from a standard Hoeffding bound on the deviation of the sum of i.i.d.\ indicator random variables.
By repeating this procedure for all pairs of distributions and reusing samples, each with a failure probability of $\delta/n^2$, a union bound guarantees an overall success probability of at least $1-\delta$.

Now we remove the assumption that we know $H_j(\Sij)$ by approximating it up to additive error $\eps$ in our computational model (where note that sampling from any distribution in the known set $\HH$ takes $O(1)$ time). Indeed, take $s' = O\left(\frac{ \log(n/\delta)}{\epsilon^2}\right)$ samples from $H_j$ (which takes $O(s')$ time). By querying the density functions of $H_j$ and $H_i$ (again taking $O(s')$ time), we obtain a random variable $X'$ satisfying $\E{X'} = s'H_j(\Sij)$. Again, the standard Hoeffding bound on the deviation of a sum of i.i.d. bounded random variables implies that $|X'/s' - H_j(\Sij)| \le \eps$ with probability $\delta/n^2$. Thus a union bound implies we can estimate all quadratically many $H_j(\Sij)$ values up to additive error $\eps$ with probability $1-\delta$ in total time $O(n^2 s')$. By the triangle inequality, substituting our approximation for $H_j(\Sij)$ only changes our estimate for $\hsemi{i}{j}$ by an additive $\eps$ factor, and so the claimed approximation bound follows by adjusting $\eps$ and $\delta$ values by constant factors. Furthermore, the time bound follows from our computational model where sampling from $P$ or any distribution in $\HH$ and querying the pdf of any distribution in $\HH$ all take $O(1)$ time. 
\end{proof}

\begin{proof}[Proof of \Cref{claim:two_dist_ev}]
Assume without loss of generality that $i^*=1$.
Then by \Cref{prop:semidist-underest} and \Cref{prop:semidist-approx},
\begin{align*}
    \E[]{\dtv(P, H)}
    &= \frac{\semi{1}{2}}{\semi{1}{2} + \semi{2}{1}} \cdot \dtv(P, H_1)
    + \frac{\semi{2}{1}}{\semi{1}{2} + \semi{2}{1}} \cdot \dtv(P, H_2) \\
    &\leq \frac{\semi{1}{2}}{\semi{1}{2} + \semi{2}{1}} \cdot \OPT
    + \frac{\semi{2}{1}}{\semi{1}{2} + \semi{2}{1}} \cdot \p{\OPT + \semi{2}{1}  + \semi{1}{2}} \\
    &= \OPT + \semi{2}{1}\leq 2\OPT. 
\end{align*}
\end{proof}

\end{document}